\newtheorem{theorem}{Theorem}
\newtheorem{lemma}{Lemma}[section]
\newtheorem{proposition}[lemma]{Proposition}
\newtheorem{corollary}{Corollary}
\newtheorem{definition}[lemma]{Definition}
\theoremstyle{definition}
\newtheorem{remark}[lemma]{Remark}
\numberwithin{equation}{section}
\newcommand{\beq}{\begin{equation}}
\newcommand{\eeq}{\end{equation}}
\newcommand{\be}{\begin{equation*}}
\newcommand{\ee}{\end{equation*}}
\newcommand{\n}{\noindent}
\newcommand{\RE}{\mathbb R}
\newcommand{\erre}{\mathbb R}
\newcommand{\CO}{\mathbb C}
\newcommand{\NA}{\mathbb N}
\newcommand{\GG}{\mathcal{G}}
\newcommand{\supp}{\operatorname{supp}\,}
\newcommand{\Ran}{\operatorname{Ran}\,}
\newcommand{\lf}{\left}
\newcommand{\ri}{\right}
\newcommand{\ve}{\varepsilon}
\newcommand{\al}{\alpha}
\newcommand{\ga}{\gamma}
\newcommand{\la}{\lambda}
\newcommand{\de}{\delta}
\newcommand{\De}{\Delta}
\newcommand{\ci}{\mathbb{C}}
\newcommand{\ome}{\omega}
\DeclareMathOperator{\sech}{sech}
\DeclareMathOperator{\arctanh}{arctanh}
\providecommand{\ove}[1]{\overline{#1}}
\renewcommand{\Re}{\operatorname{Re}\,}
\renewcommand{\leqslant}{\leq}
\renewcommand{\geqslant}{\geq}
\newcommand{\x}{\underline{x}}
\newcommand{\y}{\underline{y}}
\newcommand{\f}{\frac}
\newcommand{\EE}{\mathcal E}
\newcommand{\VV}{V}
\newcommand{\WW}{W}
\newcommand{\ZZ}{Z}
\newcommand{\omestar}{\tilde{\ome}}
\title[]{Constrained energy minimization and orbital stability for the NLS equation on a star graph}
\author[]{Riccardo Adami}
\address{Dipartimento di Scienze Matematiche, Politecnico di Torino,  C.so Duca degli Abruzzi 24, 10129 Torino, Italy}
\email{riccardo.adami@unimib.it}
\author[]{Claudio Cacciapuoti}
\address{Hausdorff Center for Mathematics,
 Institut f\"ur Angewandte Mathematik, Endenicher Allee, 60, 53115 Bonn, Germany}
\email{cacciapuoti@him.uni-bonn.de}
\author[]{Domenico Finco}
\address{Facolt\`a di Ingegneria, Universit\`a Telematica
Internazionale Uninettuno,  Corso Vittorio Emanuele II 39, 00186 Roma,
Italy}
\email{d.finco@uninettunouniversity.net}
\author[]{Diego Noja}
\address{Dipartimento di Matematica e Applicazioni, Universit\`a
 di Milano Bicocca,  via R. Cozzi, 53, 20125 Milano, Italy}
\email{diego.noja@unimib.it} 
\date{}
\begin{document}


\begin{abstract}
We consider a nonlinear Schr\"odinger equation with focusing nonlinearity of power type on a star graph ${\mathcal G}$, written as $
i \partial_t \Psi (t)  =   H \Psi (t) - |\Psi (t)|^{2\mu}\Psi (t)$ , 
where $H$ is  the selfadjoint operator which defines  the linear
dynamics on the graph with an attractive $\delta$ interaction, with
strength $\alpha < 0$, at the vertex. The mass and energy functionals are conserved by the flow. 
We show that for $0<\mu<2$  the energy at fixed mass is bounded from
below and that for every mass $m$ below a critical mass $m^*$ it
attains its minimum value at a certain $\hat \Psi_m \in H^1(\GG) $,
while for $m>m^*$ there is no minimum. Moreover, the set of minimizers
has the structure ${\mathcal M}=\{e^{i\theta}\hat \Psi_m\ , \theta\in
\erre \}$. Correspondingly, for every $m<m^*$ there exists a unique
$\omega=\omega(m)$ such that the standing wave
$\hat\Psi_{\omega}e^{i\omega t} $ is orbitally stable. To prove the
above results we adapt the concentration-compactness method to the
case of a star graph. This is non trivial due to the lack of
translational symmetry of the set supporting the dynamics, i.e. the
graph. This affects in an essential way the proof and the statement of
concentration-compactness lemma and its application to minimization of
constrained energy. The existence of a mass threshold comes from the instability of
the system in the free (or Kirchhoff's) case, that in our setting
corresponds to $\al=0$. 
\end{abstract}

\maketitle

\section{Introduction}
In the present paper we study the minimization of a constrained energy functional defined on a star graph and its application to existence and stability of standing waves for nonlinear
Schr\"odinger propagation with an attractive interaction at the vertex of the graph. 

\n
We recall that in our setting a star graph $\GG$ is the union of $N$ half-lines (\emph{edges}) connected at a  single vertex;  the Hilbert space on $\GG$ is $L^2(\GG)=\bigoplus_{j=1}^N L^2(\RE^+)$. We denote the elements of $L^2(\GG)$ by capital Greek letters, while functions in $L^2(\RE^+)$ are  denoted by lowercase Greek letters. The elements of $L^2(\GG)$ can be represented as column vectors of functions in $L^2(\RE^+)$, i.e.
\[
\Psi =
\begin{pmatrix}
\psi_1 \\ \vdots \\ \psi_N
\end{pmatrix}.
\]
We shall also use the notation  $\psi_i(x)\equiv (\Psi)_i(x)\equiv \Psi(x,i)$. Notice that the set
$\GG$ has not to be thought of as embedded in $\erre^n$, so it has no
geometric properties such as angles between edges.  
When an element of $L^2 (\GG)$ evolves in time, to highlight the
dependence on the time parameter $t$, we use both the notation
$\Psi(t)$ and the one with  
subscript $t$, for instance $\Psi_t$.

\n
In order to define a selfadjoint operator $H_\GG$ on $\GG$ one has to
introduce operators acting on the edges and to prescribe a suitable boundary condition
at the vertex that defines  ${\mathcal D} (H_{\GG})$,  
 see, e.g., \cite{[KS99]}.  A metric graph equipped with a dynamics
 associated to a Hamiltonian of the form of $H_\GG$ is  called {\it
   quantum graph}. On a quantum graph one can consider the dynamics
 defined by the abstract Schr\"odinger equation given by  
\[
i \partial_t \Psi (t) \ = \  H_{\mathcal G} \Psi (t)\ ,\ \ \Psi\in {\mathcal D} (H_{\GG}).
\]
From a formal point of view the previous equation is equivalent to a system of $N$ Schr\"odinger equations on the half-line coupled through the boundary condition at the vertex. 

\n
Of course the graph could be more general than a star graph, with several (possibly infinite) vertices, bounded edges connecting them (sometimes called {\it bonds} as suggested from chemistry applications) 
and unbounded edges, as in the present case of star graphs or in the interesting case of trees with the last generation of edges of infinite length.

\n
The analysis of linear dispersive equations  on graphs, in particular of the Schr\"odinger equation, is a quite developed subject with a wide range of applications from chemistry and nanotechnology to quantum chaos. We refer to \cite{[BCFK06],  [BEH],  EKKST08, [Kuc04], [Kuc05]} for further information and bibliography. 

\n On the contrary, the study of nonlinear equations on networks is in general a subject at
its beginnings. Some results concerning nonlinear PDE's on
graphs are given in \cite{[CMu]} for reaction-diffusion
equations (see references therein) and in the recent paper \cite{[CMS]} for the Hamilton-Jacobi
equation (with reference to previous work on fully nonlinear
equations). As regards semilinear dispersive equations we mention the
preliminary work on NLS in the cubic 
case in \cite{[CTH]}, and for a different nonlinear dispersive
equation related to long water waves, the BBM equation, the results given
in \cite{[BC]}.\par\noindent
One way to define a nonlinear Schr\"odinger dynamics (NLS) on a graph,
mimicking the linear case, consists in prescribing the NLS on every
single edge and requiring its strong solution  
to satisfy a boundary condition at the vertex at every time,
i.e. imposing the solution to remain at any time in the domain of the
generator of the linear dynamics. In strong formulation, one obtains the
equation 
\[
i \partial_t \Psi (t) \ = \  H_{\mathcal G} \Psi (t) + G(\Psi
(t)),\ \ \Psi (t)\in {\mathcal D} (H_{\mathcal G})\ ,
\]
where the nonlinearity $G=(G_1,\cdots, G_N):\CO^N\rightarrow \CO^N$
acts ``componentwise" as $G_i(\zeta)=g(|\zeta_i|)\zeta_i$ for a
suitable $g:\erre^+\rightarrow \erre$ and
$\zeta=(\zeta_1,\cdots,\zeta_N)\in \CO^N$. More general nonlinearities
of nonlocal type which couple different edges are possible at a
mathematical level, but they seem to be less interesting from the physical point of view. 

\n
The analysis of nonlinear propagation on graphs, as in the more standard case of $\erre^n$, proceeds along two main lines of development: the study of dispersive and scattering behavior (see \cite{[ACFN1]} and reference therein; see also \cite{[BI11]} for relevant work about dispersion on trees) and the study of bound states (see \cite{[ACFN2], [ACFN4], [ACFN3]} and reference therein). In this paper we concentrate on this last item. 
We shall focus on a concrete model and not on a general class specifying the nonlinearity and the interaction at the vertex of the star graph, which means to give the function $g$ and the selfadjoint operator $H_{\GG}$. 
Concerning the first, we treat a power nonlinearity of focusing type,
i.e. $g(z)=-|z|^{2\mu},\ \mu >0 \ .$  This choice has two main
reasons. It corresponds to the most usual models considered in the
physical applications, and moreover it allows to have some explicit
and quantitative estimates needed in the proofs of our results which
could be difficult to obtain for general nonlinearities. 

\n
To motivate the choice of the linear part $H_{\GG}$ we begin to remark
that the meaning of the boundary condition is to describe suitable
local interactions occurring between different components of the
wavefunction on different edges. For example, one could be interested
in describing the effect of the presence of a localized potential well
at the vertex. This corresponds in the linear case to a confining
potential admitting one or more bound states. In the case of a NLS on
the line or more generally on $\erre^n$, the presence of a negative
potential entails the existence of trapped solitons sitting around the
minima of the potential well. These trapped solitons, of the form
$\Psi(t)=\Psi_{\omega} e^{i\omega t}$ where $\omega$ belongs to some
subset of the real line, are usually called standing waves, and are
studied for example in \cite{[GS], [GSS], [GSS2], [GNT], [W1]}, to which
we refer for information and further references concerning their
existence, variational properties, orbital and  asymptotic
stability. Here we address the analogous problem in the context of
star graphs. To fix the model we consider the so called $\delta$
vertex, which is one of the most common in the applications to quantum
graphs. 

\n
We introduce preliminarily some notations and define several functional spaces on the graph. 

\n
The norm of $L^2$-functions on $\GG$ is naturally defined by
$$
\| \Psi \|^2_{L^2 (\GG)} : = \sum_{j=1}^N \| \psi_j \|^2_{L^2
  (\erre^+)} .
$$
From now on for the $L^2$-norm on the graph we
drop the subscript and simply write $\| \cdot \|$. Accordingly, we
denote by $(\cdot,\cdot)$ the scalar product in $L^2(\GG)$. 

\n
Analogously, given $1 \leqslant r \leqslant \infty$,
we define the space $L^r (\GG)$ as the set of functions
on the graph whose components are elements of the space $L^r (\erre^+)$,
and the norm is correspondingly defined by
\begin{equation*}
\big\|\Psi\big\|_{r}^{r}
=\sum_{j=1}^N\|\psi_j\|_{L^r(\RE^+)}^{r},
\ 1 \leqslant r < \infty, \qquad 
\big\|\Psi\big\|_{\infty}= \max_{1 \leqslant j \leqslant N}\|\psi_j\|_{L^\infty(\RE^+)} .
\end{equation*} 

\n
Besides, we need to introduce the spaces
\[
H^1(\GG) \equiv  \bigoplus_{j=1}^N   H^1(\erre^+)  \qquad
H^2(\GG) \equiv \bigoplus_{j=1}^N   H^2(\erre^+) 
, 
\]
equipped with the norms
\be \label{sobbo}
\| \Psi \|_{H^1}^2 \ = \ \sum_{i=1}^N \| \psi_i \|_{H^1(\erre^+)}^2,
\qquad
\| \Psi \|_{H^2}^2 \ = \ \sum_{i=1}^N \| \psi_i \|_{H^2(\erre^+)}^2.
\ee
Notice that there is a slight abuse in the denominations ${H^i}(\GG)$ for the above spaces, because their elements have no Sobolev regularity at the vertex.\par\noindent However they have boundary values on each edge, and we denote without comment the notation $\psi(0^+)=\psi(0)$ for every $\psi\in H^i(\erre^+)\ , i =1,2\ .$ 
In the following, whenever a functional norm refers to a function defined on the graph,
we omit the symbol $\GG$.

\n
We denote by $H$ the Hamiltonian with $\delta$ coupling in the vertex of strength $\alpha$, where $\al\in\RE$. It is defined as the operator in $L^2$ with  domain 
\[
{\mathcal D} (H) :=  \left\{\Psi \in H^2 \text{ s.t. }
\psi_1 (0) =  \ldots =\psi_N (0), \,  \sum_{k=1}^N \psi_k '(0)= \alpha \psi_1 (0)
\right\}.
\]
and action 
\[
 H\Psi = \begin{pmatrix}
          -\psi_1''\\
	  \vdots\\
	  -\psi_N'' 
         \end{pmatrix}.
\]
In the present paper we will consider only the case of attractive $\delta$ interaction, i.e. $\alpha < 0$. Sometimes to make explicit the fact that $\al<0$ we set $\al=-|\al|$.

\n
It is well known that the operator $H$ is a selfadjoint operator on $L^2$, see, e.g., \cite{[KS99]}. Moreover for $\alpha<0$ the operator $H$ admits a single bound state associated to the eigenvalue $-{\alpha^2}/{N^2}$, in this sense the $\delta$ interaction can be considered as a singular potential well placed at the vertex.

\n
The definition of $H$ and its scope is analogous to the case of the attractive $\delta$ potential on the line, widely used in theoretical and applied physics to describe situations of strongly localized interactions such as trapping defects in a elsewhere homogeneous medium. This 
is justified in view of the fact that the operator $H$ is a norm resolvent limit of  regular Schr\"odinger operators on the star graph with regular potentials $V_\epsilon$ scaling as a $\delta$-like sequence picked at the vertex (see, e.g., \cite{[Exn]})

\n
This ends the construction and mathematical justification of the model, which is finally described by the equation
\beq
\label{schrod}
i \partial_t \Psi (t) \ = \  H \Psi (t) - |\Psi (t)|^{2\mu}\Psi (t),\ \ \Psi\in {\mathcal D} (H).
\eeq
\par\noindent
From the point of view of physical applications the problem described by the above equation is interesting in relation to the so called Y-junctions or beam splitters in the study of Bose-Einstein condensates (see \cite{[TOD]}). Other problems related to nonlinear Schr\"odinger propagation on graphs are treated in \cite{[GSD], [Miro], [Sob]}, and more generally there is a growing interest in  nonlinear propagation on networks, both in nonlinear optics and in Bose condensates, which are the main fields of application of the  NLS. 

\n
From the mathematical point of view,  several results on the  nonlinear model \eqref{schrod}  were given in a series of papers (\cite{[ACFN1], [ACFN2], [ACFN4], [ACFN3]}). In particular we refer to the work \cite{[ACFN3]} which is a companion to the present one, where a variational study of the standing waves and their orbital stability is performed according to the Grillakis-Shatah-Strauss method (\cite{[GSS],[GSS2]}). While in \cite{[ACFN3]} the interesting functional is the action, minimized on the Nehari manifold, here we minimize the energy at constant norm following the Cazenave-Lions approach to orbital stability, see \cite{[CL]} (see also  \cite{Caz03, Caz06}).  
As it is shown elsewhere (see \cite{[ACFN3]}) the dynamical system \eqref{schrod} has two conserved quantities, the mass 
\begin{equation}
\label{e:mass}
M[\Psi]=\|\Psi\|^2 
\end{equation}
and the energy $E$, which in our case reads
\begin{equation}
\label{e:nrg}
E[\Psi]=\frac{1}{2}\| \Psi' \|^2 - \frac{1}{2\mu+2} \| \Psi \|_{2 \mu+ 2}^{2\mu+2} +
\frac{\alpha}{2}|\psi_1(0)|^2 .
\end{equation}
The energy domain $\mathcal{E}$ coincides with the  domain of the quadratic form associated to the linear generator $H$, consisting of $H^1$ functions on every edge with continuity at the vertex
\be
 {\mathcal E} := \left\{\Psi\in H^1 \text{ s.t.}\
\psi_1(0)=\cdots=\psi_N(0)\right\} .
\ee
On this domain we show that the energy $E[\Psi]$ is bounded from below if the mass $\|\Psi\|^2$ is fixed.

\n
We are then interested in characterizing the ground state of this system. By ground state we mean the minimizer $\hat\Psi$ (if existing) of the energy $E$ in $\mathcal{E}$ constrained to the manifold of the states with fixed mass $m$.

\n
As noticed before, the classical method which allows to treat this
kind of problems is  the concentration-compactness principle of
P.-L. Lions with its application to the NLS given in
\cite{[CL]}). A study of ground states for NLS on the line with several kind of defects (including the $\delta$ potential) making use of a concentration compactness  is given in \cite {[ANV12]}. Nevertheless, the present situation needs some non
trivial modifications of the method, due to the fact that a graph, and
in particular a star graph, does not enjoy translational symmetry, nor
other kinds of symmetry needed to apply concentration-compactness in
its direct form (see \cite{[TF]} for a very general presentation and
applications of the method). We will adapt the
concentration-compactness lemma (as given in \cite[Ch. 1 and 8]{Caz03}
and also in \cite{Caz06}, which we will take as reference formulation in
the course of our treatment) modifying the statement and the proof to
draw our main conclusions on the minimum problem we are interested
in. For more extended discussion on the novelties of this approach, we refer to Section 3. Using the concentration-compactness lemma we prove the following result which states the existence of the
solution  of the  constrained minimization problem for small enough
mass. 
\begin{theorem}
\label{t:prob1}
Let $m^\ast$ be defined by
\beq
\label{mstar}
m^\ast=2  \f{(\mu+1)^{1/\mu} }{\mu} \, \lf( \f{|\al|}{N}\ri)^{\f{2-\mu}{\mu}  } \int_0^1 (1-t^2)^{ \f{1}{\mu}-1}\, dt.
\eeq
Let $\al <0 $ and assume $m\leq m^\ast$ if $0<\mu < 2$ and  $m < \min\{{m^\ast , \frac{\pi \sqrt{3} N}{4} }\}$ if $\mu=2$ and set
\[
-\nu = \inf \{E[\Psi] \textrm{ s.t. } \Psi\in\mathcal{E}\,,\; M[\Psi] = m\}\,.
\]
Then $0<\nu<\infty$ and there exists $\hat\Psi $ such that $M[\hat\Psi] = m$ and $E[\hat\Psi]=-\nu$.
\end{theorem}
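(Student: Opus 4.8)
The plan is to realize $\hat\Psi$ as the strong limit of a minimizing sequence, via a form of the concentration--compactness principle tailored to the geometry of $\GG$. First I would check that $E$ is bounded below on the constraint $M[\Psi]=m$, so that $\nu<\infty$. Applying the one--dimensional Gagliardo--Nirenberg inequality edge by edge gives $\|\Psi\|_{2\mu+2}^{2\mu+2}\le C\,\|\Psi'\|^{\mu}\,m^{(\mu+2)/2}$, while the boundary term is estimated by the trace inequality $|\psi_1(0)|^2\le C\,m^{1/2}\|\Psi'\|$. Substituting into $E$ yields
\[
E[\Psi]\ \ge\ \tfrac12\|\Psi'\|^2-C\,m^{(\mu+2)/2}\|\Psi'\|^{\mu}-\tfrac{|\al|}{2}\,C\,m^{1/2}\|\Psi'\| ,
\]
and for $0<\mu<2$ the quadratic term dominates as $\|\Psi'\|\to\infty$ (the critical case $\mu=2$ is exactly where one must require the smallness of $m$ recorded in the statement). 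Consequently $E$ is coercive on the constraint, so any minimizing sequence $\{\Psi_n\}$ is bounded in $H^1(\GG)$ and, after extraction, $\Psi_n\rightharpoonup\hat\Psi$ weakly.

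Next I would verify $\nu>0$, i.e.\ that the infimum is strictly negative. Choosing any $\Phi\in\mathcal{E}$ with $M[\Phi]=m$ and $\phi_1(0)\neq0$ and using the mass--preserving dilation $\Psi_\la=\la^{1/2}\Phi(\la\,\cdot)$, one finds
\[
E[\Psi_\la]=\tfrac12\la^2\|\Phi'\|^2-\tfrac{\la^{\mu}}{2\mu+2}\|\Phi\|_{2\mu+2}^{2\mu+2}+\tfrac{\al}{2}\,\la\,|\phi_1(0)|^2 .
\]
For small $\la$ the leading contribution is negative in every regime $0<\mu\le2$; in particular the attractive coupling $\al<0$ always supplies a negative term of order $\la$. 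Hence $E[\Psi_\la]<0$ for $\la$ small, giving $0<\nu<\infty$.

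The core of the argument is to promote weak convergence to strong $L^2(\GG)$ convergence, and this is the step where the lack of translation invariance of $\GG$ forces a genuine modification of Lions' lemma. Running the adapted dichotomy alternative on the mass densities $|\Psi_n|^2$, I would first exclude \emph{vanishing}: it would force $\|\Psi_n\|_{2\mu+2}\to0$ and, through Gagliardo--Nirenberg, $|\psi_{1,n}(0)|\to0$, whence $\liminf_nE[\Psi_n]\ge0$, contradicting $\nu>0$. The delicate point is to exclude \emph{dichotomy}. On a star graph the only mechanism for loss of mass is escape to infinity along an edge, and a profile sitting far from the vertex no longer feels the $\delta$ interaction, so it is governed by the \emph{free} NLS on the whole line $\erre$. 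Dichotomy is therefore ruled out by the strict binding inequality
\[
I_\GG(m)<I_\GG(m-\theta)+I_{\erre}(\theta),\qquad 0<\theta\le m,
\]
where $I_\GG(m)=-\nu$ and $I_{\erre}(\theta)=-c_0\,\theta^{(2+\mu)/(2-\mu)}$ is the ground--state level of the translation--invariant problem on $\erre$. This is the main obstacle: one must show that, at the prescribed mass, binding to the attractive defect is strictly cheaper than letting any fraction of the mass run off to infinity. It is precisely this comparison that selects the threshold, with $m^\ast$ identified as the largest mass for which the endpoint inequality $I_\GG(m)<I_{\erre}(m)$ (total escape) holds; the intermediate splittings then follow from the homogeneity of $I_{\erre}$ together with the subadditivity of the graph level. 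The explicit value of $m^\ast$ is, consistently, the $L^2$ mass of the line soliton whose frequency equals the linear binding frequency $\al^2/N^2$, which degenerates to $0$ as $\al\to0$, reflecting the instability of the Kirchhoff case.

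Once vanishing and dichotomy are ruled out, the minimizing sequence is tight near the vertex: $\Psi_n\to\hat\Psi$ strongly in $L^2(\GG)$, so $M[\hat\Psi]=m$, and boundedness in $H^1$ upgrades this to strong convergence in $L^{2\mu+2}(\GG)$ by interpolation. Since the boundary values pass to the limit, $\psi_{1,n}(0)\to\hat\psi_1(0)$ (local uniform convergence on compact parts of each edge), while $\|\cdot\|$ is weakly lower semicontinuous, I obtain $E[\hat\Psi]\le\liminf_nE[\Psi_n]=-\nu$. Together with $E[\hat\Psi]\ge-\nu$ this forces $E[\hat\Psi]=-\nu$, and a posteriori the convergence is strong in $H^1$. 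Hence $\hat\Psi$ attains the minimum, as claimed.
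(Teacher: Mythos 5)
Your overall architecture---coercivity via Gagliardo--Nirenberg plus the trace bound, strict negativity of the infimum via the mass-preserving dilation, exclusion of vanishing against that strict negativity---matches the paper's proof step for step, and your identification of $m^\ast$ as the mass of the line soliton whose frequency is the linear binding frequency $\al^2/N^2$ is exactly right. The genuine gap is that the two inequalities on which your argument pivots are asserted rather than proved. The intermediate strict binding inequality does \emph{not} ``follow from the homogeneity of $I_\erre$ together with the subadditivity of the graph level'': subadditivity only ever yields the non-strict inequality, and the strictness is precisely the point. The paper supplies it by a concrete device you would need to reproduce: for $0<\tau<m$ it rescales the dichotomy pieces $V_k,W_k$ to mass $m$ and uses the identity $E[\Psi]=\de^{-2}E[\de\Psi]+\frac{\de^{2\mu}-1}{2\mu+2}\|\Psi\|_{2\mu+2}^{2\mu+2}$ to get $\liminf_k\big(E[V_k]+E[W_k]\big)\geq-\nu+\frac{\theta-1}{2\mu+2}\liminf_k\|\Psi_{n_k}\|_{2\mu+2}^{2\mu+2}>-\nu$ with $\theta=\min\{(\tau/m)^{-\mu},(1-\tau/m)^{-\mu}\}>1$, the strict positivity of the last $\liminf$ coming from the floor bound \eqref{e:floor}. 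Note that this excludes dichotomy for \emph{every} mass, with no restriction; in the paper the threshold enters only in the runaway subcase of $\tau=m$, whereas your scheme merges the two and thereby hides where $m^\ast$ is actually used.

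More seriously, you treat the endpoint inequality $I_\GG(m)<I_\erre(m)$ as \emph{defining} $m^\ast$ (``the largest mass for which the endpoint inequality holds''), but the theorem asserts the explicit value \eqref{mstar}, so this inequality must be \emph{proved} for all $m\leq m^\ast$---and that is where the real work of the paper lies. Two computations, both absent from your proposal, are needed: (i) the lower bound $\inf\{E^0[\Psi]\,:\,\Psi\in\EE,\ M[\Psi]=m\}\geq-\frac12\frac{2-\mu}{2+\mu}\,\ome_\erre\,m$ for the Kirchhoff functional, obtained via the graph P\'olya--Szeg\H{o} inequality (where the constant is $N/2$ rather than $1$, so a further rescaling with $\frac{4}{N^2}\la^2=\la^\mu$ is required to reduce to a half-line problem); and (ii) the explicit evaluation \eqref{energiacode} of $E$ on the $N$-tail trial state $\Psi_{\ome_0,0}$ with $M[\Psi_{\ome_0,0}]=m$, after which the strict comparison reduces to $\ome_0>\al^2/N^2>\ome_\erre$, the second inequality being equivalent to $m<m^\ast$. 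Without (i) and (ii) your argument cannot produce the stated threshold. Two smaller points: your claim that the escape level is the line level $I_\erre(\theta)$, rather than the strictly lower free half-line level, needs justification (the cut-off escaping pieces vanish at the inner boundary of their support; equivalently, the paper argues that along a runaway sequence $\psi_{1,n}(0)\to0$ forces $E[\Psi_n]-E^0[\Psi_n]\to0$); and for $\mu=2$ the coercivity step itself requires the smallness $m<\frac{\pi\sqrt3 N}{4}$, which you flag correctly but do not carry through.
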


\n
By the phase invariance of equation \eqref{schrod} one has  that the family of ground states is given by
\[
{\mathcal M}=\{e^{i\theta}\hat \Psi\ , \theta\in \erre \}\ .
\]
The explicit expression of $\hat \Psi$ can be given. To this end, let
us recall several results from \cite{[ACFN4]} and \cite{[ACFN3]}. For
any $\ome>0$, we label the soliton profile on the real line as 
\beq
\label{soliton}
\phi_\ome (x) = [ (\mu + 1) \ome]^{\frac{1}{2\mu}} \sech^{\frac{1}{\mu}} (\mu \sqrt{\ome} x).
\eeq
For any $\al<0$, $j=0, ...,\left[\frac{N-1}{2}\right]$ ($[x]$ denoting
the integer part of $x$) and $\ome > \frac{\alpha^2}{(N-2j)^2}$ we
define $\Psi_{\ome,j}$ as   
\beq
\lf(\Psi_{\omega,j}\ri)(x,i) = 
\begin{cases}
\phi_\ome(x-a_j ) & i=1,\ldots ,j \\
\phi_\ome(x+a_j) & i=j+1, \ldots, N
\end{cases}
\label{states1}
\eeq 
with 
\beq 
a_j = \f{1}{\mu \sqrt{\ome}} \arctanh
\lf(\f{|\al|}{(N-2j)\sqrt{\ome}} \ri) .
\label{states2} 
\eeq 
The functions  $\Psi_{\ome,j}\in \mathcal{D}(H)$  and are solutions of the stationary equation
\begin{equation}
\label{stationary}
H\Psi_{\omega} - |\Psi_{\omega}|^{2\mu} \Psi_{\omega} = -\ome \Psi_{\omega}.
\end{equation}
We say that $\Psi_{\ome,j}$ has a ``bump'' (resp. a ``tail'') on the edge $i$ if $\lf(\Psi_{\omega,j}\ri)(x,i)$ is of the form $\phi_\ome(x-a_j )$ (resp. $\phi_\ome(x+a_j )$).
The index $j$ in $\Psi_{\ome,j}$ denotes the number of bumps of the state $\Psi_{\ome,j}$. For this reason,  we refer to the stationary state $\Psi_{\ome,0}$ as the ``$N$-tail state''.  
 We remark that the $N$-tail state is the only symmetric (i.e. invariant under permutation of the edges)  solution of equation \eqref{stationary}. For $j\geq 1$ there are $\binom{N}{j}$ distinct solutions obtained by formulae \eqref{states1} and \eqref{states2} by positioning the bumps on the edges in all the possible ways.  For
instance, if $N=3$ then there are two stationary states, a three-tail state
and a two-tail/one-bump state. They are shown in figure \ref{figu1}. 

\mbox{}

\begin{figure}[h!] 
\centering
\includegraphics[scale=0.50]{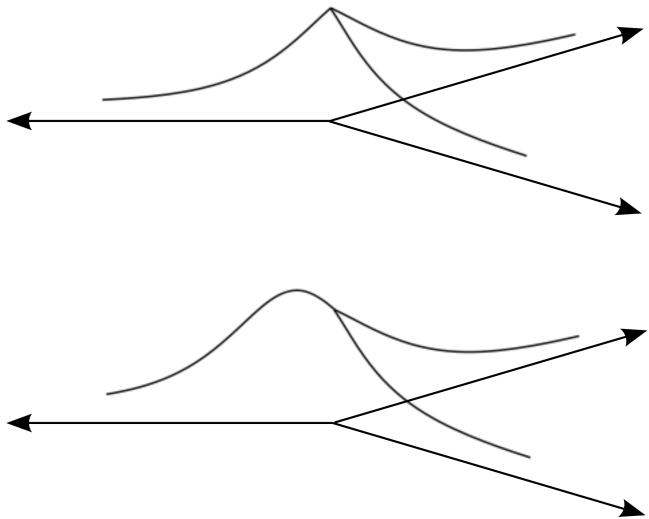}
\caption{Stationary states for $N=3$, $\alpha<0$\ .}
\label{figu1}
\end{figure}
\n
 
\noindent

\begin{theorem}
\label{t:min}
Let $\al <0 $ and assume $m\leq m^\ast$ if $0<\mu < 2$ and  $m < \min\{m^\ast , \frac{\pi \sqrt{3} N}{4} \}$ if $\mu=2$; 
then the minimizer $\hat \Psi$ coincides with the $N$-tail state
defined by $\Psi_{\omega_0,0} $ where $\ome_0$ is chosen such that
$M[\Psi_{\omega_0,0}]=m$. 
\end{theorem}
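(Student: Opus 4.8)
The plan is to combine the existence result of Theorem~\ref{t:prob1} with the classification of the nonnegative stationary states \eqref{states1}--\eqref{states2} and an energy comparison among them. First I would show that any minimizer $\hat\Psi$ produced by Theorem~\ref{t:prob1} is, up to a global phase, a nonnegative solution of the stationary equation \eqref{stationary}. Since the constraint $M[\Psi]=m$ is smooth and regular on $\EE$, the minimizer satisfies the Euler--Lagrange equation $H\hat\Psi-|\hat\Psi|^{2\mu}\hat\Psi=-\omega\hat\Psi$ for a Lagrange multiplier $\omega\in\erre$; elliptic regularity on each edge together with the vertex conditions gives $\hat\Psi\in\mathcal D(H)$, and the fact that $\hat\Psi\in H^1$ decays at infinity forces $\omega>0$. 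Replacing $\hat\Psi$ by $|\hat\Psi|$ leaves $M$, the norm $\|\hat\Psi\|_{2\mu+2}$ and the vertex modulus $|\psi_1(0)|$ unchanged while not increasing $\|\hat\Psi'\|^2$ by the diamagnetic inequality; since $-\nu$ is the minimum, equality must hold, which pins the phase to a constant. Hence $\hat\Psi$ may be taken real and nonnegative, with nonzero vertex value (a vanishing vertex value would forfeit the favorable attractive term $\tfrac{\alpha}{2}|\psi_1(0)|^2$).

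Next I would classify such solutions edge by edge. On each edge the nonnegative decaying solution of $-\psi''+\omega\psi=\psi^{2\mu+1}$ is a translate $\phi_\omega(\cdot-b_i)$ of the soliton \eqref{soliton}; continuity at the vertex and the evenness of $\phi_\omega$ force $b_i\in\{-a,a\}$ for a common $a\ge 0$, and the vertex condition $\sum_k\psi_k'(0)=\alpha\psi_1(0)$ with $\alpha<0$ forces $a=a_j$ and $\omega>\alpha^2/(N-2j)^2$, where $j$ counts the edges carrying a bump. Thus $\hat\Psi$ coincides, up to a permutation of the edges and a phase, with one of the states $\Psi_{\omega,j}$, $j=0,\dots,[(N-1)/2]$, and it remains only to determine which $j$ minimizes the energy at fixed mass $m$.

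The core of the argument is this last comparison, and I expect it to be the main obstacle. Testing \eqref{stationary} against $\bar\Psi$ gives the Nehari identity $\|\Psi'\|^2+\alpha|\psi_1(0)|^2-\|\Psi\|_{2\mu+2}^{2\mu+2}=-\omega m$, while testing against $x\psi'$ and integrating by parts on each edge (all vertex boundary terms carry a factor $x$ and vanish) gives the Pohozaev identity $\tfrac12\|\Psi'\|^2+\tfrac{1}{2\mu+2}\|\Psi\|_{2\mu+2}^{2\mu+2}=\tfrac12\omega m$. Eliminating $\|\Psi'\|^2$ and $\|\Psi\|_{2\mu+2}^{2\mu+2}$ reduces the energy of every stationary state to the closed form
\begin{equation*}
E[\Psi_{\omega,j}]=\frac{\mu-2}{2(\mu+2)}\,\omega\, m+\frac{\mu\alpha}{2(\mu+2)}\,|\psi_1(0)|^2,\qquad |\psi_1(0)|^2=\big[(\mu+1)\omega\big]^{1/\mu}\Big(1-\frac{\alpha^2}{(N-2j)^2\omega}\Big)^{1/\mu}.
\end{equation*}
Imposing $M[\Psi_{\omega,j}]=m$ determines $\omega=\omega_j(m)$ implicitly, and the task becomes to show that, at fixed $m$, the energy is strictly minimized at $j=0$, i.e. by the $N$-tail state $\Psi_{\omega_0,0}$. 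The hard part is precisely this monotonicity in the number of bumps $j$, since $\omega_j(m)$, $a_j$ and $|\psi_1(0)|^2$ all depend on $j$ through the transcendental relation \eqref{states2}; the explicit soliton \eqref{soliton} makes the partial mass and $L^{2\mu+2}$ integrals computable, and I expect the threshold hypotheses $m\le m^\ast$ (and $m<\pi\sqrt3N/4$ when $\mu=2$, the $L^2$-critical case where the first term above drops out and one is reduced to maximizing $|\psi_1(0)|^2$) to be exactly what guarantees that $\omega_0(m)$ exists and that the $N$-tail state is the global minimizer. Uniqueness of $\omega_0$ with $M[\Psi_{\omega_0,0}]=m$ then follows from the strict monotonicity of $\omega\mapsto M[\Psi_{\omega,0}]$.
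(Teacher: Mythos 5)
Your reduction is sound and is essentially the paper's: the minimizer is a stationary state, so it suffices to show that, at fixed mass, the $N$-tail state has the least energy among the family $\Psi_{\ome,j}$, and your Nehari--Pohozaev elimination yields exactly the paper's closed form \eqref{e:Ej} (note $[(\mu+1)\ome]^{1/\mu}\lf(1-\al^2/((N-2j)^2\ome)\ri)^{1/\mu}=(\mu+1)^{1/\mu}\lf(\ome-\al^2/(N-2j)^2\ri)^{1/\mu}$, so your expression coincides with it). Deriving this formula from the two identities, rather than from the explicit soliton integrals \eqref{formula1}--\eqref{formula3} as the paper does, is a legitimate and slightly cleaner route, and your edge-by-edge classification outline matches what the paper imports from its companion works.

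However, there is a genuine gap: the statement you yourself call ``the hard part'' --- that at fixed mass $E[\Psi_{\ome_j,j}]$ is strictly increasing in $j$ --- is precisely the content of the paper's Section \ref{sec5} (Lem. \ref{l:freq}, Lem. \ref{l:nrgord}, Lem. \ref{l:freqcrit} and the critical energy-ordering lemma), and your proposal only announces it (``I expect\dots'') without an argument. The closed form alone does not settle the sign. One first needs the frequency ordering, which changes character at $\mu=1$: solving $m=M_j(\ome_j)$ gives $\ome_{j+1}<\ome_j$ for $0<\mu<1$, a $j$-independent $\ome^*$ for $\mu=1$, but $\ome_{j+1}>\ome_j$ for $1<\mu<2$. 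For $0<\mu\leq 1$ both terms in \eqref{e:Ej} then favor smaller $j$ and the comparison is immediate; but for $1<\mu<2$ the term $-\frac{2-\mu}{2(\mu+2)}\,\ome_j\, m$ becomes \emph{more} negative as $j$ increases, so the two terms compete and no termwise estimate works. The paper resolves this by the identity $\frac{d}{dm}E[\Psi_{\ome_j,j}]=-\ome_j/2$, which shows $\Delta_j=E[\Psi_{\ome_{j+1},j+1}]-E[\Psi_{\ome_j,j}]$ is decreasing in $m$, and then by checking $\Delta_j>0$ as $m\to\infty$ through a three-term expansion of $\ome_j$ in which the coefficients $a_j,b_j$ turn out to be $j$-independent and the $j$-dependence first enters at $c_j$ with the favorable sign; the critical case requires a separate explicit computation resting on $\sin y-y\cos y>0$ and $\sin^2 y-1+y/\tan y>0$. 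None of this is in your proposal. Relatedly, you misplace the role of the thresholds: $m\leq m^\ast$ is what Theorem \ref{t:prob1} needs for the minimizer to exist at all (excluding runaway sequences), and $m<\frac{\pi\sqrt 3 N}{4}$ in the critical case ensures that $\ome_0$ exists, since $\Ran M_0=[0,\frac{N\pi\sqrt3}{4})$; the energy ordering itself holds for every $m$ at which the competing states $\Psi_{\ome_j,j}$, $j\geq 1$, exist on the constraint manifold (their masses are bounded below), so the thresholds are not what makes the ordering true.
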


\n
Since the minimizer $\hat \Psi$ is a stationary state, in order to prove  Th. \ref{t:min} 
it is sufficient to show that $\Psi_{\omega_0,0} $ has minimum energy among the set of stationary states, which is finite. 
In facts in Section \ref{sec5} we shall prove a more detailed statement; the energies of the stationary states, with  frequencies $\ome_j$ such that $M[\Psi_{\ome_j,j}]=m$,  
are increasing in $j$, i.e. they can be ordered in the number of the
bumps, see Lem. \ref{l:nrgord}. Notice that the bounds on thresholds
in $m$ are different in the critical and subcritical case. More
remarks on this are given in Section \ref{sec5}. Notice that as a
consequence we have that the ground state of the system is the only
stationary state which is symmetrical with respect to permutation of
edges. 

\n
Finally, making use of the classical argument of Cazenave and Lions \cite{[CL]}, from mass and energy conservation laws, convergence of the minimizing sequences  and uniqueness of the ground state up to phase shift shown in Th. \ref{t:prob1} and Th. \ref{t:min}, the orbital stability of the ground state follows. A detailed proof will not be given, being straightforward extension of the previous outline.
\begin{corollary}\label{c:orbstab}
Let $\al <0 $ and assume $m\leq m^\ast$ if $0<\mu < 2$ and  $m < \min\{m^\ast , \frac{\pi \sqrt{3} N}{4} \}$ if $\mu=2$; then $\Psi_{\omega_0,0}$ is orbitally stable.
\end{corollary}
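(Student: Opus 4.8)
The plan is to run the Cazenave--Lions argument by contradiction, exploiting the two conservation laws of \eqref{schrod} together with the relative compactness of minimizing sequences that is already the output of the proof of Theorem~\ref{t:prob1}. A preliminary ingredient I would record first is the global well-posedness of \eqref{schrod} in the energy space $\EE$: local existence in $\EE$ follows from the standard fixed-point scheme for an NLS with a point interaction at the vertex, and the a priori $H^1$ bound needed to exclude blow-up is furnished precisely by the coercivity content of Theorem~\ref{t:prob1}, namely that $E[\Psi]$ is bounded below on each mass sphere $\{M[\Psi]=m\}$. Combined with conservation of $M$ and $E$, this upgrades local to global solutions and makes the flow $t\mapsto\Psi(t)$ continuous into $\EE$, which is what allows the stability statement to be formulated in the first place.

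Writing $\hat\Psi:=\Psi_{\omega_0,0}$ and recalling that by Theorems~\ref{t:prob1}--\ref{t:min} the minimizer set is the single orbit $\mathcal M=\{e^{i\theta}\hat\Psi:\theta\in\erre\}$, I would then argue as follows. Suppose $\hat\Psi$ were not orbitally stable. Then there exist $\ve_0>0$ and initial data $\Psi^n_0$ with $\operatorname{dist}_{H^1}(\Psi^n_0,\mathcal M)\to0$ whose solutions $\Psi^n(t)$ leave the $\ve_0$-neighbourhood of $\mathcal M$. By continuity of the flow one selects times $t_n$ at which the distance equals exactly $\ve_0$,
\[
\inf_{\theta\in\erre}\big\|\Psi^n(t_n)-e^{i\theta}\hat\Psi\big\|_{H^1}=\ve_0 .
\]
Set $\Phi^n:=\Psi^n(t_n)$. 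Since mass and energy are conserved and $\Psi^n_0\to\mathcal M$ in $H^1$, one gets $M[\Phi^n]=M[\Psi^n_0]\to m$ and $E[\Phi^n]=E[\Psi^n_0]\to E[\hat\Psi]=-\nu$.

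The sequence $\Phi^n$ is thus almost minimizing, but only with approximately the right mass. Rescaling to $\tilde\Phi^n:=\big(m/M[\Phi^n]\big)^{1/2}\Phi^n$ restores $M[\tilde\Phi^n]=m$, and by continuity of $E$ on $\EE$ and the vanishing of the correction one still has $E[\tilde\Phi^n]\to-\nu$; hence $\tilde\Phi^n$ is a genuine minimizing sequence for the constrained problem. Here I invoke the relative compactness established inside Theorem~\ref{t:prob1}: up to a subsequence there are phases $\theta_n$ with $e^{-i\theta_n}\tilde\Phi^n\to\hat\Psi$ strongly in $H^1$. Because $\tilde\Phi^n-\Phi^n\to0$ in $H^1$, the same conclusion holds for $\Phi^n$, forcing $\inf_\theta\|\Phi^n-e^{i\theta}\hat\Psi\|_{H^1}\to0$ and contradicting the normalization to $\ve_0$.

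The crux --- and the only point where the graph structure really enters --- is this relative compactness \emph{up to phase alone}. On the whole line the analogous statement holds only up to translations, and one must quotient by the translation group; here the attractive $\delta$ interaction pins the concentrating profile at the vertex, so the adapted concentration-compactness lemma rules out both vanishing and dichotomy and yields compactness modulo the $S^1$ action. This is exactly the feature, emphasised in the abstract, that makes $\mathcal M$ a single phase orbit and lets the Cazenave--Lions scheme deliver genuine orbital stability rather than stability up to translations. Verifying that no mass escapes to spatial infinity along $\tilde\Phi^n$ is the step I expect to demand the most care, but it is already contained in the proof of Theorem~\ref{t:prob1} and need only be quoted; the remainder is the routine bookkeeping sketched above.
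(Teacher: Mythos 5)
Your proposal is correct and takes essentially the same route the paper intends: the authors explicitly omit the details, stating that orbital stability follows by the classical Cazenave--Lions argument from the conservation laws, the convergence (up to subsequence) of minimizing sequences established inside the proof of Theorem~\ref{t:prob1}, and the uniqueness of the ground state up to phase coming from Theorem~\ref{t:min}. Your contradiction scheme with the times $t_n$, the mass rescaling (in fact dispensable, since the proof of Theorem~\ref{t:prob1} is deliberately carried out for sequences with $M[\Psi_n]\to m$ rather than $M[\Psi_n]=m$, precisely with this application in mind), and the quoted exclusion of runaway behaviour for $m<m^\ast$ are exactly that argument spelled out.
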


\n
The paper is organized as follows. 
In Section \ref{sec2} we recall several known results which will be
needed in the proof of Th. \ref{t:prob1}. In Section \ref{sec3} we
prove the concentration-compactness lemma for star graphs. Section
\ref{sec4} is devoted to the proof of Th. \ref{t:prob1}. In Section
\ref{sec5} we analyze the frequency and energy of stationary states on
the manifold of constant mass and prove Th. \ref{t:min}.

\section{Preliminaries} \label{sec2}

In this section we fix some notation and recall several results mostly
taken from \cite{[ACFN3]}.  We shall denote generic positive constants
by $c$, in the proof the value of $c$ will not be specified and can
change from line to line. The dual of $\EE$ will be denoted by
$\EE^\star$. 
We shall denote the points of the star graph  by $\underline{x}\equiv (x, j)$ with $x \in \erre^+$ and $j \in \{ 1, \ldots , N \}$. 
\subsection{Well-posedness} 
We recall that equation \eqref{schrod} can be understood in the weak form given by 
\begin{equation}
\label{intform1}
\Psi(t) \ = \ e^{-iH t} \Psi_0 - i \int_0^t e^{-i
  H (t-s)}|\Psi(s)|^{2\mu} \Psi(s) \, ds
\end{equation}
with $\Psi_0 \equiv \Psi(t=0)$.

\n      
As in the standard NLS on the line, mass and energy,
Eqs. \eqref{e:mass} and \eqref{e:nrg}, are conserved by the flow, see
Prop. 2.2 in \cite{[ACFN3]}. Moreover,   if  $0<\mu<2$, then the
equation \eqref{intform1} is well posed in the energy domain and the
solution is global, see Cor. 2.1 in \cite{[ACFN3]}. More precisely we
have 
\n

\begin{proposition}[Local well-posedness in $\EE$] \mbox{ }
\n Let $\mu >0$. For any $\Psi_0 \in \EE$, there exists $T > 0$ such that the
equation \eqref{intform1} has a unique solution $\Psi \in C^0 ([0,T),
\EE )
\cap C^1 ([0,T), \EE^\star)$. Moreover, Eq. \eqref{intform1} has a maximal solution $\Psi^{\rm{max}}$
defined on an interval of the form $[0, T^\star)$, and the following ``blow-up
alternative''
holds: either $T^\star = \infty$ or
$$
\lim_{t \to T^\star} \| \Psi_t^{\rm{max}} \|_{\EE}
\ = \ + \infty,
$$
where we denoted by $\Psi_t^{\rm{max}}$ the function $\Psi^{\rm{max}}$ evaluated at time $t$.
\end{proposition}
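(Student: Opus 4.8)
The plan is to solve the fixed-point problem associated with the Duhamel formulation \eqref{intform1} by a contraction argument in the energy space, in the spirit of the energy method of Kato and Cazenave (see \cite{Caz03}). The essential simplification available here is that the graph is one-dimensional, so that on each edge one has the Sobolev embedding $H^1(\RE^+)\hookrightarrow L^\infty(\RE^+)$; summing over the edges gives $\EE\hookrightarrow L^\infty(\GG)$ with $\|\Psi\|_\infty\le c\|\Psi\|_{H^1}$. No Strichartz estimate is therefore needed, and the whole argument rests on this embedding together with the selfadjointness of $H$.

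First I would record the mapping properties of the linear propagator. Since $H$ is selfadjoint, $e^{-iHt}$ is a unitary group on $L^2(\GG)$; because $\EE$ is precisely the form domain of $H$, the group restricts to a $C_0$-group of operators on $\EE$ (uniformly bounded on compact time intervals, the relevant graph norm being equivalent to the $H^1$-norm on $\EE$) and extends by duality to a $C_0$-group on $\EE^\star$. Next I would establish the estimates on the nonlinearity $G(\Psi)=-|\Psi|^{2\mu}\Psi$. The pointwise inequality $\big||z|^{2\mu}z-|w|^{2\mu}w\big|\le c(|z|^{2\mu}+|w|^{2\mu})|z-w|$, valid for every $\mu>0$, combined with $\EE\hookrightarrow L^\infty$, gives for $\Psi,\Phi$ in the ball $\{\|\cdot\|_\EE\le R\}$
\[
\|G(\Psi)-G(\Phi)\|_{L^2}\le c(R)\,\|\Psi-\Phi\|_{L^2},\qquad \|G(\Psi)\|_{\EE}\le c(R),
\]
the second bound following from the chain rule for the composition of the locally Lipschitz map $z\mapsto|z|^{2\mu}z$ with $H^1$ functions, which yields $\|(|\Psi|^{2\mu}\Psi)'\|_{L^2}\le c\|\Psi\|_\infty^{2\mu}\|\Psi'\|_{L^2}$. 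One also checks that the vertex condition defining $\EE$ is preserved by $G$, since $z\mapsto|z|^{2\mu}z$ is continuous and hence sends the common boundary value $\psi_1(0)=\dots=\psi_N(0)$ to a common value.

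With these ingredients I would set $\mathcal T\Phi(t)=e^{-iHt}\Psi_0-i\int_0^t e^{-iH(t-s)}G(\Phi(s))\,ds$ and show that, for $R$ of the order of $\|\Psi_0\|_\EE$ and $T=T(R)$ small enough, $\mathcal T$ maps the ball $B_R=\{\Phi\in C([0,T],\EE):\sup_{t}\|\Phi(t)\|_\EE\le R\}$ into itself (using the $\EE$-bound on $G$ above) and is a contraction for the weaker metric induced by $C([0,T],L^2)$ (using the $L^2$ Lipschitz bound and unitarity on $L^2$). The use of two different topologies is dictated by the fact that, for small $\mu$, the map $z\mapsto|z|^{2\mu}z$ is only H\"older and not $C^1$, so $G$ fails to be Lipschitz from $\EE$ into $\EE$; this is the main technical obstacle. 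It is overcome by the classical observation that $B_R$, endowed with the $C([0,T],L^2)$ distance, is a complete metric space: a bounded sequence in $\EE$ has weakly convergent subsequences and the $\EE$-norm is weakly lower semicontinuous, so any $L^2$-limit of elements of $B_R$ still lies in $B_R$. The Banach fixed point theorem then produces a unique fixed point $\Psi\in C([0,T],\EE)$.

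Finally, uniqueness in the full class follows by subtracting two solutions and applying Gronwall's inequality in $L^2$ through the Lipschitz estimate above; the time regularity $\Psi\in C^1([0,T],\EE^\star)$ is read off the equation $\partial_t\Psi=-iH\Psi+iG(\Psi)$, whose right-hand side is continuous with values in $\EE^\star$. The maximal solution and the blow-up alternative are then obtained by the standard continuation argument: since the local existence time $T(R)$ depends only on $\|\Psi_0\|_\EE$, if $T^\star<\infty$ while $\|\Psi_t^{\max}\|_\EE$ stayed bounded along some sequence $t_n\uparrow T^\star$, one could restart the construction from $t_n$ and extend the solution beyond $T^\star$, a contradiction; hence $\|\Psi_t^{\max}\|_\EE\to\infty$ as $t\to T^\star$.
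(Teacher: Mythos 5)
Your argument is correct and is essentially the expected one: the paper itself contains no proof of this proposition, importing it from the companion paper \cite{[ACFN3]}, and the route taken there is precisely your Kato--Cazenave scheme — the one-dimensional embedding $\EE\hookrightarrow L^\infty(\GG)$ replacing Strichartz estimates, conservation of the form domain by $e^{-iHt}$, local Lipschitz bounds on $z\mapsto|z|^{2\mu}z$, and a contraction on an $\EE$-ball with respect to the weaker $C([0,T],L^2)$ metric, followed by the standard continuation argument for the blow-up alternative. One small point to polish: as written, your ball $B_R\subset C([0,T],\EE)$ is \emph{not} closed under the $C([0,T],L^2)$ distance — weak lower semicontinuity of the $\EE$-norm gives $\sup_t\|\Phi(t)\|_{\EE}\le R$ for the limit but only weak $\EE$-continuity in $t$; the standard fix is to define $B_R$ inside $L^\infty([0,T],\EE)\cap C([0,T],L^2)$, run the fixed-point argument there, and then upgrade the fixed point to $\Psi\in C([0,T],\EE)$ a posteriori from the Duhamel representation, using strong continuity of the group on $\EE$ and the uniform $\EE$-bound on $s\mapsto|\Psi(s)|^{2\mu}\Psi(s)$.
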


\begin{proposition}[Conservation laws] \mbox{ }
\n Let $\mu>0$. For any solution $\Psi \in C^0 ([0,T), \EE)
\cap C^1 ([0,T), \EE^\star)$ to
the problem \eqref{intform1}, the following conservation laws hold at
any time $t$:
\begin{equation*}
M[\Psi_t ] \ = M[ \Psi_0 ], \qquad
E[ \Psi_t ] \ = \ E[ \Psi_0 ].
\end{equation*}
\end{proposition}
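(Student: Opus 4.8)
The plan is to work in the Gelfand triple $\EE \hookrightarrow L^2(\GG) \hookrightarrow \EE^\star$ and to read \eqref{schrod} as an identity in $\EE^\star$, namely $\partial_t \Psi_t = -i H \Psi_t + i |\Psi_t|^{2\mu}\Psi_t$, where $H$ now denotes the bounded extension $\EE \to \EE^\star$ induced by the quadratic form
\[
Q[\Psi] = \| \Psi' \|^2 + \alpha |\psi_1(0)|^2 .
\]
This is legitimate because the hypotheses give $\Psi \in C^0([0,T),\EE)$ and $\partial_t \Psi \in C^0([0,T),\EE^\star)$. As a preliminary step I would record that the vertex trace $\Psi \mapsto \psi_1(0)$ is continuous on $\EE$ and that $\EE \hookrightarrow L^{2\mu+2}(\GG)$ (Gagliardo--Nirenberg on the graph), so that $E$ in \eqref{e:nrg} is well defined and of class $C^1$ on $\EE$, with Fr\'echet differential $E'[\Psi] = H\Psi - |\Psi|^{2\mu}\Psi \in \EE^\star$; note in particular that $E'[\Psi_t] = i\,\partial_t\Psi_t$ along the flow, and that integration by parts on each edge together with the vertex conditions gives $\langle H\Psi,\Psi\rangle_{\EE^\star,\EE} = Q[\Psi]$.

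For the mass I would invoke the standard differentiation lemma for the pairing in a Gelfand triple: since $\Psi \in C^0([0,T),\EE)\cap C^1([0,T),\EE^\star)$, the scalar function $t \mapsto \| \Psi_t \|^2$ is $C^1$ and
\[
\frac{d}{dt} M[\Psi_t] = 2\,\mathrm{Re}\,\langle \partial_t \Psi_t, \Psi_t \rangle_{\EE^\star,\EE}.
\]
Substituting the equation gives $\langle \partial_t \Psi_t, \Psi_t \rangle_{\EE^\star,\EE} = -i\,Q[\Psi_t] + i\,\| \Psi_t \|_{2\mu+2}^{2\mu+2}$. Both $Q[\Psi_t]$ (reality of the form of the self-adjoint $H$) and $\| \Psi_t \|_{2\mu+2}^{2\mu+2}$ are real, so the right-hand side is purely imaginary and its real part vanishes. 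Hence $M[\Psi_t]$ is constant.

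For the energy the same formal computation would give $\frac{d}{dt}E[\Psi_t] = \mathrm{Re}\,\langle E'[\Psi_t], \partial_t \Psi_t \rangle = \mathrm{Re}\,\langle i\,\partial_t\Psi_t, \partial_t\Psi_t\rangle = 0$, but this is only formal: at the energy regularity both factors live in $\EE^\star$, so the pairing is undefined, and equivalently the chain rule for $E \in C^1(\EE)$ would require $t \mapsto \Psi_t$ to be differentiable in $\EE$, which the hypotheses do not supply. This is the main obstacle, and I would bypass it by approximation. Choose $\Psi_0^n \in \mathcal{D}(H)$ with $\Psi_0^n \to \Psi_0$ in $\EE$ and let $\Psi^n$ be the corresponding solutions; for such data the local theory yields $\Psi^n \in C^0(\mathcal{D}(H)) \cap C^1(L^2)$, so the computation is legitimate in $L^2(\GG)$, where $E'[\Psi^n_t]=i\,\partial_t\Psi^n_t \in L^2$ and $\mathrm{Re}\,(i\,\partial_t\Psi^n_t,\partial_t\Psi^n_t)_{L^2}=0$, giving $E[\Psi^n_t] = E[\Psi^n_0]$ for every $t$.

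Finally I would pass to the limit. By the continuous dependence furnished by the contraction argument underlying the local well-posedness proposition, on a common interval one has $\Psi^n \to \Psi$ in $C^0([0,T'),\EE)$; since $E$ is continuous on $\EE$, letting $n\to\infty$ in $E[\Psi^n_t]=E[\Psi^n_0]$ yields $E[\Psi_t]=E[\Psi_0]$ for $t \in [0,T')$, and covering the maximal interval by a continuation argument gives the identity for all $t \in [0,T)$. The graph enters only through the boundary term $\tfrac{\alpha}{2}|\psi_1(0)|^2$ appearing in $E$ and $Q$, which is controlled uniformly by the continuity of the vertex trace on $\EE$; no translation invariance is used, so the argument is insensitive to the lack of symmetry of $\GG$.
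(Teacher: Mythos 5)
Your argument is correct, and it is essentially the intended one: the paper itself contains no proof of this proposition but quotes it from the companion paper [ACFN3, Prop.~2.2], whose proof follows the same Cazenave-style scheme you use --- mass conservation by differentiating $\|\Psi_t\|^2$ through the duality pairing in the Gelfand triple $\EE\hookrightarrow L^2(\GG)\hookrightarrow \EE^\star$, where the real part of $-iQ[\Psi_t]+i\|\Psi_t\|_{2\mu+2}^{2\mu+2}$ vanishes, and energy conservation by regularizing the datum into $\mathcal{D}(H)$, performing the computation at $L^2$ level, and passing to the limit via continuous dependence and the continuity of $E$ on $\EE$. The two ingredients you invoke that are not among the propositions stated in this paper --- persistence of $\mathcal{D}(H)$-regularity (Kato's argument, which applies since $z\mapsto |z|^{2\mu}z$ is $C^1$ for every $\mu>0$ and $H^1$ embeds in $L^\infty$ on each edge) and continuous dependence on the initial datum in $\EE$ --- are standard by-products of the contraction-mapping proof of local well-posedness, so they constitute citation debts rather than gaps.
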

\begin{corollary}[Global well posedness] \mbox{ }
\n Let $0<\mu<2$. For any $\Psi_0 \in \EE$,  the
equation \eqref{intform1} has a unique solution $\Psi \in C^0 ([0,\infty),
\EE )
\cap C^1 ([0,\infty), \EE^\star)$. 
\end{corollary}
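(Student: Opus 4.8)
The plan is to promote the local solution provided by the local well-posedness proposition to a global one by excluding the blow-up alternative. Concretely, I would establish an a priori bound on $\|\Psi_t\|_\EE$ that is uniform on the maximal existence interval $[0,T^\star)$; once this is available, the conclusion $\lim_{t\to T^\star}\|\Psi_t^{\rm{max}}\|_\EE=+\infty$ becomes impossible, so the blow-up alternative forces $T^\star=\infty$. Since $\|\Psi_t\|_\EE^2$ is equivalent to $\|\Psi_t'\|^2+\|\Psi_t\|^2$ and the mass $M[\Psi_t]=\|\Psi_t\|^2=m$ is conserved by the previous proposition, it is enough to bound $\|\Psi_t'\|^2$ uniformly in $t$.

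To this end I would first solve the energy expression \eqref{e:nrg} for the kinetic term. Writing $\al=-|\al|$ and using the conservation law $E[\Psi_t]=E[\Psi_0]$, for every $t\in[0,T^\star)$ one has
\beq
\frac12\|\Psi_t'\|^2=E[\Psi_0]+\frac{1}{2\mu+2}\|\Psi_t\|_{2\mu+2}^{2\mu+2}+\frac{|\al|}{2}|\psi_1(0)|^2,
\eeq
where $\psi_1(0)$ is the common boundary value of $\Psi_t$ at the vertex. Both of the last two terms are nonnegative, so the whole point is to dominate them by a small multiple of $\|\Psi_t'\|^2$ plus a constant independent of $t$. The nonlinear term is estimated by the one-dimensional Gagliardo--Nirenberg inequality applied on each half-line and summed over the finitely many edges, giving a bound of the form
\beq
\|\Psi_t\|_{2\mu+2}^{2\mu+2}\le c\,\|\Psi_t'\|^{\mu}\|\Psi_t\|^{\mu+2}=c\,m^{(\mu+2)/2}\,\|\Psi_t'\|^{\mu}.
\eeq
The boundary contribution is controlled by the trace estimate $|\psi_1(0)|^2\le 2\|\psi_1\|\,\|\psi_1'\|\le 2\sqrt{m}\,\|\Psi_t'\|$, valid for $H^1$ functions on $\erre^+$.

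The decisive step, and the only place where the hypothesis $0<\mu<2$ enters, is the absorption. Since $\mu<2$, both $\|\Psi_t'\|^\mu$ and $\|\Psi_t'\|$ grow strictly more slowly than $\|\Psi_t'\|^2$, so Young's inequality lets me write each of the two terms as $\frac18\|\Psi_t'\|^2$ plus a constant depending only on $m$, $\mu$ and $\al$. Substituting into the identity above yields
\beq
\frac12\|\Psi_t'\|^2\le E[\Psi_0]+\frac14\|\Psi_t'\|^2+C(m,\mu,\al),
\eeq
whence $\|\Psi_t'\|^2\le 4E[\Psi_0]+4C(m,\mu,\al)$ uniformly on $[0,T^\star)$. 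Together with $\|\Psi_t\|^2=m$ this is the required uniform $\EE$-bound, and the argument closes.

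I expect the genuine obstacle to be exactly this subcriticality threshold: at $\mu=2$ the Gagliardo--Nirenberg gradient exponent equals $2$, so the kinetic term cannot be absorbed unconditionally and the scheme breaks down unless the mass is small enough to beat the sharp constant. This is precisely why the corollary is restricted to $0<\mu<2$, where the absorption is automatic, and why the critical case $\mu=2$—the one responsible for the explicit mass constraints appearing in Theorems \ref{t:prob1} and \ref{t:min}—is excluded here.
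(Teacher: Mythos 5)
Your proof is correct and follows exactly the argument the paper intends: the corollary is stated after the local well-posedness proposition (with its blow-up alternative) and the conservation laws precisely so that mass/energy conservation plus the Gagliardo--Nirenberg bound $\|\Psi\|_{2\mu+2}^{2\mu+2}\le c\|\Psi'\|^{\mu}\|\Psi\|^{2+\mu}$ and the vertex trace estimate yield a uniform $H^1$ bound for $\mu<2$, ruling out finite-time blow-up (the paper itself defers the details to Cor.~2.1 of the companion paper [ACFN3], where the same scheme is carried out). Your closing remark about the failure of absorption at $\mu=2$, where global existence survives only under a smallness condition on the mass, is also accurate and consistent with the mass thresholds in the paper's main theorems.
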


\subsection{Kirchhoff coupling} 
The vertex coupling associated to $\al=0$, is  usually called \emph{free} (on the line the interaction disappears) or \emph{Kirchhoff} coupling and plays a distinguished role. For this reason we shall denote by $H^0$ the corresponding operator defined by
\[
{\mathcal D} (H^0): =
\{ \Psi \in H^2 \text{ s.t. } \,
\psi_1 (0) = \ldots = \psi_N (0), \, \sum_{i=1}^N \psi_i ' (0) = 0 \}
\]
\[
H^0 \Psi =
\begin{pmatrix}
-\psi_1'' \\ \vdots \\ -\psi_N''
\end{pmatrix}.
\]
We also define the corresponding energy functional 
\beq
\label{e:nrgK}
E^0[\Psi]=\frac{1}{2}\| \Psi' \| - \frac{1}{2\mu+2} \| \Psi \|_{2 \mu+ 2}^{2\mu+2} \eeq
with energy domain $\mathcal{D}(E^0)= \mathcal{E}$.
\subsection{Gagliardo-Nirenberg inequalities} 
We shall use a version of Gagliardo-Nirenberg inequalities on the star
graph. The following proposition is a direct consequence of the
Gagliardo-Nirenberg inequalities on the half-line see, e.g.,
\cite[I.31]{MPF91}. 
\begin{proposition}[Gagliardo-Nirenberg Inequality]
\label{p:GN}
Let  $2\leq q\leq+\infty$, $1\leq p\leq q$ and set  $a= \frac{\frac{1}{p}-\frac{1}{q}}{\frac{1}{2}+\frac{1}{p}}$, then for any $\Psi\in H^1$
\[
\|\Psi\|_{q} \leq c \|\Psi'\|^a \|\Psi\|_{p}^{1-a}\,. 
\]
\end{proposition}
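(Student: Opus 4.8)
The plan is to reduce the statement to the one-dimensional Gagliardo--Nirenberg inequality on the half-line, apply it edge by edge, and then recombine the resulting estimates using only the fact that a star graph has finitely many edges. First I would invoke the scalar inequality on $\erre^+$ cited from \cite[I.31]{MPF91}: with the same exponents $p,q$ and the same $a$ there is a constant $c_0>0$ such that
\[
\|\psi\|_{L^q(\erre^+)} \leq c_0\, \|\psi'\|_{L^2(\erre^+)}^a\, \|\psi\|_{L^p(\erre^+)}^{1-a}
\]
for every $\psi\in H^1(\erre^+)$. A quick consistency check via the scaling $\psi(\cdot)\mapsto\psi(\la\,\cdot)$ recovers exactly the exponent $a=(\frac1p-\frac1q)/(\frac12+\frac1p)$, confirming it is the admissible one.

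Applying this to each component $\psi_j$ of $\Psi$, and using the elementary bounds $\|\psi_j'\|_{L^2(\erre^+)}\leq\|\Psi'\|$ and $\|\psi_j\|_{L^p(\erre^+)}\leq\|\Psi\|_p$ (immediate from the definitions of the graph norms, since each single summand is dominated by the whole sum), I would obtain for each $j$
\[
\|\psi_j\|_{L^q(\erre^+)} \leq c_0\, \|\Psi'\|^a\, \|\Psi\|_p^{1-a}.
\]
For $q=\infty$ the claim follows at once, since $\|\Psi\|_\infty=\max_{1\leq j\leq N}\|\psi_j\|_{L^\infty(\erre^+)}$ and the right-hand side above does not depend on $j$. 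For $q<\infty$ I would raise the last inequality to the $q$-th power, sum over the $N$ edges, and take the $q$-th root, getting $\|\Psi\|_q\leq N^{1/q}c_0\,\|\Psi'\|^a\|\Psi\|_p^{1-a}$; setting $c=\max\{1,N^{1/q}\}\,c_0$ concludes the argument.

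The only point requiring care is this recombination step. One might be tempted to apply a discrete Hölder inequality to $\sum_j\|\psi_j'\|^{aq}\|\psi_j\|_p^{(1-a)q}$ with conjugate exponents chosen so as to reconstruct $\|\Psi'\|$ and $\|\Psi\|_p$ exactly, but a short computation shows that the two conjugacy conditions $aqr_1=2$ and $(1-a)qr_2=p$ force $\frac{1}{r_1}+\frac{1}{r_2}=1$ only in the trivial case $p=q$, so an exact matching is impossible in general. This is precisely where the absence of a sharp constant is exploited: because the graph has only finitely many edges, all the relevant finite-dimensional norms on $\erre^N$ are equivalent, and the crude bound above — dominating each edge norm by the full graph norm and summing $N$ terms — already yields a (non-optimal, $N$-dependent) constant $c$, which is all that the statement asserts.
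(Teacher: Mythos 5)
Your proof is correct and follows essentially the same route as the paper, which offers no detailed argument but states the proposition as ``a direct consequence of the Gagliardo--Nirenberg inequalities on the half-line'' (citing \cite[I.31]{MPF91}) --- exactly your edge-by-edge application followed by the crude summation over the $N$ edges. Your additional observation that an exact H\"older recombination of the edge norms is impossible, so that the $N$-dependent constant is unavoidable but harmless, is a correct and worthwhile clarification of why the reduction is this blunt.
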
 
\subsection{Symmetric rearrangements} 
\n
Here we recall the basic properties of symmetric rearrangements on a
star graph introduced in \cite{[ACFN3]}.  
For a given function $\Phi:\GG \to \ci^N$ one introduces the
rearranged function $\Phi^*:\GG \to \erre^N$. The function $\Phi^*$ is
positive, symmetric, non increasing and   is constructed in such a way 
that it is equimisurable w.r.t. $\Phi$, that is, the level sets of
$|\Phi|$ and $\Phi^*$ have the same measure. This is sufficient to
prove that all the $L^p(\GG)$ norms are conserved by the
rearrangement. The comparison of the kinetic energy of $\Phi$ and
$\Phi^*$ is more delicate. On the real line  the P\'olya-Szeg\H o
inequality shows that the kinetic energy does not increase. This is no
longer true for a star graph where a constant $N/2$ appears, see
Prop. \ref{polya} below.

\begin{definition}[Symmetric rearrangement]
Given $\Psi: \GG \to \ci^N$, let $\la : \GG \to \erre$ be given by  $\la (s) = | \{ |\Psi| \geqslant s\} |$ (which is the measure of the set $\{\underline x \textrm{ s.t. } |\Psi(\underline x)| \geqslant s\}$) and  $g: \erre^+ \to \erre^+$  be  $g(t) = \sup \{ s | \, \la(s) > N t\}$. The symmetric rearrangement $\Psi^\ast$ of $\Psi$ is defined by $\Psi^*=(\psi^*_1,...,\psi^*_N)^T$ with  
\[
\psi^*_1(x)= . . . = \psi^*_N(x) = g (x) .
\]
\end{definition}
The main properties of $\Psi^{\ast}$ are the following:
\begin{proposition} \label{lp}
The symmetric rearrangement $\Psi^{\ast}$ is  positive, symmetric and
non increasing. Moreover,  $\| \Psi^{\ast}\|_{p  } =\| \Psi\|_{p  }$.
\end{proposition}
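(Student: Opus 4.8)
The plan is to establish the four asserted properties in turn, the first three being immediate from the definition and the last, equimeasurability, being the substantive point. Positivity and symmetry can be read off at once: the profile $g$ takes values in $\erre^+$ by construction, and since every component of $\Psi^*$ equals the single function $g$, the rearranged object is manifestly invariant under any permutation of the edges. For the monotonicity I would note that the distribution function $\la(s) = |\{|\Psi|\geq s\}|$ is non-increasing in $s$, being the measure of a nested family of super-level sets; consequently the set $\{s : \la(s) > Nt\}$ shrinks as $t$ grows, so its supremum $g(t)$ is non-increasing, which is exactly the statement that the common profile is non increasing on $\erre^+$.

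The heart of the argument is the norm identity, which I would derive from the layer-cake (Cavalieri) representation
\[
\|\Psi\|_p^p = p\int_0^\infty s^{p-1}\la(s)\,ds, \qquad \|\Psi^*\|_p^p = N\int_0^\infty g(x)^p\,dx = Np\int_0^\infty s^{p-1}\,|\{x : g(x) > s\}|\,ds,
\]
the factor $N$ reflecting that all $N$ components of $\Psi^*$ coincide with $g$. The two right-hand sides agree once one shows the equimeasurability relation $N\,|\{x : g(x) > s\}| = \la(s)$ for a.e.\ $s$. To this end I would unwind the definition of the generalized inverse: since $\la$ is non-increasing, $g(t) > s$ holds iff there exists $s' > s$ with $\la(s') > Nt$, which in turn is equivalent to $\la(s^+) > Nt$, where $\la(s^+)$ denotes the right limit. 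Hence $\{x : g(x) > s\} = [0, \la(s^+)/N)$, of measure $\la(s^+)/N$. Because $\la$ is monotone it has at most countably many jumps, so $\la(s^+) = \la(s)$ for a.e.\ $s$; substituting into the Cavalieri formulas yields $\|\Psi^*\|_p = \|\Psi\|_p$ for $1 \leq p < \infty$. The endpoint $p = \infty$ I would treat separately, observing that $\|\Psi^*\|_\infty = \sup_{x>0} g(x) = \sup\{s : \la(s) > 0\}$, which is precisely the essential supremum of $|\Psi|$.

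The main obstacle is the careful bookkeeping of strict versus non-strict inequalities in the generalized-inverse identity $g(t) > s \iff \la(s^+) > Nt$, together with the attendant distinction between $\{g > s\}$ and $\{g \geq s\}$ at the (countably many) levels $s$ where $\la$ jumps. Since these exceptional levels form a null set for the $s$-integration in the layer-cake formula, they do not affect the computation, but the equivalence must be stated at the level of right limits $\la(s^+)$ rather than $\la(s)$ to be correct pointwise. Once this is handled, the preservation of all $L^p(\GG)$ norms follows uniformly in $p$.
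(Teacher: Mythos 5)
Your proof is correct and takes precisely the route the paper intends: Proposition \ref{lp} is stated here without proof (it is recalled from the companion paper \cite{[ACFN3]}), and the surrounding text already identifies equimeasurability of $|\Psi|$ and $\Psi^*$ as the mechanism behind conservation of all $L^p(\GG)$ norms, which is exactly what your generalized-inverse identity $N\,|\{g>s\}|=\la(s^+)$ combined with the layer-cake representation delivers. Your attention to the right limit $\la(s^+)$ versus $\la(s)$ at the (countably many, hence $s$-null) jump levels, and your separate treatment of $p=\infty$ via $\sup_{t>0}g(t)=\sup\{s:\la(s)>0\}$, are the correct refinements needed to make the standard argument rigorous.
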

\begin{proposition}[P\'olya-Szeg\H{o} inequality for star graphs]   \label{polya}
Assume that $\Psi \in H^1 $. Then $\Psi^\ast \in H^1 $ and
\[
\|{\Psi^{\ast}}'\|  \leqslant  \f{N}{2}    \|\Psi'\|. 
\]
\end{proposition}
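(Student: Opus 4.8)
The plan is to run the classical coarea/isoperimetric proof of the Pólya–Szegő inequality, keeping careful track of the number of boundary points of the super-level sets of $|\Psi|$; this count is exactly where the geometry of the star graph enters and produces the constant $N/2$.

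First I would reduce to the case $\Psi=u\geq 0$. On each edge $||\psi_i|'|\leq|\psi_i'|$ a.e., so $\||\Psi|'\|\leq\|\Psi'\|$, and by construction the rearrangement depends only on the modulus, $(|\Psi|)^\ast=\Psi^\ast$; hence it suffices to bound $\|(\Psi^\ast)'\|$ by $\frac{N}{2}\|u'\|$ for the nonnegative function $u:=|\Psi|$. I would also first treat a $u$ regular enough for the coarea formula (a generic Morse-type profile), recovering the general case by approximation. Writing $\mu(s):=|\{u>s\}|$ for the distribution function computed with the graph measure, the definition of $g$ gives $g(x)=u^\flat(Nx)$, where $u^\flat$ is the monotone decreasing rearrangement of $u$ onto $\RE^+$. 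A change of variables for the monotone $u^\flat$ yields $\int_0^\infty|(u^\flat)'|^2\,dy=\int_0^\infty\frac{ds}{-\mu'(s)}$, and combining this with $g'(x)=N(u^\flat)'(Nx)$ and $\|(\Psi^\ast)'\|^2=N\int_0^\infty|g'|^2\,dx$ gives the exact identity $\|(\Psi^\ast)'\|^2=N^2\int_0^\infty\frac{ds}{-\mu'(s)}$.

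For the left-hand side I would invoke the coarea formula $\|u'\|^2=\int_0^\infty\big(\sum_{x:\,u(x)=s}|u'(x)|\big)\,ds$ together with $-\mu'(s)=\sum_{x:\,u(x)=s}|u'(x)|^{-1}$. Cauchy–Schwarz on the finite level set $\{u=s\}$ then gives $p(s)^2\leq(-\mu'(s))\sum_{x:\,u(x)=s}|u'(x)|$, with $p(s):=\#\{u=s\}$, so that $\|u'\|^2\geq\int_0^\infty\frac{p(s)^2}{-\mu'(s)}\,ds$. Comparing with the identity above, the whole inequality reduces to the pointwise bound $p(s)\geq 2$ for a.e.\ $s\in(0,\max u)$.

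This last claim is the key geometric step and the main obstacle, and it is what forces continuity of $u$ across the vertex (so in effect $\Psi\in\EE$). For a level $s$ below the common vertex value, $\{u>s\}$ contains the vertex, and since $u\to 0$ along every edge the level $s$ is crossed on each of the $N$ edges, giving $p(s)\geq N\geq 2$; for $s$ above the vertex value the set $\{u>s\}$ is a disjoint union of intervals interior to the edges, each contributing two boundary points, so again $p(s)\geq 2$. Continuity at the vertex is essential here: a function supported on a single edge would only give $p(s)\geq 1$ on part of the range, and the sharp constant would then be $N$ rather than $N/2$. Feeding $p(s)\geq 2$ into the previous estimate yields $\|u'\|^2\geq 4\int_0^\infty\frac{ds}{-\mu'(s)}=\frac{4}{N^2}\|(\Psi^\ast)'\|^2$, i.e.\ $\|(\Psi^\ast)'\|\leq\frac{N}{2}\|\Psi'\|$; finiteness of the right-hand side together with $\|\Psi^\ast\|_2=\|\Psi\|_2$ then gives $\Psi^\ast\in H^1$. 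I expect the genuine technical labor to lie not in these geometric counts but in the rigorous justification of the coarea formula and of the level-set expression for $-\mu'(s)$ for a mere $H^1$ function: controlling critical values via Sard's theorem, dealing with the flat parts $\{u'=0\}$ (where the identity for $\mu'$ degenerates into an inequality in the favorable direction), and passing to the limit in the regularizing approximation.
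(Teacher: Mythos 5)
Your argument is correct in substance, and the technical debt you acknowledge at the end (rigorous coarea formula, the level-set identity for $-\mu'(s)$, Sard-type issues for mere $H^1$ functions) is real but standard. Note, however, that the present paper does not prove Proposition \ref{polya} at all: it is recalled from the companion paper \cite{[ACFN3]}, where the proof adapts the classical one-dimensional P\'olya--Szeg\H{o} argument using exactly the two ingredients you isolate, namely rearrangement theory on the line together with the observation that almost every level set of $|\Psi|$ contains at least two points. Indeed the cleanest packaging is a reduction: comparing $\Psi^\ast$ with the symmetric-decreasing rearrangement $u^\sharp$ of $u=|\Psi|$ on $\RE$, for which $\|(u^\sharp)'\|_{L^2(\RE)}\leq \|u'\|$ holds classically precisely under the two-point level-set condition, the constant then falls out of the pure scaling identity $g(x)=u^\sharp(Nx/2)$, which gives $\|(\Psi^\ast)'\|^2=\frac{N^2}{4}\|(u^\sharp)'\|^2_{L^2(\RE)}$; this outsources the approximation labor to the literature, whereas your from-scratch coarea computation makes the origin of the constant transparent and even records the stronger intermediate bound $\|u'\|^2\geq\int p(s)^2(-\mu'(s))^{-1}\,ds$ with $p(s)\geq N$ below the vertex value. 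Your remark on the hypotheses is a genuine catch and worth emphasizing: as literally stated, with $\Psi\in H^1$ only, the inequality with constant $N/2$ is false --- a decreasing profile supported on a single edge, positive at the vertex, satisfies $\|(\Psi^\ast)'\|=N\|\Psi'\|$ --- so continuity at the vertex, i.e. $\Psi\in\mathcal{E}$, must be assumed. This is harmless for the paper, since the proposition is applied only to functions in $\mathcal{E}$ (in excluding runaway minimizing sequences in the proof of Theorem \ref{t:prob1}), but the statement as printed omits it.
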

\subsection{Mass and energy on the half-line and on the line}
For later convenience we introduce also the unperturbed energy and
mass functional for functions belonging to $H^1 (\erre^+)$ and $H^1
(\erre)$. 
For the half-line we denote the functionals by $M_{\erre^+}$ and
$E_{\erre^+}$, respectively. They are defined by
\begin{align*}
& M_{\erre^+} [\psi] = \| \psi \|^2_{L^2(\erre^+)} \\
& E_{\erre^+}^0 [\psi] = \f 1 2 \| \psi' \|^2_{L^2(\erre^+)} -
  \frac{1}{2\mu + 2} \| \psi \|^{2\mu + 2}_{L^{2\mu +2} (\erre^+)}. 
\end{align*}
For the line we denote the mass end energy functionals by $M_{\erre}$ and
$E_{\erre}$, respectively. They are defined by
\begin{align*}
& M_{\erre} [\psi] = \| \psi \|^2_{L^2(\erre)} \\
& E_{\erre}^0 [\psi] = \f 1 2 \| \psi' \|^2_{L^2(\erre)} - \frac{1}{2\mu + 2} \| \psi \|^{2\mu + 2}_{L^{2\mu +2} (\erre)}.
\end{align*}

\n
Using the definition \eqref{soliton} and a change of variable,  one
obtains the following formulas:
\begin{align}
&\int_0^\infty |\phi_\ome (x+\xi )|^2 dx \quad=
  \frac{(\mu+1)^{\frac{1}{\mu} }}{\mu} \ome^{\frac{1}{\mu}
    -\frac{1}{2} }  
\int^1_{\tanh (\xi \mu\sqrt{\ome})} (1-t^2)^{\frac{1}{\mu} -1} dt \label{formula1} \\
&\int_0^\infty |\phi_\ome (x+\xi )|^{2\mu+2} dx =
\frac{(\mu+1)^{1+\frac{1}{\mu} }}{\mu} \ome^{\frac{1}{\mu}
  +\frac{1}{2} }  
\int^1_{\tanh (\xi \mu\sqrt{\ome})} (1-t^2)^{\frac{1}{\mu} } dt \label{formula2} .
\end{align}
The mass and energy functional evaluated on the soliton are given by
\begin{align}
\label{MR}
& M_\erre [\phi_\ome] = 2 M_{\erre^+} [\phi_\ome] = 2  \f{(\mu+1)^{\f 1 \mu }  }{\mu} \ome^{ \f 1 \mu - \f 1 2} \int_0^1 (1-t^2)^{\f 1 \mu -1} dt, \\
\label{ER}
& E^0_\erre [\phi_\ome ]=2 E^0_{\erre^+} [\phi_\ome ]=  -  \f{(\mu+1)^{\f 1 \mu }  }{\mu}   \f{2-\mu}{2+\mu}   \ome^{ \f 1 \mu + \f 1 2} \int_0^1 (1-t^2)^{\f 1 \mu -1} dt,
\end{align}
where we used the identity
\begin{equation}
\lf( \frac{1}{2} + \frac{1}{\mu} \ri) \int^1_{b} (1-t^2)^{\frac{1}{\mu} } dt=
-\f{b}{2} (1-b^2)^{\frac{1}{\mu}} +
\frac{1}{\mu} \int^1_{b} (1-t^2)^{\frac{1}{\mu}-1 } dt.
\label{formula3}
\end{equation}
\n
It is well known that the function $\phi_\ome$ minimizes $E_{\erre}$
at fixed mass. More precisely, choose $\ove \ome$ such that
$M_\RE[\phi_{\ove \ome}] =m$,
then $\phi_{\ove \ome}$ is a minimizer of the  problem
\be
\inf_{\stackrel{\psi\in H^1 (\erre)}{  M_\erre[\psi] =m }  } E_\erre^0 [\psi] .
\ee

\n
This also implies that $\phi_\omega$, with $\omega$ such that
$M_{\RE^+}[\phi_\ome] =m$, is the solution to  the problem 
\[
\inf_{\stackrel{\psi\in H^1 (\erre^+)}{ M_{\erre^+}[\psi] =m}} E_{\erre^+ }[\psi].
\]
To prove the last statement, assume that  $f\in H^1(\RE^+)$ is such that $M_{\RE^+}[f] = m$ and
\[
 E^0_{\RE^+} [f] \leq  E_{\RE^+} [\phi_\ome]
\]
 where $\ome$ is chosen to satisfy $M_{\RE^+}[\phi_\ome]=m$. Then,
 denoted by $\tilde f$ the even extension of $f$,  one would obtain 
\[
 E^0_{\RE} [\tilde f] \leq  E^0_{\RE} [\phi_\ome]
\]
where $M_\RE[\tilde f] = M_\RE[\phi_\ome]=2m$. Since
$\phi_\ome$ is, up to a phase, the only minimizer of $E^0_\RE$ at fixed mass, $f$ must
be equal to $\phi_\ome$ up to a phase factor. 
%
%
%

\section{Concentration-compactness lemma}
\label{sec3}
\n
In this section we prove the 
concentration-compact\-ness lemma, that will be
the main tool in the proof of Th. \ref{t:prob1}.
\n
For any sequence $\{\Psi_n\}_{n\in \NA}$ such that $M[\Psi_n] \to m$ and $\|\Psi_n\|_{H^1}$ is
bounded, the lemma states the existence of  a
subsequence whose behavior  is decided by the concentrated mass $\tau$
(see Section \ref{sec3} for the precise definition). We distinguish
three cases: $\tau=0$, $0<\tau<m$ and $\tau=m$, corresponding
respectively to \emph{vanishing}, 
\emph{dichotomy} or \emph{compactness}, which are the usual, well known
possibilities in the standard concentration-compactness theory. We
remark that the statement of the lemma concerns  
the existence {\em of a 
subsequence only} of $\{\Psi_n\}_{n\in \NA}$ having the behavior defined by the value
of the parameter $\tau$. In other words, the lemma does not characterize all the subsequences of
$\{\Psi_n\}_{n\in\NA}$. The novel point in the extension of the theory to sequences of
functions defined on the star graph $\GG$, concerns the case of
compactness. Indeed, as in the standard case, a compact sequence can 
either remain essentially concentrated in a finite region and then
strongly 
converge, or escape towards the infinity. The lack of translational
invariance in  $\GG$ forces to distinguish these two cases,
so we say that the subsequence is \emph{convergent}
if it converges to some function $\Psi\in\mathcal E$ (case $i_1)$ of
Lem. \ref{l:cc}), and we say that 
the subsequence is \emph{runaway} if 
the 
subsequence carries the whole mass towards infinity along a single
edge  (case $i_2)$ in Lem. \ref{l:cc}).
\n
In the development of the concentration-compactness theory, we closely
follow the roadmap of
\cite{Caz03, Caz06}, generalizing at any step to the case of the star
graph the corresponding result of the standard theory in $\RE^n$. 

We start by
defining the distance between points of the graph,
then we introduce the concentration function and analyze its properties.

Let $\x=(x,j)$ and $\y=(y,k)$, with $j,k=1,...,N$ and $x,y\in\RE_+$, two points
of the graph and define the distance 
\[
d(\x,\y)\equiv d((x,j),(y,k)):=
\left\{
\begin{aligned}
&|x-y| \quad & \textrm{for }j=k\\
&x+y & \textrm{for } j\neq k
\end{aligned}
\right.  
\]

We denote by $B(\y,t)$ the open ball of radius $t$ and center $\y$
\[
B(\y,t):=\{\x\in\GG \textrm{ s.t. } d(\x,\y)<t\}\,,
\]
and by $\|\cdot\|_{B(\y,t)}$ the $L^2(\GG)$ norm restricted to the ball
$B(\y,t)$, i.e. set $\y=(y,k)$ then
\[
\|\Psi\|_{B(\y,t)}^2 = \int_{\{x\in\RE_+ \textrm{ s.t. } |x-y|<t\}}|\psi_k(x)|^2
dx +
\sum_{j\neq k , j=1}^N \int_{\{x\in\RE_+ \textrm{ s.t. } x+y<t\}}|\psi_j(x)|^2
dx \,.
\]
For any function $\Psi\in L^2$ and $t\geq0$ we define the concentration function
$\rho(\Psi,t)$ as
\beq
\label{e:rho}
\rho(\Psi,t) = \sup_{\y\in\GG} \|\Psi\|_{B(\y,t)}^2\,.
\eeq

In the following proposition we prove two important properties of the
concentration function: that the $\sup$ at the r.h.s. of equation \eqref{e:rho}
is indeed attained at some point of $\GG$ and the H\"older continuity of
$\rho(\Psi,\cdot)$.

\begin{proposition}
\label{p:rho}
Let $\Psi\in L^2$ such that $\|\Psi\|>0$, then
\begin{enumerate}[i)]
\item{\label{i:rho0}}
$\rho(\Psi,\cdot)$ is
non-decreasing,  $\rho(\Psi,0)=0$, $0<\rho(\Psi,t)\leq M[\Psi]$ for $t>0$,
and $\lim_{t\to\infty} \rho(\Psi,t) =  M[\Psi]$. 
\item{\label{i:rho1}}
 There exists $\y(\Psi,t)\in\GG$ such that 
\[
\rho(\Psi,t)= \|\Psi\|_{B(\y(\Psi,t),t)}^2\,.
\]
\item
\label{i:rho2}
If $\Psi\in L^p$ for some $2\leq p\leq\infty$, then 
\beq
\label{e:ii}
|\rho(\Psi,t)-\rho(\Psi,s)| \leq c \|\Psi\|_{p}^2 |t-s|^{\frac{p-2}{p}} \qquad
\textrm{for } 2\leq p<\infty
\eeq
and 
\beq
\label{e:ii-infty}
|\rho(\Psi,t)-\rho(\Psi,s)| \leq c \|\Psi\|_{\infty}^2 |t-s| \qquad
\textrm{for } p=\infty
\eeq
for all $s,t>0$ and where $c$ is independent of $\Psi$, $s$ and $t$.
\end{enumerate}
\end{proposition}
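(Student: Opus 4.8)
The plan is to treat the three assertions in order, with the attainment of the supremum in \ref{i:rho1} as the core difficulty.

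Part \ref{i:rho0} I would obtain directly from the definition \eqref{e:rho}. Monotonicity is immediate, since $s\le t$ gives $B(\y,s)\subseteq B(\y,t)$ for every $\y$, hence $\|\Psi\|_{B(\y,s)}\le\|\Psi\|_{B(\y,t)}$, and the supremum preserves the inequality; that $\rho(\Psi,0)=0$ follows because $B(\y,0)$ is a null set, and $\rho(\Psi,t)\le M[\Psi]$ because $B(\y,t)\subseteq\GG$. For strict positivity when $t>0$ I would cover $\GG$ by countably many balls of radius $t$: if every such ball carried zero mass then $M[\Psi]=0$, contradicting $\|\Psi\|>0$. For the limit I would center the ball at the vertex; since $d((0,k),(x,j))=x$ for all $j$, the ball $B(\text{vertex},t)$ equals $\{(x,j):x<t\}$ and exhausts $\GG$ as $t\to\infty$, so $M[\Psi]\ge\rho(\Psi,t)\ge\|\Psi\|^2_{B(\text{vertex},t)}\to M[\Psi]$.

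For part \ref{i:rho1} I would fix $t>0$ and take a maximizing sequence $\y_n$ with $\|\Psi\|^2_{B(\y_n,t)}\to\rho(\Psi,t)$; the crux is to rule out escape to infinity. Writing $\y_n=(y_n,k_n)$ and using that there are finitely many edges, I may pass to a subsequence with $k_n\equiv k$; if $y_n\to\infty$ then for $y_n>t$ the ball reduces to the interval $(y_n-t,y_n+t)$ on edge $k$, whose mass $\int_{y_n-t}^{y_n+t}|\psi_k|^2$ tends to $0$ because $\psi_k\in L^2(\erre^+)$, contradicting $\rho(\Psi,t)>0$. Hence the centers remain in a bounded, and thus compact, region of $\GG$, and a further subsequence satisfies $\y_n\to\y^\ast$. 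I would then pass to the limit in the integral: $d(\cdot,\y)$ being $1$-Lipschitz in $\y$, the indicators $\mathbf 1_{B(\y_n,t)}$ converge to $\mathbf 1_{B(\y^\ast,t)}$ off the sphere $\{d(\cdot,\y^\ast)=t\}$, which is a finite set on the star graph and hence null; dominated convergence with majorant $|\Psi|^2$ then gives $\|\Psi\|^2_{B(\y_n,t)}\to\|\Psi\|^2_{B(\y^\ast,t)}$, identifying $\y^\ast$ as the maximizer.

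Finally, part \ref{i:rho2} I would deduce from \ref{i:rho1}. For $s<t$, choosing $\y=\y(\Psi,t)$ to attain $\rho(\Psi,t)$ and using that $\y$ is merely a competitor at radius $s$, I would write
\[
0\le\rho(\Psi,t)-\rho(\Psi,s)\le\|\Psi\|^2_{B(\y,t)}-\|\Psi\|^2_{B(\y,s)}=\int_{B(\y,t)\setminus B(\y,s)}|\Psi|^2.
\]
It then suffices to control the annular integral. A short computation of the measure of $B(\y,\cdot)$ on a star graph shows it is piecewise linear in the radius with slope at most $N$, so $|B(\y,t)\setminus B(\y,s)|\le N(t-s)$ uniformly in $\y$; H\"older's inequality with exponents $p/2$ and $p/(p-2)$ then yields $\int_{B(\y,t)\setminus B(\y,s)}|\Psi|^2\le\|\Psi\|_p^2\,(N(t-s))^{(p-2)/p}$ for $2\le p<\infty$, and the direct bound $\le\|\Psi\|_\infty^2\,N(t-s)$ for $p=\infty$. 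These are exactly \eqref{e:ii} and \eqref{e:ii-infty}, with $c$ depending only on $N$. The only genuinely graph-specific step is confining the maximizing centers in \ref{i:rho1}; the rest parallels the classical argument on $\RE^n$.
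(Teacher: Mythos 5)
Your proof is correct and follows essentially the same route as the paper's: part i) directly from the definition, part ii) by showing that a maximizing sequence of centers must stay bounded and extracting a convergent subsequence, and part iii) by combining the optimality of $\y(\Psi,t)$ (so that $\y(\Psi,t)$ is merely a competitor at radius $s$) with H\"older's inequality on the annulus $B(\y,t)\setminus B(\y,s)$. The only cosmetic differences are that you rule out escaping centers via $L^2$ tail decay along a fixed edge where the paper uses infinitely many disjoint balls each carrying mass at least $\rho(\Psi,t)/2$, that you make explicit the dominated-convergence step proving $\|\Psi\|^2_{B(\y_n,t)}\to\|\Psi\|^2_{B(\y^\ast,t)}$ which the paper leaves implicit, and that you apply H\"older globally using the uniform measure bound $|B(\y,t)\setminus B(\y,s)|\le N|t-s|$ instead of edge by edge.
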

\begin{proof}
Proof of \emph{\ref{i:rho0})}. This follows directly from the definition of
$\|\cdot\|_{B(\y,t)}$ and $\rho(\Psi,t)$. 
 
Proof of \emph{\ref{i:rho1})}.  Let $\{\y_n\}_{n\in \NA}$ be a sequence such
that $\lim_{n\to\infty} \|\Psi\|_{B(\y_n,t)}^2=\rho(\Psi,t)$. To prove
\emph{\ref{i:rho1})} it is enough to prove that $\{\y_n\}_{n\in \NA}$ is
bounded. Assume that $\{\y_n\}_{n\in \NA}$ is not bounded, then there exists a
subsequence $\{\y_{n_k}\}_{k\in\NA}$ such that the balls $B(\y_{n_k},t)$ and
$B(\y_{n_l},t)$ are disjoint for any $k\neq l$, and
$\|\Psi\|_{B(\y_{n_k},t)}^2\geq\rho(\Psi,t)/2$  for all $k$. This is absurd
because it would imply 
\[
\|\Psi\|^2\geq \sum_{k=1}^\infty  \|\Psi\|_{B(\y_{n_k},t)}^2 =\infty\,.
\]
Therefore $\{\y_n\}_{n\in \NA}$ is bounded and consequently has a convergent
subsequence whose limit is $\y(\Psi,t)$.

Proof of \emph{\ref{i:rho2})}. Without loss of generality we can assume $0<s\leq
t<\infty$. Then for any $\y=(y,k)\in\GG$, $B(\y,t)=B(\y,s)\cup \left(B(\y,t)\backslash
B(\y,s)\right)$ and
$\|\Psi\|_{B(\y,t)}^2=\|\Psi\|_{B(\y,s)}^2+\|\Psi\|_{B(\y,t)\backslash
B(\y,s)}^2$.
Therefore, recalling that $\rho(\Psi,\cdot)$ is non-decreasing and 
using \emph{\ref{i:rho1})}, one gets
\begin{align}
|\rho(\Psi,t)-\rho(\Psi,s)| =&
\|\Psi\|_{B(\y(\Psi,t),t)}^2-\|\Psi\|_{B(\y(\Psi,s),s)}^2
\nonumber
\\
=&  \|\Psi\|_{B(\y(\Psi,t),s)}^2+ \|\Psi\|_{B(\y(\Psi,t),t)\backslash
B(\y(\Psi,t),s)}^2-\|\Psi\|_{B(\y(\Psi,s),s)}^2
\nonumber
\\
\label{e:hyper}
\leq& \|\Psi\|_{B(\y(\Psi,t),t)\backslash B(\y(\Psi,t),s)}^2\,,
\end{align}
where we used the fact that $
\|\Psi\|_{B(\y(\Psi,t),s)}^2-\|\Psi\|_{B(\y(\Psi,s),s)}^2 \leq 0$, by
the definition of $\y(\Psi,s)$.
For $2<p<\infty$,
by 
H\"older's inequality 
one has that for any $\y\in\GG$
\begin{align*}
\|\Psi\|_{B(\y,t)\backslash B(\y,s)}^2 \leq&
|t-s|^{\frac{p-2}{p}}\sum_{ j=1}^N  \|\psi_j\|_{L^p(\RE_+)}^2 \leq 
N^{\frac{p-2}{p}} |t-s|^{\frac{p-2}{p}} \|\Psi\|_{p}^2.
\end{align*}
The latter inequality together with  \eqref{e:hyper}
gives \eqref{e:ii} for $2<p<\infty$. For $p=2$, \eqref{e:ii} is a trivial
consequence of $\rho(\Psi,t) \leq M[\Psi]$. Finally, by \eqref{e:hyper}  and
$\|\Psi\|_{B(\y,t)\backslash B(\y,s)}^2 \leq N |t-s| \|\Psi\|_{\infty}$ we get
\eqref{e:ii-infty}, which concludes the proof of  \emph{\ref{i:rho2})}.
\end{proof}

For any sequence $\Psi_n \in L^2$ we define the concentrated mass
parameter $\tau$ as  
\beq
\label{e:end-3}
\tau = \lim_{t\to\infty} \liminf_{n\to\infty} \rho(\Psi_n,t)\,.
\eeq
As pointed out in the introduction $\tau$ plays a key role in
concentration-compactness lemma because it distinguishes the
occurrence of vanishing, dichotomy or compactness in $H^1$-bounded
sequences. The following lemma, that  replicates
Lem. 1.7.5 in  \cite{Caz03} on a star graph, shows that $\tau$ can be computed as
the limit of $\rho$ on a suitable subsequence.

\begin{lemma}
\label{l:175}
Let $m>0$ and  $\{\Psi_n\}_{n\in\NA}$ be such that: $\Psi_n\in H^1$,
\beq
\label{e:end-1}
M[\Psi_n] \to m\,,
\eeq
and
\beq
\label{e:end-2}
\sup_{n\in\NA}\|\Psi_n'\|<\infty\,.
\eeq
Then there exist a subsequence $\{\Psi_{n_k}\}_{k\in\NA}$, a nondecreasing
function $\gamma(t)$, and a sequence $t_k\to\infty$ with the following
properties:
\begin{enumerate}[i)]
\item
\label{i:grass-1}
 $\rho(\Psi_{n_k},\cdot)\to \gamma(\cdot)\in [0,a]$ as $k\to\infty$
  uniformly on bounded sets of $[0,\infty)$.
\item 
\label{i:grass-2}
$\tau =
\lim_{t\to\infty}\gamma(t)=\lim_{k\to\infty}\rho(\Psi_{n_k},t_k)=\lim_{
k\to\infty}\rho(\Psi_{n_k},t_k/2)$.
\end{enumerate}
\end{lemma}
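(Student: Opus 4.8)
The plan is to adapt the proof of Lemma 1.7.5 in \cite{Caz03} to the graph setting, using the \emph{Helly selection theorem} combined with the H\"older continuity established in Proposition \ref{p:rho}. First I would observe that for each fixed $t$ the sequence $\{\rho(\Psi_n,t)\}_{n\in\NA}$ is bounded: indeed by \emph{\ref{i:rho0})} one has $0\leq\rho(\Psi_n,t)\leq M[\Psi_n]$, and since $M[\Psi_n]\to m$ the values $\rho(\Psi_n,t)$ lie in a bounded interval $[0,a]$ uniformly in $n$ and $t$ (one may take $a=\sup_n M[\Psi_n]<\infty$). Each function $\rho(\Psi_n,\cdot)$ is non-decreasing in $t$, so the sequence $\{\rho(\Psi_n,\cdot)\}_n$ is a sequence of uniformly bounded non-decreasing functions on $[0,\infty)$. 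By Helly's selection theorem there exists a subsequence $\{\Psi_{n_k}\}_{k\in\NA}$ and a non-decreasing limit function $\gamma:[0,\infty)\to[0,a]$ such that $\rho(\Psi_{n_k},t)\to\gamma(t)$ pointwise.

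Next I would upgrade pointwise convergence to \emph{uniform} convergence on bounded sets, which gives \emph{\ref{i:grass-1})}. This is exactly where the equicontinuity from Proposition \ref{p:rho} \emph{\ref{i:rho2})} enters. From the Gagliardo--Nirenberg inequality (Proposition \ref{p:GN}) together with the uniform $H^1$ bound \eqref{e:end-1}--\eqref{e:end-2}, the sequence $\{\Psi_{n_k}\}$ is bounded in every $L^p$ with $2\leq p\leq\infty$; in particular one may fix some $p$ (say $p=\infty$ or any finite $p>2$) and invoke \eqref{e:ii}, which yields
\[
|\rho(\Psi_{n_k},t)-\rho(\Psi_{n_k},s)|\leq c\,\big(\sup_k\|\Psi_{n_k}\|_p^2\big)\,|t-s|^{\frac{p-2}{p}}
\]
with a constant independent of $k$. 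Thus the family $\{\rho(\Psi_{n_k},\cdot)\}_k$ is uniformly equicontinuous on $[0,\infty)$, and pointwise convergence of an equicontinuous, uniformly bounded family promotes to uniform convergence on every bounded set. This establishes \emph{\ref{i:grass-1})}.

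Finally, for \emph{\ref{i:grass-2})} I would first note that, since $\gamma$ is non-decreasing and bounded, $\lim_{t\to\infty}\gamma(t)$ exists; the identification of this limit with $\tau=\lim_{t\to\infty}\liminf_{n\to\infty}\rho(\Psi_n,t)$ requires care, because $\tau$ is defined through the \emph{full} sequence whereas $\gamma$ is the limit along the subsequence. One checks that passing to the subsequence does not change the value: by the uniform convergence along $\{n_k\}$ one has $\liminf_{k}\rho(\Psi_{n_k},t)=\gamma(t)$ for each $t$, and the monotone limit of $\gamma$ coincides with $\tau$ by a standard diagonal/interchange-of-limits argument using that $\tau$ is already the infimum over $t$ of a monotone quantity. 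Then I would construct the sequence $t_k\to\infty$ diagonally: choose $t_k$ growing slowly enough (e.g. using \emph{\ref{i:grass-1})} to guarantee $|\rho(\Psi_{n_k},t_k)-\gamma(t_k)|\to0$ and $|\rho(\Psi_{n_k},t_k/2)-\gamma(t_k/2)|\to0$) so that both $\rho(\Psi_{n_k},t_k)$ and $\rho(\Psi_{n_k},t_k/2)$ converge to $\lim_{t\to\infty}\gamma(t)=\tau$. The \textbf{main obstacle} I anticipate is precisely this last bookkeeping: reconciling the subsequential limit $\gamma$ with the quantity $\tau$ defined on the full sequence, and engineering $t_k\to\infty$ slowly enough that the uniform-on-bounded-sets convergence of \emph{\ref{i:grass-1})} can still be applied at the moving endpoints $t_k$ and $t_k/2$. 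Everything else is a routine transcription of the Euclidean argument, the only genuinely graph-specific inputs being the distance $d$, the concentration function \eqref{e:rho}, and the continuity estimate \eqref{e:ii}, all already in place.
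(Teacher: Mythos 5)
Your compactness step (Helly's selection theorem plus the H\"older bound from Prop. \ref{p:rho} \emph{\ref{i:rho2})}, upgraded to uniform convergence on bounded sets) is sound and essentially equivalent to the paper's use of Arzel\`a--Ascoli. The genuine gap is in part \emph{\ref{i:grass-2})}: you extract the subsequence first, by a compactness theorem that knows nothing about $\tau$, and then assert that ``passing to the subsequence does not change the value'', i.e. $\lim_{t\to\infty}\gamma(t)=\tau$. That identity is false for an arbitrary pointwise-convergent subsequence. Only one inequality is automatic: $\gamma(t)=\lim_k\rho(\Psi_{n_k},t)\geq\liminf_n\rho(\Psi_n,t)$, hence $\lim_t\gamma(t)\geq\tau$. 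The reverse can fail: take $\Psi_n$ for even $n$ a fixed profile of mass $m$ concentrated near the vertex, and for odd $n$ two bumps of mass $m/2$ each whose mutual distance diverges (compatible with $M[\Psi_n]\to m$ and $\sup_n\|\Psi_n'\|<\infty$). Then $\liminf_n\rho(\Psi_n,t)$ tends to $m/2$, so $\tau=m/2$, while the even subsequence --- a perfectly legitimate output of Helly's theorem --- has $\gamma(t)\to m$. For that subsequence no choice of $t_k\to\infty$ can rescue the conclusion: by your own slow-growth construction $\rho(\Psi_{n_k},t_k)\to\lim_t\gamma(t)=m>\tau$. Note also that $\tau$ is the increasing limit (supremum) in $t$ of the monotone quantity $\liminf_n\rho(\Psi_n,t)$, not its infimum, so the interchange-of-limits heuristic you invoke points the wrong way.

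The fix is to reverse the order of extraction, which is exactly how the paper proceeds: from the definition \eqref{e:end-3} of $\tau$ as an iterated limit of a liminf, first choose \emph{jointly} a sequence $t_k\to\infty$ and a subsequence $\{\Psi_{n_k}\}$ with $\rho(\Psi_{n_k},t_k)\to\tau$ (a diagonal extraction realizing the liminf at each scale), and only then apply Arzel\`a--Ascoli (or Helly plus your equicontinuity upgrade) to this particular subsequence; any further refinement preserves $\rho(\Psi_{n_k},t_k)\to\tau$. With the subsequence adapted to $\tau$, both inequalities close by monotonicity alone: for fixed $t$ and $k$ large enough that $t_k>t$, $\rho(\Psi_{n_k},t)\leq\rho(\Psi_{n_k},t_k)$ gives $\lim_t\gamma(t)\leq\tau$, while the subsequence-liminf inequality above gives $\lim_t\gamma(t)\geq\tau$; the same two-sided squeeze, using $t_k/2\to\infty$ and $\rho(\Psi_{n_k},t_k/2)\leq\rho(\Psi_{n_k},t_k)$, yields $\lim_k\rho(\Psi_{n_k},t_k/2)=\tau$ with no slow-growth engineering needed. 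The remaining ingredients of your proposal (uniform boundedness, equicontinuity from Gagliardo--Nirenberg together with \eqref{e:ii}--\eqref{e:ii-infty}) match the paper's proof.
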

\begin{proof}
From \eqref{e:end-3} there exist $t_k\to\infty$ such that 
\beq
\label{e:1}
\tau= \lim_{k\to\infty} \rho(\Psi_{n_k},t_k) \,.
\eeq
By $0<\rho(\Psi_{n_k},\cdot)\leq \|\Psi_{n_k}\|^2$ the sequence
$\{\rho(\Psi_{n_k},\cdot)\}_{k\in\NA}$  is uniformly
bounded. Moreover, by Gagliardo-Nirenberg inequality there exists a
constant $c$ such that 
$\|\Psi\|_{\infty} \leq c
\|\Psi'\|^{\frac12}\|\Psi\|^{\frac12}$. From this fact, from assumptions
\eqref{e:end-1} and \eqref{e:end-2}, and from \eqref{e:ii-infty}, it
follows that 
the sequence $\rho(\Psi_{n_k},\cdot)$ is also  equicontinuous. Then by
Arzel\`a - Ascoli theorem there exists a subsequence, which we denote by
$\{\rho(\Psi_{n_k},\cdot)\}_{k\in\NA}$ again, which satisfies
\eqref{e:1} and  converges uniformly to some function $\gamma(\cdot)$
on any bounded 
subset of $[0,\infty)$. Since $\rho(\Psi_{n_k},\cdot)$ is nondecreasing, so is
$\gamma(\cdot)$, and the proof of \emph{\ref{i:grass-1})} is complete.

 To prove  \emph{\ref{i:grass-2})},  first we notice that for any
 fixed $t$
 and $k$ large enough such that  $t_k>t$, and since
 $\rho(\Psi_{n_k},\cdot)$ is non 
decreasing, one has 
\[
\rho(\Psi_{n_k},t)  \leq  \rho(\Psi_{n_k},t_k) \,.
\]  
Taking first the limit  $k\to\infty$, then the limit $t\to\infty$, one has 
\[
\lim_{t\to\infty}\gamma(t) \leq \lim_{k\to\infty}  \rho(\Psi_{n_k},t_k) =\tau\,,
\] 
where we used $\lim_{k\to\infty}\rho(\Psi_{n_k},t)=\gamma(t) $ and
\eqref{e:1}. On the other hand, for every $t>0$,
\[
\liminf_{k\to\infty}\rho(\Psi_{n_k},t)\geq
\liminf_{n\to\infty}\rho(\Psi_{n},t)\,,
\]
and taking the limit for $t\to\infty $
\[
\lim_{t\to\infty}\gamma(t)\geq \tau\,,
\]
where we used again  $\lim_{k\to\infty}\rho(\Psi_{n_k},t)=\gamma(t) $ and
\eqref{e:end-3}. Then $\lim_{t\to\infty}\gamma(t)=\tau$. Moreover, since
$\rho(\Psi_n,\cdot)$ is nondecreasing, 
\[
\limsup_{k\to\infty}  \rho(\Psi_{n_k},t_k/2) \leq \limsup_{k\to\infty} 
\rho(\Psi_{n_k},t_k) =\tau\,.
\] 
On the other hand, for fixed $t>0$ and $k$ large enough one has $t_k/2>t$ and 
\[
 \rho(\Psi_{n_k},t_k/2) \geq  \rho(\Psi_{n_k},t)
\]
taking first the $\liminf_{k\to\infty}$ then the limit for $t\to\infty$ it
follows that 
\[
\liminf_{k\to\infty} \rho(\Psi_{n_k},t_k/2) \geq \lim_{t\to\infty} \gamma(t)
=\tau
\]
then $\lim_{k\to\infty} \rho(\Psi_{n_k},t_k/2)=\tau$, which concludes the proof
of \emph{\ref{i:grass-2})}.
\end{proof}

We are now ready to prove the concentration-compactness lemma.
\begin{lemma}[Concentration-compactness]
\label{l:cc}
Let $m>0$ and $\{\Psi_n\}_{n\in\NA}$ be such
that: $\Psi_n\in\mathcal E$,
\[
M[\Psi_n] \to m\,,
\]
\[
\sup_{n\in\NA}\|\Psi_n'\|<\infty\,.
\]
Then there exists a subsequence $\{\Psi_{n_k}\}$ such that:
\begin{enumerate}[i)]
\item\label{i:cc1}
(Compactness) If $\tau=m$, at least one of
    the two following cases occurs:
\begin{itemize}
\item[$i_1)$](Convergence) There exists a
function
$\Psi\in \mathcal E$  such that $\Psi_{n_k}\to \Psi$  in $L^p$ as
$k\to\infty$ for all $2\leq p\leq \infty$ .
\item[$i_2)$](Runaway) There exists $j^*$, such that for any $j\neq j^*$ and
$2\leq p\leq\infty$
\beq
\label{e:run-1}
\|\psi_{n_k,j}\|_{L^p(\RE_+)} \to 0\,,
\eeq
moreover for any $t>0$
\beq
\label{e:run-2}
\|\Psi_{n_k}\|_{L^p(B(\underline 0,t))} \to 0 \,.
\eeq
\end{itemize}
\item\label{i:cc2} (Vanishing) If $\tau=0$, then  $\Psi_{n_k}\to 0 $ in $L^p$ as
$k\to\infty$  for all $2< p\leq \infty$. 
\item\label{i:cc3} (Dichotomy) If $0<\tau<m$, then there exist two sequences 
$\{\VV_k\}_{k\in\NA}$ and $\{\WW_k\}_{k\in\NA}$ in $\mathcal E$
such that 
\beq
 \supp \VV_k \cap \supp \WW_k = \emptyset 
\label{dic1} 
\eeq
\beq
|\VV_k(x,j)| + |\WW_k(x,j)| \leq |\Psi_{n_k}(x,j)| \qquad
\textrm{for any } j=1,...,N\,;\; x\in\RE_+ 
\label{dic2} 
\eeq
\beq
\| \VV_k \|_{H^1} + \|\WW_k\|_{H^1} \leq c
\|\Psi_{n_k}\|_{H^1}
\label{dic3} 
\eeq
\beq
\lim_{k \to \infty} M [\VV_k ] = \tau \qquad \qquad \lim_{k \to \infty} M[
\WW_k ]= m -\tau
\label{dic4} 
\eeq
\beq
\liminf_{k\to \infty} \lf( \|\Psi_{n_k}'\|^2 - \| \VV_k' \|^2 - \|
\WW_k' \|^2 \ri) \geq 0
\label{dic5}
\eeq
\beq
\lim_{k\to \infty} \lf( \|\Psi_{n_k} \|_{p}^p - \| \VV_k \|_{p}^p - \|
\WW_k \|_{p}^p \ri) =0 \qquad 2 \leq p < \infty
\label{dic6}
\eeq
\beq
\label{dic7}
\lim_{k\to\infty}\left||\Psi_{n_k}(0,j)|^2-  |\VV_{k}(0,j)|^2 -| \WW_{k}(0,j)|^2\right|=0
\qquad
\textrm{for any } j=1,...,N\,.
\eeq
\end{enumerate}
\end{lemma}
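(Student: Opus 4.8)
The plan is to argue case by case according to the value of the concentrated mass $\tau$, following the Lions roadmap but replacing every translation argument by an argument anchored at the vertex. Throughout I would use Proposition \ref{p:rho} and first invoke Lemma \ref{l:175} to pass to a subsequence $\{\Psi_{n_k}\}$, a nondecreasing limit $\gamma$, and radii $t_k\to\infty$ with $\rho(\Psi_{n_k},t_k)\to\tau$ and $\rho(\Psi_{n_k},t_k/2)\to\tau$; I denote by $\y_k=(y_k,j_k)$ a maximizing centre for the radius $t_k/2$, and after a further extraction fix $j_k\equiv j^*$ by pigeonhole. For the \emph{vanishing} case \ref{i:cc2} ($\tau=0$) one has $\rho(\Psi_{n_k},R)\to 0$ for every fixed $R$; I would cover $\GG$ by balls of fixed radius $R$ with finite overlap (only finitely many meet the ball about the vertex, the overlap number being controlled by $N$), and on each ball interpolate through Proposition \ref{p:GN}, writing $\|\Psi\|_{L^p(B)}^p\le\|\Psi\|_{L^\infty(B)}^{p-2}\|\Psi\|_{L^2(B)}^2$ and bounding $\|\Psi\|_{L^\infty(B)}$ by $\|\Psi\|_{H^1(B)}^{1/2}\|\Psi\|_{L^2(B)}^{1/2}$. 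Summing over the cover produces a positive power of $\rho(\Psi_{n_k},R)$ in front of bounded $H^1$ factors, giving $\|\Psi_{n_k}\|_p\to 0$ for one $p\in(2,\infty)$, whence the full range $2<p\le\infty$ by interpolating against the bounded $H^1$ norm (Proposition \ref{p:GN} with $q=\infty$).

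For the \emph{compactness} case \ref{i:cc1} ($\tau=m$) I would extract a weak $H^1$ limit $\Psi_{n_k}\rightharpoonup\Psi$; the compact embedding $H^1(B(\underline{0},R))\hookrightarrow L^p(B(\underline{0},R))$ on bounded balls gives $\Psi_{n_k}\to\Psi$ in every $L^p(B(\underline{0},R))$ and lets the vertex continuity conditions pass to the limit, so $\Psi\in\mathcal E$. Writing $\mu_0=M[\Psi]\le m$, the decisive observation is that $\tau=m$ forces $\mu_0\in\{0,m\}$: fixing $R_0$ with $\|\Psi\|^2-\|\Psi\|_{B(\underline{0},R_0)}^2<\epsilon$ and using the graph triangle inequality, any ball $B(\y,t)$ either lies in $B(\underline{0},R_0+2t)$ (when its centre is within $R_0+t$ of the vertex) or is disjoint from $B(\underline{0},R_0)$; hence $\rho(\Psi_{n_k},t)\le\max\{\|\Psi_{n_k}\|_{B(\underline{0},R_0+2t)}^2,\ \|\Psi_{n_k}\|^2-\|\Psi_{n_k}\|_{B(\underline{0},R_0)}^2\}$, and letting $k\to\infty$ then $\epsilon\to 0$ yields $\tau\le\max\{\mu_0,m-\mu_0\}$, which is $<m$ unless $\mu_0\in\{0,m\}$. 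If $\mu_0=m$, weak convergence together with $\|\Psi_{n_k}\|\to\|\Psi\|$ upgrades to strong $L^2$ convergence and then to $L^p$, $2<p\le\infty$, by Proposition \ref{p:GN}; this is the convergent subcase $i_1)$. If $\mu_0=0$, then $\Psi\equiv0$ gives \eqref{e:run-2} at $p=2$ and, by the local Gagliardo--Nirenberg estimate above, for all $p$; applying the same single-ball bound to the edgewise masses $\|\psi_{n_k,j}\|^2$ (for large $t$ the optimal ball cannot sit near the vertex, since $\gamma(t)\to m>0$, so it lies inside one edge) gives $\tau\le\max_j\lim_k\|\psi_{n_k,j}\|^2$, so exactly one edge $j^*$ keeps mass $m$ and \eqref{e:run-1} holds, which is the runaway subcase $i_2)$.

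For the \emph{dichotomy} case \ref{i:cc3} ($0<\tau<m$) I would localize about $\y_k$. Taking cutoffs $\theta_k(\x)=\theta(d(\x,\y_k))$ equal to $1$ on $B(\y_k,t_k/2)$ and $\zeta_k(\x)=\zeta(d(\x,\y_k))$ equal to $1$ outside $B(\y_k,t_k)$, with disjoint supports and slopes of order $1/t_k$, I set $\VV_k=\theta_k\Psi_{n_k}$ and $\WW_k=\zeta_k\Psi_{n_k}$; both lie in $\mathcal E$ because each cutoff takes a single value at the vertex. Then \eqref{dic1}--\eqref{dic3} are immediate from $0\le\theta_k,\zeta_k\le1$, disjoint supports, and the product rule. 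For \eqref{dic4} I would squeeze with the monotonicity of $\rho$: since $\rho(\Psi_{n_k},t_k/2)$ and $\rho(\Psi_{n_k},t_k)$ both tend to $\tau$, the mass of $\Psi_{n_k}$ in the annulus $B(\y_k,t_k)\setminus B(\y_k,t_k/2)$ tends to $0$, pinning $M[\VV_k]\to\tau$ and $M[\WW_k]\to m-\tau$. For \eqref{dic5} the error terms $\int|\theta_k'|^2|\Psi_{n_k}|^2$ and the cross terms are $O(1/t_k)$ by the slope bound, and $\theta_k^2+\zeta_k^2\le1$ yields the stated inequality; for \eqref{dic6} the discrepancy is supported in the annulus and controlled by $\|\Psi_{n_k}\|_\infty^{p-2}$ times the vanishing annular $L^2$-mass, using that $\|\Psi_{n_k}\|_\infty$ is bounded by Proposition \ref{p:GN}.

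The hard part will be the vertex condition \eqref{dic7}, which has no analogue in the translation-invariant theory and is exactly what dictates how the cutoffs must be placed. Because the vertex sits at the single distance $y_k$ from $\y_k$ while the annulus $[t_k/2,t_k]$ has width tending to infinity, there is always a side of $y_k$ inside the annulus of width at least $t_k/4$; I would place \emph{both} transitions there (keeping the disjointness gap and the $O(1/t_k)$ slopes) so that the vertex falls in a plateau where either $\theta_k\equiv1,\ \zeta_k\equiv0$ or $\theta_k\equiv0,\ \zeta_k\equiv1$ near it. Then $|\Psi_{n_k}(0,j)|^2-|\VV_k(0,j)|^2-|\WW_k(0,j)|^2$ vanishes identically for large $k$, giving \eqref{dic7}. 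I expect this adaptive placement of the cutoffs, together with the companion single-ball estimate that forces the convergence-or-runaway alternative in the compact case, to be the two genuinely non-standard steps of the argument; everything else is a faithful transcription of the Euclidean proof with the graph distance $d$ and the concentration function $\rho$ in place of their $\RE^n$ counterparts.
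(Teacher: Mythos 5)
Your proposal is correct, and it shares the paper's overall architecture (the concentration function $\rho$, the subsequence/radii of Lem.~\ref{l:175}, and distance-based cutoffs at the scales $t_k/2$, $t_k$ centred at a maximizing point), but two sub-arguments take a genuinely different route. First, in the compactness case $\tau=m$ the paper splits according to whether the maximizing centres $\underline{y_k}(t_{m/2})$ stay bounded or escape to infinity, handling the second case by translating along the distinguished edge and re-running the compactness argument on the half-line; you instead extract a weak $H^1$ limit $\Psi$ once and for all and show, via the vertex-anchored estimate $\rho(\Psi_{n_k},t)\le\max\{\|\Psi_{n_k}\|^2_{B(\underline 0,R_0+2t)},\,\|\Psi_{n_k}\|^2-\|\Psi_{n_k}\|^2_{B(\underline 0,R_0)}\}$, that $\tau\le\max\{M[\Psi],m-M[\Psi]\}$, so $\tau=m$ forces $M[\Psi]\in\{0,m\}$ --- a cleaner dichotomy that turns the lack of translation invariance into an asset rather than an obstacle. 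The one loose end there is in your runaway subcase: to obtain a single $j^*$ in \eqref{e:run-1} valid for all scales $t$ simultaneously you still need the analogue of \eqref{e:cmin} (two balls at different scales each carrying more than half the mass must intersect, so the escaping centres eventually lie on one common edge); this is routine with tools you already use, but should be said. Second, for the vertex condition \eqref{dic7} you adaptively slide both cutoff transitions into the portion of the annulus $[t_k/2,t_k]$ on one side of the vertex, so that near the vertex either $\theta_k\equiv 1$ or $\zeta_k\equiv 1$ and \eqref{dic7} holds as an exact identity for large $k$; this works (the annulus width $t_k/2$ always leaves room of order $t_k/4$ on one side of the vertex, the slopes remain $O(1/t_k)$, and the cutoffs, being functions of $d(\cdot,\underline{y}_k)$, take a single value at the vertex, so $\VV_k,\WW_k\in\EE$ and \eqref{dic4}--\eqref{dic6} are unaffected), but it is more machinery than needed: the paper keeps the fixed cutoffs and simply observes that $Z_k=\Psi_{n_k}-\VV_k-\WW_k$ satisfies $\|Z_k\|_{H^1}\le c$ and $\|Z_k\|\to 0$, hence $\|Z_k\|_{\infty}\to 0$ by Gagliardo--Nirenberg (Prop.~\ref{p:GN}), so the vertex values of $Z_k$ vanish in the limit and \eqref{dic7} follows from the pointwise disjointness of the supports of $\VV_k$ and $\WW_k$. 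Thus the step you flagged as ``the hard part dictating the cutoff placement'' in fact requires no adaptive placement at all, though your exact-identity variant is a perfectly valid alternative. Your vanishing argument and the remaining dichotomy items coincide in substance with the paper's (unit-scale covering plus local Gagliardo--Nirenberg; squeezing $\rho(\Psi_{n_k},t_k/2)\le M[\VV_k]\le\rho(\Psi_{n_k},t_k)$ to kill the annular mass).
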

\begin{proof}
Let $\{\Psi_{n_k}\}_{k\in\NA}$, $\gamma(\cdot)$ and $t_k$  be the subsequence, the
function and the sequence defined in Lem. \ref{l:175}.  

Proof of \emph{\ref{i:cc1})}. Suppose $\tau=m$. By Lem. \ref{l:175}
\emph{\ref{i:grass-2})},  for any  $m/2\leq\lambda< m$ there exists $t_\la$
large enough such that $\gamma(t_\la)>\la$. Then by Lem.  \ref{l:175}
\emph{\ref{i:grass-1})}, for $k$ large enough $\rho(\Psi_{n_k},t_\la)>\la$. 

Set $\underline{y_k}(t)\equiv \underline{y}(\Psi_{n_k},t)$, where
$\underline{y}(\Psi_{n_k},t)$ was defined in Prop. \ref{p:rho}
\emph{\ref{i:rho1})}. For $k$ large enough, we have that 
\beq
\label{e:cmin}
d(\underline{y_{k}}(t_{m/2}),\underline{y_{k}}(t_{\la})) \leq t_{m/2}+t_\la\,. 
\eeq
To prove \eqref{e:cmin}, assume 
$d(\underline{y_{k}}(t_{m/2}),\underline{y_{k}}(t_{\la})) > t_{m/2}+t_\la$,
then the balls $B(\underline{y_k}(t_{m/2}),t_{m/2})$ and
$B(\underline{y_k}(t_\la),t_{\la})$ would be disjoint, thus implying  
\[
M[\Psi_{n_k}]\geq \|\Psi_{n_k}\|_{B(\underline{y_k}(t_{m/2}),t_{m/2})}^2 +
\|\Psi_{n_k}\|_{B(\underline{y_k}(t_\la),t_{\la})}^2 > \frac{m}{2}+\la \geq m
\]
which is impossible because $M[\Psi_{n_k}]\to m$.
Next we distinguish two cases: $\underline{y_{k}}(t_{m/2})$ bounded
and $\underline{y_k}(t_{m/2})$ unbounded.

Case $\underline{y_{k}}(t_{m/2})$ bounded. We first recall that
$\Psi_{n_k}(\cdot,j)\in H^1(\RE_+)$, then by \cite[Th. VIII.5]{Bre83}
we can extend each $\Psi_{n_k}(\cdot,j)$ to  an even  function
$\widetilde\Psi_{n_k}(\cdot,j)\in H^1(\RE)$,  in such a way that
the sequence $\widetilde\Psi_{n_k}(\cdot,j)$ is uniformly bounded in $
H^1(\RE)$.  Applying
\cite[Cor. 5.5.2 and Lem. 5.5.3, see also Th. 5.1.8]{Caz06} to each sequence  $\{\widetilde\Psi_{n_k}(\cdot,j)\}_{k\in\NA}$ we get that there exist $\widetilde\Psi(\cdot,j)\in H^1(\RE)$ such that, up to taking a subsequence,  $\widetilde\Psi_{n_k}(\cdot,j)\to \widetilde \Psi(\cdot,j)$ in $L^2([-A,A])$ for any $A>0$. Restricting each $\widetilde\Psi_{n_k}(\cdot,j)$ and  $\widetilde \Psi(\cdot,j)$  to $\RE_+$ we get that there exists $\Psi\in
H^1$ and a subsequence, which we still
denote by $\{\Psi_{n_k}\}_{k\in\NA}$,
such
that  $\Psi_{n_k}\to \Psi$ in $L^2(B(\underline y,t))$, for any fixed
$\underline y$ and $t$. Moreover, again by \cite[Lem. 5.5.3]{Caz06}, we have that $\widetilde\Psi_{n_k}(\cdot,j)$ converges to $\widetilde \Psi(\cdot,j)$ weakly in $H^1(\RE)$. Then by the Rellich-Kondrashov theorem \cite[Th. 8.9]{LL01}, $\widetilde\Psi_{n_k}(0,j)$ converges to $\widetilde \Psi(0,j)$. Since $\Psi_{n_k}\in\EE$ one has  $\widetilde\Psi_{n_k}(0,j)=\widetilde\Psi_{n_k}(0,j')$, then the same is true also for $\widetilde \Psi(0,j)$, thus implying  $\Psi\in\EE$. The function $\Psi$ might be the null function, next we show that for $\underline{y_{k}}$ bounded this is not the case. We prove indeed that $M[\Psi]=m$ and therefore $\Psi_{n_k} \to
\Psi$ in $L^2$. 
Fix $\lambda \in (m/2,m)$, and let
  $t_\lambda$ be such that $\rho (\Psi_{n_k}, t_\lambda) > \lambda$
  eventually in $k$. Since, by \eqref{e:cmin},
  $\underline{y_{k}} (t_\lambda)$ is bounded,  up to choosing a
subsequence which we  still denote by $\Psi_{n_k}$, we
can
assume that $\underline{y_{k}}(t_\lambda)\to
\underline{y}^*(t_\lambda)$ and  $\underline{y_{k}}(t_{m/2})\to
\underline{y}^*(t_{m/2})$.  Then, fixed $\ve>0$,  for $k$ large enough
we
have $d(\underline{y}^*(t_{m/2}),\underline{y_k}(t_{m/2}))\leq
\ve$, so that, by
\eqref{e:cmin} and the triangle inequality, $d(\underline{y}^*(t_{m/2}),
\underline{y_k}(t_\la)) \leq \ve + t_{m/2} + t_\la$. Setting $T= 2(\ve +
t_{m/2} + t_\la) $ we certainly have that
$B(\underline{y_k}(t_\la),t_\la)\subseteq B(\underline{y}^*(t_{m/2}),T)$ so that
\beq
\label{e:bren}
\|\Psi_{n_k}\|_{B(\underline{y}^*(t_{m/2}),T)}^2 \geq  
\|\Psi_{n_k}\|_{B(\underline{y_k}(t_{\la}),t_\la)}^2 =
\rho(\Psi_{n_k},t_\la)>\la\,. 
\eeq
 Then by inequality \eqref{e:bren} and 
 since 
\[
M[\Psi] \geq \|\Psi\|_{B(\underline{y}^*(t_{m/2}),T)}^2 =\lim_{k\to\infty}
\|\Psi_{n_k}\|_{B(\underline{y}^*(t_{m/2}),T)}^2
\]
 we have that  $M[\Psi] \geq \la$. As we can
   choose $\lambda$ arbitrarily close to $m$, we get $M[\Psi] \geq
m$. On the other hand, by weak convergence,  we have that 
\[
M[\Psi] \leq \liminf_{k\to\infty}M[\Psi_{n_k}]=m\,.
\]
So that $M[\Psi] =m$ and by \cite[Lem. 5.5.3]{Caz06} we get $\Psi_{n_k} \to
\Psi$ in $L^2$. The convergence in $L^p$ for $2<p\leq \infty$ follows from
Gagliardo-Nirenberg inequality. 

Assume now that  $\underline{y_{k}}(t_{m/2})$ is
unbounded. We shall adapt the argument used in
  the case of $\underline{y_{k}}(t_{m/2})$
  bounded. Denote
$\underline{y_k}(t_{m/2})=(y_k(t_{m/2}),j_k(t_{m/2}))$.  Up to choosing a subsequence which we still
denote by $\Psi_{n_k}$, we
can
assume that there exists $j^*$ such that  $j_k(t_{m/2})=j^*$ and
$y_k(t_{m/2})\to\infty$.
Take  $m/2<\la<m$, set  $T_{max}>4
\max\{t_\la,t_{m/2}\}$  and notice that, due to
  \eqref{e:cmin}, the sequence $\underline{y_{k}}(t_{\lambda})$ diverges 
  on the ${j^*}$-th edge.
Define $\widetilde \psi_{n_k}\in L^2(\RE_+)$ by
\[
\widetilde \psi_{k}(x) = \psi_{j^*,n_k}(x+y_k(t_{m/2})-T_{max})\,.
\]
We notice that for 
$k$ large enough
\beq
\label{e:ever}
\rho(\Psi_{n_k},t_\la) = \|\Psi_{n_k}\|_{B(\underline{y_k}(t_\la),t_\la)}^2 = 
 \|\psi_{j^*,n_k}\|^2_{L^2((y_k(t_\la)-t_\la,y_k(t_\la)+t_\la))} \,,
\eeq
then by an argument similar to the one used above  we
have that, for $T=2(t_{m/2}+t_\la)$ and using the fact that $T_{max} > T$,
\[
\|\widetilde \psi_k\|^2_{L^2((T_{max}-T,T_{max}+T))} \geq
\|\psi_{j^*,n_k}\|^2_{L^2((y_k(t_\la)-t_\la,y_k(t_\la)+t_\la))} > \la
\]
where in the latter inequality we used equation \eqref{e:ever}. Applying
\cite[Cor. 5.5.2 and Lem. 5.5.3]{Caz06} to $\RE_+$,  we get that there exists
$\psi\in
H^1(\RE_+)$ and a subsequence, which we still denote by
$\{\widetilde \psi_{k}\}_{k\in\NA}$,
such
that  $\widetilde \psi_{k}\to \psi$ in $L^2((T_{max}-T,T_{max}+T))$, for any
fixed
$T_{max}>T$. Then, following what was done in the case $\underline{y_k}$ bounded,
we prove
that $ \|\psi\|^2_{L^2(\RE_+)} =m$ and by \cite[Lem. 5.5.3]{Caz06} we get
$\widetilde \psi_{k} \to
\psi$ in $L^2(\RE_+)$. Also in this case the convergence $\widetilde \psi_{k}
\to
\psi$  in $L^p(\RE_+)$ for
$2<p\leq \infty$ follows from
Gagliardo-Nirenberg inequalities.

\n
To get \eqref{e:run-1} and
\eqref{e:run-2} for $p=2$ we notice that for any $\ve>0$ and $k$ large
enough $M[\Psi_{n_k}] < m+\ve$. Set $\la=m-\ve$. From the discussion above in
the unbounded case we deduce that for any $t$ and  $k$ large enough
$y_k(t_{m/2}) - T_{max} > t$, moreover
\[
 \int_{t}^\infty |\psi_{n_k,j^*}(x)|^2 dx \geq 
\int_{y_k(t_{m/2})-T_{max}}^\infty |\psi_{n_k,j^*}(x)|^2 dx =  
 \|\widetilde \psi_k\|_{L^2(\RE_+)} > \la=m-\ve\,.
\] 
Then, by 
\[
M[\Psi_{n_k}] = \sum_{j\neq j^*} \|\psi_{n_k,j}\|^2_{L^2(\RE_+)} +
 \int_{0}^t |\psi_{n_k,j^*}(x)|^2 dx + \int_{t}^\infty |\psi_{n_k,j^*}(x)|^2
dx
< m+\ve 
\]
we get 
\[
\sum_{j\neq j^*} \|\psi_{n_k,j}\|^2_{L^2(\RE_+)} +
 \int_{0}^t |\psi_{n_k,j^*}(x)|^2 dx
< 2\ve \,.
\]
The limits \eqref{e:run-1} and
\eqref{e:run-2} for  $p>2$
follow by  Gagliardo-Nirenberg inequalities.  
\\

Proof of \emph{\ref{i:cc2})}. Suppose $\tau=0$.  By Lem. \ref{l:175},
$\tau=\lim_{k\to\infty}\rho(\Psi_{n_k},t_k)=0$.  Then since $\rho(\Psi,\cdot)$
is non-decreasing, $\lim_{k\to\infty}\rho(\Psi_{n_k},1)=0$. By H\"older
inequality: for $2< r< 6$, $\|\Psi\|_{r} \leq
\|\Psi\|_{6}^{\frac{3(r-2)}{2r}}\|\Psi\|^{\frac{(6-r)}{2r}}$ and, for
$6< r< \infty$, $\|\Psi\|_{r} \leq
\|\Psi\|_{6}^{\frac{6}{r}}\|\Psi\|_{\infty}^{\frac{r-6}{r}}$.
We claim that 
\beq
\label{e:no}
\|\Psi\|_{6}^6 \leq c \rho(\Psi,1)^2\left(\|\Psi'\|^2+\|\Psi\|^2\right)
\eeq
Then \emph{\ref{i:cc2})} follows by recalling that by \cite[Th. VIII.7]{Bre83}  one
has $\|\Psi\|_{\infty}\leq c  (\|\Psi\|+\|\Psi'\|)$. Moreover by Gagliardo-Nirenberg inequalities one has $\|\Psi\|_{\infty} \leq c \|\Psi'\|^a\|\Psi\|_{p}^{1-a}$ for some $p>2$, then $\Psi_{n_k}\to 0 $ in $L^\infty$ as well.

\n Next we prove inequality \eqref{e:no}. Let $\{I_k\}_{k=1}^\infty$ be a sequence of unit intervals such that $I_j\cap
I_k=\emptyset$ for all $j\neq k$ and $\overline{\cup_{k=1}^\infty I_k} = \RE_+$.
Moreover, let us denote by $\|\Psi\|_{L^p(\GG_k)}$ the norm defined by
$\|\Psi\|_{L^p(\GG_k)}^p = \sum_{j=1}^N \|\psi_j\|_{L^p(I_k)}^p$, $1\leq p \leq
\infty$. 
By \cite[Th. VIII.7]{Bre83} there exists a constant $c$ such that
$\|\psi_j\|_{L^\infty(I_k)} \leq c 
(\|\psi_j\|_{L^1(I_k)}+\|\psi_j'\|_{L^1(I_k)})$, where $c$ is
independent of
$k$. Then $\|\Psi\|_{L^\infty(\GG_k)} = \sup_{1\leq j\leq N }
\|\psi_j\|_{L^\infty(I_k)} \leq c 
\sum_{j=1}^N(\|\psi_j\|_{L^1(I_k)}+\|\psi_j'\|_{L^1(I_k)}) = c 
(\|\Psi\|_{L^1(\GG_k)}+\|\Psi'\|_{L^1(\GG_k)})$, where $c$ is the same constant
as above. Changing $\Psi$ into $|\Psi|^2\equiv(|\psi_1|^2,...,|\psi_N|^2)$ in
the latter inequality and using $|\nabla |\psi_j|^2|\leq 2|\psi_j| |\nabla
\psi|$ and H\"older's inequality we get  $\|\Psi\|_{L^\infty(\GG_k)}^2 \leq   c
 \sum_{j=1}^N\|\psi_j\|_{L^2(I_k)}(\|\psi_j\|_{L^2(I_k)}+\|\psi_j'\|_{L^2(I_k)})
\leq    c  \|\Psi\|_{L^2(\GG_k)}
(\|\Psi\|_{L^2(\GG_k)}+\|\Psi'\|_{L^2(\GG_k)})$. Finally, from the latter
inequality, using H\"older's inequality $\|\Psi\|_{L^6(\GG_k)}^6\leq
\|\Psi\|_{L^2(\GG_k)}^2\|\Psi\|_{L^\infty(\GG_k)}^4$, we end up with 
\[
\|\Psi\|_{L^6(\GG_k)}^6\leq  c  \|\Psi\|_{L^2(\GG_k)}^4 
(\|\Psi\|_{L^2(\GG_k)}^2+\|\Psi'\|_{L^2(\GG_k)}^2)
\]
where the constant $c$ is the same as above and is independent of $k$. Summing
on $k$ we get 
\[
\|\Psi\|_{6}^6\leq c  \sup_{k}\|\Psi\|_{L^2(\GG_k)}^4 
(\|\Psi\|^2+\|\Psi'\|^2)
\leq  c N^2  \rho(\Psi,1)^2 (\|\Psi\|^2+\|\Psi'\|^2).
\]

Proof of \emph{\ref{i:cc3})}. Let $\theta$ and $\varphi$ be two cut-off functions such that $\theta,\varphi\in C^{\infty} (\RE )$, 
$0\leqslant \theta , \varphi \leqslant 1$ and 
\[
\theta(t) = 
\begin{cases}
1 & 0\leqslant |t| \leqslant 1/2 \\
0 & |t| \geqslant 3/4 
\end{cases}
\qquad \qquad 
\varphi(t) = 
\begin{cases}
0 & 0\leqslant |t| \leqslant 3/4 \\
1 & |t| \geqslant 1 
\end{cases} 
\]
Take $ t_k$ as in equation \eqref{e:1} and set $\underline{y}(t_k)\equiv \underline{y}(\Psi_{n_k},t_k)$, where
$\underline{y}(\Psi_{n_k},t)$ was defined in Prop. \ref{p:rho}
\emph{\ref{i:rho1})}. We shall write $\underline{y}(t_k)=(y(t_k),j(t_k))$. Define  the following cut off functions
\[
\theta_k (x) = \theta \lf( \frac{x - y( t_k /2)}{t_k} \ri) \qquad \qquad 
\varphi_k (x) = \varphi \lf( \frac{x - y( t_k /2)}{t_k} \ri) 
\]
and 
\[
\tilde \theta_k (x) = \theta \lf( \frac{x + y( t_k /2)}{t_k} \ri) \qquad \qquad 
\tilde \varphi_k (x) = \varphi \lf( \frac{x + y( t_k /2)}{t_k} \ri) .
\]
Let $\VV_k=(\VV_k(\cdot,1),...,\VV_k(\cdot,N))$ be defined by 
\[
\begin{aligned}
&\VV_{k}(x,j(t_k/2)) = \theta_k(x)\Psi_{n_k}(x,j(t_k/2))\\
&\VV_{k}(x,l) = \tilde \theta_k(x)\Psi_{n_k}(x,l) \qquad\textrm{for any} \quad l\neq j(t_k/2).
\end{aligned}
\]
Moreover, let  $\WW_k=(\WW_k(\cdot,1),...,\WW_k(\cdot,N))$  be  defined by
\[
\begin{aligned}
&\WW_{k}(x,j(t_k/2)) = \varphi_k(x)\Psi_{n_k}(x,j(t_k/2))\\
&\WW_{k}(x,l) = \tilde\varphi_k(x)\Psi_{n_k}(x,l) \qquad\textrm{for
    any} \quad l\neq j(t_k/2). 
\end{aligned}
\]
We remark that $\VV_k$ ($\WW_k$ resp.) coincides with $\Psi_{n_k}$ in
the ball $B(\underline y(t_k/2),t_k/2)$ (in the set $\GG\backslash
B(\underline y(t_k/2),t_k)$ resp.) and $\VV_k = 0$ ($\WW_k = 0$ resp.)
in  the set $\GG\backslash B(\underline y(t_k/2),3t_k/4)$ (in the ball
$B(\underline y(t_k/2),3t_k/4)$ resp.). Properties \eqref{dic1},
\eqref{dic2} and \eqref{dic3} are immediate. Next we notice that by
Prop. \ref{p:rho}, \emph{\ref{i:rho1})}, 
\[
\rho ( \Psi_{n_k} , t_k /2 )  = \|\Psi_{n_k}\|_{B(\underline y(t_k/2),t_k/2)}^2 
\leqslant M[V_k]\,. 
\]
Moreover, since  $\theta(t) \leq 1$,  
\[
M[V_k] \leq  \|\Psi_{n_k}\|_{B(\underline y(t_k/2),t_k)}^2 \leq \|\Psi_{n_k}\|_{B(\underline y(t_k),t_k)}^2 = \rho( \Psi_{n_k} , t_k)\,,
\]
where in the latter inequality we have taken into account 
 the optimality of $y(t_k)$ according to Prop. \ref{p:rho}, \emph{\ref{i:rho1})} and to the definition of $\rho(\Psi,t)$. Therefore 
\be
\lim_{k\to \infty}M[V_k] = \tau
\ee
by Lem. \ref{l:175}, \emph{\ref{i:grass-2})}. Now put $Z_k \equiv \Psi_{n_k} - V_k - W_k $ and notice that $\supp Z_k \subseteq B(\underline y(t_k/2),t_k)\backslash B(\underline y(t_k/2),t_k/2)$ and  $|Z_k|\leqslant |\Psi_{n_k}|$, to be understood pointwise.
Then one has 
\begin{align}
M[Z_k] \leq &  \|\Psi_{n_k}\|^2_{B(\underline y(t_k/2),t_k)\backslash B(\underline y(t_k/2),t_k/2)}
\nonumber
\\
= & \|\Psi_{n_k}\|^2_{B(\underline y(t_k/2),t_k)} - 
     \|\Psi_{n_k}\|^2_{B(\underline y(t_k/2),t_k/2)}
\leq \rho(\Psi_{n_k} , t_k ) - \rho(\Psi_{n_k} , t_k /2) 
\label{zampone}
\end{align}
again by the optimality properties of $\underline y(t_k)$. It follows from \eqref{zampone} and Lem. \ref{l:175}, \emph{\ref{i:grass-2})}  that $M[Z_k]\to 0 $, 
and therefore $M[W_k]\to m -\tau$ which concludes the proof of \eqref{dic4}.
Equation \eqref{dic6} follows by  
\[
\lf|
| \Psi_{n_k} |^p - |V_k|^p -|W_k|^p
\ri| \leqslant c |\Psi_{n_k}|^{p-1} |Z_k|\,,
\]
to be understood pointwise, and  H\"older inequality. Moreover,  since
$\|Z_k\|_{H^1} \leq c$ and $\|Z_k\|\to 0$ then 
$\|Z_k\|_{L^\infty} \to 0$ by Gagliardo-Nirenberg inequality.
Therefore
\[
|\Psi_{n_k}(0,j)|^2\equiv |\ZZ_{k}(0,j)-\VV_{k}(0,j)-\WW_{k}(0,j)|^2\to
|\VV_{k}(0,j)+\WW_{k}(0,j)|^2 =  |\VV_{k}(0,j)|^2 +| \WW_{k}(0,j)|^2\,,
\]
from which \eqref{dic7}.

Concerning the  inequality \eqref{dic5},
first notice that
\[
\begin{aligned}
&|\Psi_{n_k}'(\cdot,j)|^2 - |\VV_k'(\cdot,j)|^2 - |\WW_k'(\cdot,j)|^2 \\
=  &|\Psi_{n_k}'(\cdot,j)|^2 (1-\theta_k^2 -\varphi_k^2) 
-|\Psi_{n_k}(\cdot,j)|^2 (|\theta_k '|^2 + |\varphi_k '|^2) 
- \Re(\overline\Psi_{n_k}(\cdot,j) \Psi_{n_k}'(\cdot,j)\,(\theta_k^2 +\varphi_k^2 )' \\
\geqslant &-\frac{c}{t_k^2} |\Psi_{n_k}(\cdot,j)|^2
- \frac{c}{t_k} |\Psi_{n_k}'(\cdot,j) | |\Psi_{n_k}(\cdot,j)|\,.
\end{aligned}
\]
Summing up on $j$ we obtain \eqref{dic5}.
\end{proof}

\section{Constrained energy minimization} \label{sec4}

In this section we prove that for a small enough mass there  exists a
solution to the constrained energy minimization problem. The proof is
inspired by the work of Cazenave-Lions for the NLS in $\RE$, see in
particular Prop. 8.3.6 in \cite{Caz03}. Nevertheless, due to the lack
of translational invariance and to the presence of a singular
potential well in the vertex, several non trivial changes will be
necessary. Some adjustments were already implemented in the
concentration-compactness lemma, to resolve the ambiguity of the case
$\tau=m$. To prove Th. \ref{t:prob1}, another major adjustment will be
necessary, i.e. we have to prove that runaway subsequences are not minimizing if the mass is small enough.
To prove the existence of a minimizer of $E$,  we use the concentration-compactness result as follows. We
assume that $\{\Psi_n\}_{n\in \NA}$ is such that $M[\Psi_n] \to m$,
$\|\Psi_n\|_{H^1}$ is bounded and $\{\Psi_n\}_{n\in \NA}$ is a minimizing
sequence for the energy functional, thus any subsequence of
$\{\Psi_n\}_{n\in\NA}$ is a minimizing sequence as well. By using the energy
functional we prove that the concentrated mass parameter $\tau$ of a minimizing
sequence must equal $m$, so that for minimizing sequences the vanishing and
dichotomy cases cannot occur. Then, if $\{\Psi_n\}_{n\in\NA}$ is a minimizing sequence,
we are in the compactness case. 
In order to distinguish between the two subcases of convergence and
runaway, we prove
that there exists a
critical value of the mass $m^*$ such that  if $m<m^*$ then the infimum of the
energy functional is attained by convergent sequences. 
The explicit expression of $m^\ast$ comes from the knowledge of the stationary states of equation \eqref{schrod}
obtained in \cite{[ACFN4]}. 
If a minimizing sequence is runaway, then we find that there is no
minimum of the energy but only an infimum value, as runaway sequences
weakly converge 
to 0. An example of this behavior for cubic nonlinearity  ($\mu=1$)
and for the case $\alpha=0$ (the so called Kirchhoff or free quantum
graph) was explicitly worked out in \cite{[ACFN2]}. Here it is shown
that the  phenomenon is more general and that a sufficiently deep
potential well at the vertex, i.e. $\al$ negative enough, is needed
in order to prevent a minimizing sequence from escaping to
infinity. We remark that 
apart from the explicit estimate of the bound on the threshold, made possible
by the choice of a delta vertex, the behavior discovered and
studied here appears to be simple and general.


\begin{proof}[Proof of Theorem \ref{t:prob1}]
We prove first that $0<\nu<\infty$. Take $\Psi\in\EE$ such that  $M[\Psi] = m$ and define $\Psi_\la=(\psi_{\la,1},...,\psi_{\la,N})$ with 
$\psi_{\la,j}(x)=\la^{\frac12}\psi_j(\la x)$. Then $\Psi\in\EE$ and $M[\Psi_\la] = m$ as well. It is easy to see that for $0<\mu<2$ and  $\al<0$, one can take $\la$ small enough so that  $E[\Psi_\la]<0$, then $\nu>0$.

\n 
To prove that $\nu<+\infty$ we use first Gagliardo-Nirenberg inequalities which give 
\[
 \|\Psi\|_{{2\mu+2}}^{2\mu+2} \leq c \|\Psi'\|^{\mu} \|\Psi\|^{2+\mu}
\]
and 
\[
 |\psi_j(0)|^2\leq \|\Psi\|_{\infty}^2\leq c  \|\Psi'\|\|\Psi\|\,.
\]
Then, if $M[\Psi] = m$ we have 
\[
 E[\Psi]
 \geq \frac{1}{2}\| \Psi' \|^2 - \frac{m^{\frac{2+\mu}{2}}}{2\mu+2}c \|\Psi'\|^{\mu}
  - c \sqrt m \frac{|\alpha|}{2} \|\Psi'\|\,.
\]
We notice that for any $a,b,c>0$ and $0< \mu < 2$ there exist $\de,\beta >0$ such that $a x^2 - bx^\mu  -cx > \de x^2 - \beta $, for any $x\geq0$, then  
\beq
\label{e:dec}
 E[\Psi]
 \geq \de \| \Psi' \|^2 -\beta \,,
\eeq
which implies $\nu \leq \beta$.\\

In the  remaining part of the proof we shall prove that for $m<m^*$ minimizing sequences have a  convergent subsequence.

\n
In order to prove Th. \ref{t:prob1} we can consider a slightly more general setting taking
$\{\Psi_n\}_{n\in\NA}$ be such that $M[\Psi_n] \to m$ and $E[\Psi_n]\to -\nu$. 
We shall prove that exists $\hat \Psi \in H^1 (\GG)$ such that $M[\hat \Psi] = m $, $E[\hat \Psi] =-\nu$ and $\Psi_n \to \hat \Psi$ in $ H^1 (\GG)$.

We can assume that 
$E[\Psi_n] \leq -\nu/2$ then by inequality \eqref{e:dec}, up to taking a subsequence,  we have that $\{\Psi_n\}$ is bounded in $H^1$, moreover the following lower bound holds true
\beq
\label{e:floor}
 \frac{1}{\mu+1} \| \Psi_n\|_{2\mu+2}^{2\mu+2} + |\al| |\psi_{n,1}(0)|^2 \geq
\nu \,.
\eeq

Next we use Lem. \ref{l:cc} and  prove that  vanishing and dichotomy
cannot occur for $\Psi_n$. Set $\tau = \lim_{t\to\infty}\liminf_{n\to\infty}
\rho(\Psi_n,t)$. First we prove that  vanishing cannot occur. If
$\tau=0$,  then by Lem. \ref{l:cc} there would exist a subsequence $\Psi_{n_k}$
such that $\| \Psi_{n_k}\|_{L^p} \to 0 $ for all $2<p\leq \infty$ but
this  would contradict 
\eqref{e:floor}.

To prove that dichotomy cannot occur, suppose $0<\tau<m$, then there
would exist $V_k$ and $W_k$ satisfying \eqref{dic1}-\eqref{dic7}.
In particular we know that
\[
 \liminf_{k\to \infty} \left(\|\Psi_{n_k}' \|^2 - \| V_k' \|^2
- \| W_k' \|^2 \right)  \geq 0
\]
\[
\lim_{k\to \infty} \lf( \| \Psi_{n_k}\|_p^p - \|V_k \|_p^p -
\| W_k \|_p^p \ri)=0 \qquad 2\leq p < \infty 
\]
and 
\be
\lim_{k\to \infty}\left||\psi_{1,n_k}(0)|^2-  |v_{1,k}(0)|^2 -| w_{1,k}(0)|^2\right| = 0\,.
\ee
Summing up, we arrive at
\be
\liminf_{k\to\infty} \lf(
E[\Psi_{n_k} ] - E[ V_k] - E[W_k]
\ri) \geq 0 \,,
\ee
which implies
\beq
\label{e:black-1}
\limsup_{k\to\infty} \lf(
 E[ V_k] + E[W_k]
\ri) \leq -\nu \,.
\eeq
Notice that, given $\Psi\in\EE$ with $M[\Psi] =m$ and $\de >0$, then
\[
E[\Psi] = \frac{1}{\de^2} E[\de \Psi] + \frac{\de^{2\mu} -1}{2\mu+2} \| \Psi
\|_{2\mu+2}^{2\mu+2}.
\]
We remark that $V_k, W_k \in H^1$ and they satisfy the right boundary condition at
the vertex since $\Psi_{n_k}$ does and the multiplication with the cut-off functions
preserves that, then $V_k, W_k \in \EE$.
Let $\de_k= \sqrt {m / M[V_k]}$ and $\ga_k = \sqrt {m / M[W_k]}$  such that $M[\de_k V_k] ,\,
M[\ga_k W_k] =m$. Then, using the above equality and the fact that
$E[\de_k V_k], E[\ga_k \WW_k] \geq -\nu$,  one has
\[
E[V_k] \geq - \frac{\nu}{\de^2_k} + \frac{\de^{2\mu}_k -1}{2\mu+2} \| V_k
\|_{2\mu+2}^{2\mu+2}
\]
\[
E[W_k] \geq - \frac{\nu}{\ga^2_k} + \frac{\ga^{2\mu}_k -1}{2\mu+2} \| W_k
\|_{2\mu+2}^{2\mu+2}
\]
from which 
\[
E[V_k]+E[W_k] \geq -\nu \lf( \frac{1}{\de^2_k} + \frac{1}{\ga^2_k} \ri) +
\frac{\de^{2\mu}_k -1}{2\mu+2} \| V_k \|_{2\mu+2}^{2\mu+2} +
\frac{\ga^{2\mu}_k -1}{2\mu+2} \| W_k \|_{2\mu+2}^{2\mu+2}\,.
\]
Notice that  by \eqref{dic4}
\[
\frac{1}{\de^2_k} \to \frac{\tau}{m} \qquad \qquad \frac{1}{\ga^2_k} \to 1-\frac{\tau}{m}\,.
\]
Let $\theta = \min \{ (\tau/m)^{-\mu} , (1-\tau/m)^{-\mu} \}$ and notice that $\theta
>1$ since $0<\tau/m <1$. Therefore
\begin{align}
\label{e:black-2}
\liminf_{k\to\infty} \lf(
 E[ V_k] + E[W_k]
\ri) 
&\geq -\nu + \frac{\theta -1}{2\mu+2} \liminf_{k\to\infty} \| \Psi_{n_k}
\|_{2\mu+2}^{2\mu+2} > -\nu,
\end{align}
where we used the fact that $\liminf_{k\to\infty} \| \Psi_{n_k}
\|_{2\mu+2}^{2\mu+2} \neq 0$. The latter claim is proved by noticing
that $\liminf_{k\to\infty} \| \Psi_{n_k} 
\|_{2\mu+2}^{2\mu+2} = 0$,  together with $\| \Psi_{n_k}
\|_{H^1}$ bounded  and Gagliardo-Nirenberg inequality,  would imply
$\liminf_{k\to\infty} |\Psi_{n_k}(0,1)| =0$ and contradict inequality
\eqref{e:floor}. Since 
also for $0<\tau<m$  we get a contradiction, cf. inequalities
\eqref{e:black-1} and \eqref{e:black-2}, it must be $\tau=m$. 

\n
Now we prove that for $m<m^\ast$ the minimizing sequence is not {\em runaway}. Here the limitation on the mass plays a role for the first time.
By absurd suppose that $\Psi_n $ is {\em runaway}. Then we have that $\psi_{i,n} (0) \to 0$ by Lem. \ref{l:cc} and this implies 
\beq
\label{info}
\lim_{n\to \infty} E[\Psi_n] - E^0 [\Psi_n] = 0
\eeq
where $E^0$ is the energy functional corresponding to the Kirchhoff condition in the vertex, see Eq. \eqref{e:nrgK}. By equality \eqref{info} it must be 
\beq
\label{runnabs}
-\nu \geqslant \inf_{\stackrel{\Psi\in \EE}{ M[\Psi]=m,\; \Psi\neq 0 }} E^0 [\Psi].
\eeq
We shall provide a lower bound of $\inf E^0 [\Psi]$ by means of the rearrangements and then, by a trial function, we show that
\eqref{runnabs} is false giving an absurd.

\n
Let $\Psi^\ast$ be the rearranged function of $\Psi$. By Prop. \ref{lp} and \ref{polya} we have
\[
\| \Psi \| = \| \Psi^\ast \| \qquad \qquad
\| \Psi \|_{2\mu + 2}  = \| \Psi^\ast \|_{2\mu + 2} 
\]
and 
\[
\| \Psi^\prime \|^2 \geqslant \f{4}{N^2} \| \Psi^{\ast \prime}  \|^2 .
\]
Therefore, for a non trivial $\Psi$ such that $\Psi\in \EE$ and $
M[\Psi]=m$, we see that  $\Psi^\ast \in \EE$ due to its symmetry, $
M[\Psi^\ast]=m$ 
and
\be
E^0 [\Psi] \geqslant  \f{4}{N^2} \f{1}{2} \| \Psi^{\ast \prime}  \|^2
-  \frac{1}{2\mu + 2}   \| \Psi^\ast \|_{2\mu + 2}^{2\mu + 2}.  
\ee
Since rearrangements maintain  the mass constraint,  the previous inequality implies
\be
\inf_{\substack{\Psi\in \EE\\ M[\Psi]=m}} E^0 [\Psi] \geqslant 
\inf_{
\substack{\Psi\in \EE,\; M[\Psi]=m,\\  \Psi \text{ symmetric}}}
\f{4}{N^2} \f{1}{2} \| \Psi^{ \prime}  \|^2 - \f{1}{2\mu + 2} \| \Psi
\|_{2\mu + 2}. 
\ee
Taking into account the symmetry requirement this last problem reduces to $N$ copies of a problem on the half line
\[
\inf_{
\substack{\Psi\in \EE,\; M[\Psi]=m,\\  \Psi \text{ symmetric}}} \f{4}{N^2} \f{1}{2} \| \Psi^{ \prime}  \|^2 - \f{1}{2\mu + 2} \| \Psi \|_{2\mu + 2}^{2\mu + 2} 
 =  
N
\inf_{\substack{\psi\in H^1 (\erre^+)\\ M_{\erre^+}[\psi]=m/N} }
 \f{4}{N^2} \f{1}{2} \| \psi^{ \prime}  \|^2_{L^2 (\erre^+)} - \f{1}{2\mu + 2} \| \psi \|_{L^{2\mu + 2}(\erre^+)}^{2\mu + 2}. 
\]
It is convenient to rescale the problem by means of the unitary transform $\psi(\cdot ) \mapsto \la^{1/2} \psi( \la \cdot)$. In this way we have to 
minimize the functional
\[
\f{4}{N^2} \f{\la^2}{2} \| \psi^{ \prime}  \|^2_{L^2 (\erre^+)} - \f{\la^\mu}{2\mu + 2} \| \psi \|_{L^{2\mu + 2}(\erre^+)}^{2\mu + 2}.
\]
Choosing $\la$ such that $\f{4}{N^2} \la^2 = \la^\mu$  we  reconstruct the structure of $E_{\erre^+}$ and arrive at the following inequality
\[
\inf_{\substack{\Psi\in \EE\\ M[\Psi]=m}} E^0 [\Psi] \geqslant
 N  \lf( \f{N}{2} \ri)^{ \f{2\mu}{2-\mu} }
\inf_{\substack{\psi\in H^1 (\erre^+) \\ M_{\erre^+}[\psi]=m/N }}
E_{\erre^+ }[\psi]
\]
which is a minimization problem for unperturbed energy on the half line.  Recalling that the solution of the constrained energy minimization problem on the half-line is given by 
the half soliton with frequency $\tilde \ome$ such that $M_{\RE^+}[\phi_{\tilde \ome}]=m/N$ we obtain  
\beq \label{inferior}
\inf_{\substack{\Psi\in \EE\\ M[\Psi]=m}} E^0 [\Psi] \geqslant
- \f N 2  \lf( \f{N}{2} \ri)^{ \f{2\mu}{2-\mu} } \;  \omestar^{ \f 1 \mu + \f 1 2}\,
\f{(\mu+1)^{\f 1 \mu }  }{\mu}   \f{2-\mu}{2+\mu}   \int_0^1 (1-t^2)^{\f 1 \mu -1} dt \; 
\eeq
with $\omestar$  defined by
\[
 \frac{m}{N}  =  \f{(\mu+1)^{\f 1 \mu }  }{\mu} \omestar^{ \f 1 \mu - \f 1 2} \int_0^1 (1-t^2)^{\f 1 \mu -1} dt ,
\]
where we used identities \eqref{MR} and \eqref{ER}.

\n
We can write the r.h.s. in a more compact way, showing also that it
does not actually depend on $N$. Let $\omega_\RE$ be the frequency of
a soliton of mass $m$, by Eq. \eqref{MR}, one has  
\[
m =  2\f{(\mu+1)^{\f 1 \mu }  }{\mu} \ome_\RE^{ \f 1 \mu - \f 1 2} \int_0^1 (1-t^2)^{\f 1 \mu -1} dt,
\]
from which it follows that 
\beq \label{rapporto}
\frac{\ome_\erre}{\omestar} = \lf( \f N 2 \ri)^{ \f{2\mu}{2-\mu} } .
\eeq
Taking into account  \eqref{inferior} and \eqref{rapporto} we have
\beq \label{infinal}
\inf_{\substack{\Psi\in \EE\\ M[\Psi]=m}} E^0 [\Psi] \geqslant  -  \ome_{\erre}^{ \f 1 \mu + \f 1 2}\,
\f{(\mu+1)^{\f 1 \mu }  }{\mu}   \f{2-\mu}{2+\mu}   \int_0^1 (1-t^2)^{\f 1 \mu -1} dt = 
- \frac12 \f{2-\mu}{2+\mu}  \; \ome_\erre \, m \; .
\eeq
This is the lower bound we were interested in. Notice that the
r.h.s. coincides with the energy of a soliton on the line with mass
$m$. 

\n
Now we compute the energy  functional $E$ on a trial function.  As trial function we choose the $N$-tail state $\Psi_{\ome,0}$. First we fix the frequency $\ome=\ome_0$, where $\ome_0$ is such that $M[\Psi_{\ome_0,0}]=m$. By  Eq. \eqref{formula1} we get 
\beq \label{massacode}
M[\Psi_{\ome,0}] = N \f{(\mu+1)^{\f 1 \mu }  }{\mu}  \ome^{ \f 1 \mu - \f 1 2}  \int_{ \frac{|\al|}{ N \sqrt{\ome}}}^1 (1-t^2)^{\f 1 \mu -1} dt.
\eeq
The r.h.s. of \eqref{massacode} as a function of $\ome$ defined on the domain $[\al^2 /N^2 , \infty)$ is positive, increasing and the range is $[0,\infty)$ in the subcritical case
while in the critical case the range is $[0, \frac{\pi \sqrt{3}N}{4})$. See also Section \ref{sec5}. 
Therefore the equation 
\[
m=N \f{(\mu+1)^{\f 1 \mu }  }{\mu}  \ome^{ \f 1 \mu - \f 1 2}  \int_{ \frac{|\al|}{ N \sqrt{\ome}}}^1 (1-t^2)^{\f 1 \mu -1} dt
\]
has a unique solution $\ome_0$ for every $m>0$ such that  $\ome_0 >
\al^2 /N^2$ . A straightforward calculation based on formulas  \eqref{formula1} - \eqref{formula3} gives 
\begin{align}
E[\Psi_{\ome_0,0}] =& -\ome_0 \frac{m}{2} +\frac{\mu}{2\mu+2}\|\Psi_{\ome,0}\|_{2\mu+2}^{2\mu+2}
\nonumber
\\=& - \frac12 \f{2-\mu}{2+\mu}  \; \ome_0 \, m - \f 1 2  \f{(\mu+1)^{\f 1 \mu }  }{\mu+2 } \mu |\al | \lf( \ome_0 - \f{\al^2}{N^2} \ri)^{\f{1}{\mu} }
 \label{energiacode}
\end{align}
Now we prove that, if $m<m^\ast$,  then 
\beq \label{preabs}
\inf_{\substack{\Psi\in \EE\\ M[\Psi]=m}} E^0 [\Psi] > E[\Psi_{\ome_0,0}]
\eeq 
Due to \eqref{infinal} and \eqref{energiacode}  it is sufficient to show that
\[
\ome_0>\ome_\erre.
\]
Notice that the condition $m<m^\ast$ is equivalent, see \eqref{mstar}, to
\[
\ome_\erre < \f{\al^2}{N^2}.
\]
Since we have $\ome_\erre < \f{\al^2}{N^2} < \ome_0$, then \eqref{preabs} is proved. 
This is absurd since by \eqref{runnabs} we have
\[
E[\Psi_{\ome_0,0}] \geqslant -\nu \geqslant \inf_{\substack{\Psi\in \EE\\ M[\Psi]=m}} E^0 [\Psi] > E[\Psi_{\ome_0,0}].
\]
Then $\Psi_n$ is not {\em runaway} and therefore it is convergent, up
to subsequences, to $\hat \Psi$ in $L^p (\GG)$ for $p  \geq 2$. In particular, $M[\hat \Psi] = m$.
Moreover taking into account also the weak lower continuity of the $H^1$ norm we have
\[
E[ \hat \Psi] \leq \lim_{n\to \infty} E[ \Psi_n] = -\nu
\]
which implies that $E[ \hat \Psi] = -\nu$. Since $E[ \hat \Psi] = \lim_{n\to \infty}  E[ \Psi_n]$ then $\| \hat \Psi ' \| = \lim_{n\to \infty} \|  \Psi_n ' \|$ and we
have proved that $\Phi_n \to \hat \Psi$ in $H^1$.
\end{proof}

\begin{remark}
The condition $m<m^\ast$ has the advantage to be explicit, however we
stress that it is not optimal. Indeed, for any $m$ such that
\eqref{preabs} is satisfied, the proof given holds true. By careful
inspection of  \eqref{preabs} this is true for $m=m^\ast$ 
and by continuity also for some $m>m^\ast$.  
\end{remark}
\section{Energy ordering of the stationary states} \label{sec5}

In this section we study the energy ordering of the stationary states
for fixed mass in critical and subcritical regime.  
In both cases we prove that the energy of the stationary states at fixed mass is increasing in the  number of bumps. 
Therefore, among the stationary states with equal mass,  the $N$-tail
state has minimal energy, see Th. \ref{t:min}. 
In the critical case a new restriction on $m$ appears. First we analyze the subcritical case.

\subsection{Energy ordering of the stationary states: subcritical nonlinearity}


We consider as usual the case $\alpha<0$ only, then we set $\alpha = - |\alpha|$. 
We define the functions $M_j(\ome) = M[\Psi_{\ome,j}]$.  A straightforward calculation gives 
\begin{align}
M_j(\ome) & = \frac{(\mu+1)^{\frac1\mu}}{\mu} \omega^{\frac1\mu-\frac12}  \bigg[-(N-2j)\int_0^{\frac{|\al|}{(N-2j)\omega^{\frac12}}} (1-t^2)^{\frac1\mu -1} dt + N  I\bigg]
\nonumber
\\
 & =  \frac{(\mu+1)^{\frac1\mu}}{\mu} \omega^{\frac1\mu-\frac12}  \bigg[(N-2j) \int_{\frac{|\al|}{(N-2j)\omega^{\frac12}}}^1 (1-t^2)^{\frac1\mu -1} dt + 2j  I \bigg]
\label{e:mome2}
\end{align}
where 
\[
I=\int_0^1(1-t^2)^{\frac1\mu-1} dt
\]
 We recall that $\Psi_{\ome,j}$ is defined for $\omega\in\left(\frac{|\al|^2}{(N-2j)^2},\infty\right)$. Notice that the stationary states, apart from the $N$-tail state, have a minimal
mass, that is the range of the functions $M_j$, denoted as $\Ran M_j$, is separated from zero. In fact,  we have that
\[
\Ran M_j= M_j(\frac{|\al|^2}{(N-2j)^2},\infty)=\left[2 jI
  \frac{(\mu+1)^\frac{1}{\mu}}{\mu}
  \left(\frac{|\al|}{(N-2j)}\right)^{\frac{2-\mu}{\mu}} ,\infty
  \right) 
\]

\n
First we compare the frequency of the stationary states on the manifold $M[\Psi]=m$.
\begin{lemma}[Frequency ordering]
\label{l:freq}
 Let $0<\mu<2$ and take $\Psi_{\ome,j}$ defined by \eqref{states1} and \eqref{states2}. Assume that
 \begin{equation}
 \label{e:minmass}
 m \geq 2 j \left(\frac{|\al|}{(N-2j)}\right)^{\frac{2-\mu}{\mu}}
 \frac{(\mu+1)^\frac{1}{\mu}}{\mu} \int_0^1(1-t^2)^{\frac1\mu-1} dt
 \, , 
 \end{equation}
then  there exists $\ome_j$ such that $M[\Psi_{\ome_j,j}]=m$.
Moreover, assume that condition \eqref{e:minmass} is satisfied for
$j+1$ (and therefore for $j$). The following possibilities hold:

- if $0<\mu<1$ then 
\begin{equation}
\label{e:omeord1}
 \ome_{j+1}<\ome_j \,;
\end{equation}

- if $\mu=1$, then $\ome_j$  is independent of $j$ and 
\begin{equation}
\label{e:omemu=1}
 \ome_j\equiv \ome^* =\frac{ (m + 2|\al|)^2 }{4 N^2} \,; 
 \end{equation}

- if $1<\mu<2$,  then 
\begin{equation}
\label{e:omeord2}
 \ome_{j+1}>\ome_j\,.
\end{equation}
\end{lemma}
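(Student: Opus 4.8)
The plan is to reduce the ordering of the frequencies $\ome_j$ to a comparison of the mass functions $M_j(\ome)$ evaluated at one common frequency, exploiting that each $M_j$ is strictly monotone.

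First I would show that for $0<\mu<2$ each $M_j$ is strictly increasing on its domain $\lf(\frac{\al^2}{(N-2j)^2},\infty\ri)$. Writing $M_j(\ome)=\frac{(\mu+1)^{1/\mu}}{\mu}\,\ome^{\frac1\mu-\frac12}A_j(\ome)$ with $A_j(\ome)=(N-2j)\int_{b_j}^1(1-t^2)^{\frac1\mu-1}\,dt+2jI$ and $b_j=\frac{|\al|}{(N-2j)\sqrt\ome}$, one has $\frac{db_j}{d\ome}=-\frac{b_j}{2\ome}$, hence $A_j'(\ome)=(N-2j)\,(1-b_j^2)^{\frac1\mu-1}\frac{b_j}{2\ome}>0$. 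Since $A_j>0$ and the exponent $\frac1\mu-\frac12$ is positive for $\mu<2$, the product rule gives $M_j'(\ome)>0$. Combined with $M_j(\ome)\to\infty$ as $\ome\to\infty$ and $M_j(\ome)\to 2jI\frac{(\mu+1)^{1/\mu}}{\mu}\lf(\frac{|\al|}{N-2j}\ri)^{\frac{2-\mu}{\mu}}$ as $\ome\downarrow\frac{\al^2}{(N-2j)^2}$, this yields, under \eqref{e:minmass} and via the intermediate value theorem, a unique $\ome_j$ with $M_j(\ome_j)=m$ (and likewise $\ome_{j+1}$, since \eqref{e:minmass} is assumed for $j+1$).

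The core step is to fix $\ome$ and determine the sign of $M_{j+1}(\ome)-M_j(\ome)$. Using the first form in \eqref{e:mome2}, $M_j(\ome)=\frac{(\mu+1)^{1/\mu}}{\mu}\ome^{\frac1\mu-\frac12}\big[NI-g(N-2j)\big]$, where $g(n):=nF\lf(\frac{|\al|}{n\sqrt\ome}\ri)$ and $F(u):=\int_0^u(1-t^2)^{\frac1\mu-1}\,dt$, the entire $j$-dependence is carried by $g$. Treating $n$ as a continuous variable and setting $u=\frac{|\al|}{n\sqrt\ome}$, a short computation gives $g'(n)=F(u)-u(1-u^2)^{\frac1\mu-1}=:h(u)$, with $h(0)=0$ and
\[
h'(u)=2u^2\lf(\frac1\mu-1\ri)(1-u^2)^{\frac1\mu-2}.
\]
Hence on $(0,1)$ the sign of $h$, and therefore of $g'$, equals the sign of $\frac1\mu-1$. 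Since $M_{j+1}(\ome)-M_j(\ome)=\frac{(\mu+1)^{1/\mu}}{\mu}\ome^{\frac1\mu-\frac12}\big[g(N-2j)-g(N-2j-2)\big]$ and $N-2j>N-2j-2$, I conclude that on the common domain of $M_j$ and $M_{j+1}$ one has $M_{j+1}(\ome)>M_j(\ome)$ when $\mu<1$, $M_{j+1}(\ome)<M_j(\ome)$ when $\mu>1$, and $M_{j+1}\equiv M_j$ when $\mu=1$.

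To finish, I would evaluate at $\ome=\ome_{j+1}$, which lies in the domain of $M_{j+1}$ and a fortiori in that of $M_j$. For $\mu<1$ this gives $M_j(\ome_{j+1})<M_{j+1}(\ome_{j+1})=m=M_j(\ome_j)$, so strict monotonicity of $M_j$ forces $\ome_{j+1}<\ome_j$, which is \eqref{e:omeord1}; the case $\mu>1$ is symmetric and gives \eqref{e:omeord2}. For $\mu=1$ the integrand is $\equiv 1$, so $F(u)=u$ and $M_j(\ome)=2N\sqrt\ome-2|\al|$ independently of $j$; solving $M_j(\ome_j)=m$ returns $\ome^*=\frac{(m+2|\al|)^2}{4N^2}$, which is \eqref{e:omemu=1}. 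The main obstacle is precisely the sign analysis of the core step: everything rests on the clean factorization of $h'(u)$, whose sign is governed by $\frac1\mu-1$ and thereby reproduces the trichotomy in $\mu$.
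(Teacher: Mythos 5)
Your proof is correct and takes essentially the same route as the paper: both establish that each $M_j$ is strictly increasing in $\ome$, then fix $\ome$ and compare $M_{j+1}(\ome)$ with $M_j(\ome)$ on the common domain, with the trichotomy in $\mu$ governed by the sign of $\frac1\mu-1$, and conclude via monotonicity (the explicit solution for $\mu=1$ included). The only cosmetic difference is how the sign of $M_{j+1}-M_j$ is obtained --- the paper rescales both integrals to the common interval $\bigl(0,|\al|/\sqrt{\ome}\bigr)$ and compares the integrands pointwise, while you differentiate in the continuous variable $n=N-2j$ and analyze $h(u)$; the two computations are equivalent.
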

\begin{proof}
The frequency  $\ome_j$ is the solution to the equation $m = M_j(\ome_j)$, 
then for each $j$ the equation $M_j(\ome) = m$ has solution only if $m \geq 2 jI \frac{(\mu+1)^\frac{1}{\mu}}{\mu} \left(\frac{|\al|}{(N-2j)}\right)^{\frac{2-\mu}{\mu}} $, which proves the first part of the lemma. 
Next we note that the functions $M_j$ are strictly
increasing. Moreover,    for any $\ome \geq \frac{|\al|^2}{(N-2(j+1))^2} $ 
\[
M_{j+1}(\ome) - M_j(\ome)  = - 
\frac{(\mu+1)^{\frac1\mu}}{\mu} \omega^{\frac1\mu-\frac12}  
\left[
\int_0^{\frac{|\al|}{\sqrt\omega}} \left(1-\frac{t^2}{(N-2(j+1))^2}\right)^{\frac1\mu -1} 
-
\left(1-\frac{t^2}{(N-2j)^2}\right)^{\frac1\mu -1} dt 
\right].
\]
Since the function 
\[
\left(1-\frac{t^2}{(N-2(j+1))^2}\right)^{\frac1\mu -1} 
-
\left(1-\frac{t^2}{(N-2j)^2}\right)^{\frac1\mu -1}
\]
is negative for $0<\mu<1$ and positive for $1<\mu<2$, one has that
$M_{j+1}(\ome) > M_j(\ome) $ for $0<\mu<1$ and $M_{j+1}(\ome) <
M_j(\ome) $ for $1<\mu<2$. Together with the fact that $M_j$ are
strictly increasing functions, this provides the ordering \eqref{e:omeord1}
and \eqref{e:omeord2}.  

Formula \eqref{e:omemu=1} is obtained by setting $\mu=1$ into equation
$M_j(\ome) = m$ and through a straightforward calculation, see also
\cite{[ACFN4]}.   
\end{proof}

\begin{lemma}[Energy ordering]
\label{l:nrgord}
 Let $0<\mu<2$, and $m$ be such that condition \eqref{e:minmass} is satisfied for $ j+1$. 
Then
\begin{equation}
\label{e:nrgord}
E[\Psi_{\ome_j,j}] < E[\Psi_{\ome_{j+1},j+1}] \,.
\end{equation}
\end{lemma}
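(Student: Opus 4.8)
The plan is to reduce \eqref{e:nrgord} to a scalar inequality between the frequencies $\ome_j$ and $\ome_{j+1}$, by first putting the energy of a stationary state in closed form, and then to exploit the frequency ordering already established in Lem.~\ref{l:freq}. First I would derive a closed expression for $E[\Psi_{\ome_j,j}]$ generalizing \eqref{energiacode}. Testing the stationary equation \eqref{stationary} against $\Psi_{\ome,j}$ and recalling that the quadratic form of $H$ is $\|\Psi'\|^2+\al|\psi_1(0)|^2$ gives the virial-type identity $\|\Psi_{\ome,j}'\|^2+\al|\psi_{\ome,1}(0)|^2-\|\Psi_{\ome,j}\|_{2\mu+2}^{2\mu+2}=-\ome\, m$, whence
\[
E[\Psi_{\ome,j}]=-\f{\ome\, m}{2}+\f{\mu}{2\mu+2}\|\Psi_{\ome,j}\|_{2\mu+2}^{2\mu+2},\qquad m=M[\Psi_{\ome,j}].
\]
Evaluating $\|\Psi_{\ome,j}\|_{2\mu+2}^{2\mu+2}$ edge by edge with \eqref{formula2} (the $j$ bumps carry a shift $-a_j$, the $N-j$ tails a shift $+a_j$, and $\tanh(a_j\mu\sqrt\ome)=\f{|\al|}{(N-2j)\sqrt\ome}$) and then using the reduction identity \eqref{formula3} exactly as in the derivation of \eqref{energiacode}, I expect to obtain, with $c_j:=\al^2/(N-2j)^2$,
\[
E[\Psi_{\ome_j,j}]=-\f{2-\mu}{2(2+\mu)}\,\ome_j\, m-\f{\mu(\mu+1)^{1/\mu}|\al|}{2(2+\mu)}\lf(\ome_j-c_j\ri)^{1/\mu},
\]
which is \eqref{energiacode} when $j=0$. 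Since $c_{j+1}>c_j$, inequality \eqref{e:nrgord} becomes
\[
\f{2-\mu}{2(2+\mu)}\,m\,(\ome_{j+1}-\ome_j)<\f{\mu(\mu+1)^{1/\mu}|\al|}{2(2+\mu)}\lf[(\ome_j-c_j)^{1/\mu}-(\ome_{j+1}-c_{j+1})^{1/\mu}\ri].
\]

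For $0<\mu\le1$ this is immediate from Lem.~\ref{l:freq}. Indeed the frequency ordering gives $\ome_j\ge\ome_{j+1}$, so the left-hand side is $\le0$; and $\ome_j\ge\ome_{j+1}$ together with $c_j<c_{j+1}$ forces $\ome_j-c_j>\ome_{j+1}-c_{j+1}$, so by monotonicity of $t\mapsto t^{1/\mu}$ the right-hand side is strictly positive. (For $\mu=1$ the left-hand side vanishes and the right-hand side is a positive multiple of $c_{j+1}-c_j$.) In all these cases the two effects cooperate and the inequality is strict.

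The hard case is $1<\mu<2$, where now $\ome_{j+1}>\ome_j$ and $c_{j+1}>c_j$ pull the two sides in competing directions, so a bare sign comparison no longer closes the argument. Here I would argue by monotonicity in the mass. Since each $\Psi_{\ome_j(m),j}$ is a critical point of $E$ constrained to $M=m$ with Lagrange multiplier $\ome_j(m)/2$ (this is exactly \eqref{stationary}), the envelope identity yields $\f{d}{dm}E[\Psi_{\ome_j(m),j}]=-\f{\ome_j(m)}{2}$; hence the difference $D(m):=E[\Psi_{\ome_{j+1},j+1}]-E[\Psi_{\ome_j,j}]$ satisfies $D'(m)=\tfrac12(\ome_j-\ome_{j+1})<0$, so $D$ is strictly decreasing on its domain. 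It then suffices to show $\lim_{m\to\infty}D(m)=0$, because a strictly decreasing function with limit $0$ at $+\infty$ is strictly positive at every finite argument. To obtain the limit I would expand the mass constraint $M_j(\ome_j)=m$ as $\ome_j\to\infty$: using $K_j=\int_{b_j}^1(1-t^2)^{1/\mu-1}dt=I-b_j+O(b_j^3)$ with $b_j=\f{|\al|}{(N-2j)\sqrt{\ome_j}}$, one finds that the two leading terms of $m$ as a function of $\ome_j$ are $j$-independent, so that $\ome_{j+1}-\ome_j=O(\ome_j^{-1/2})$; feeding this into the closed form shows that both $m(\ome_{j+1}-\ome_j)$ and $(\ome_j-c_j)^{1/\mu}-(\ome_{j+1}-c_{j+1})^{1/\mu}$ are $O(\ome_j^{1/\mu-1})\to0$, whence $D(m)\to0$.

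The main obstacle is precisely this last step: quantifying $\ome_{j+1}-\ome_j$ to order $\ome_j^{-1/2}$, i.e.\ controlling the cancellation of the two leading $j$-independent terms of $M_j(\ome)$ at large $\ome$, so as to justify $\lim_{m\to\infty}D=0$. Everything else is either the explicit (if slightly tedious) computation of the energy via \eqref{formula2}–\eqref{formula3}, or a direct application of the frequency ordering of Lem.~\ref{l:freq} that is already in hand.
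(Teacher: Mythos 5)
Your proposal is correct and follows essentially the same route as the paper's proof: the same closed form for the energy (the paper's \eqref{e:Ej}), the same two-positive-terms sign argument for $0<\mu\le 1$ via Lem.~\ref{l:freq}, and for $1<\mu<2$ the same identity $\frac{d}{dm}E[\Psi_{\ome_j,j}]=-\ome_j/2$ making the difference strictly decreasing in $m$, followed by the large-$m$ expansion of the mass constraint \eqref{e:mome3}, in which the $j$-dependence first enters at relative order $\ome^{-3/2}$. Your only deviations are harmless streamlinings: you get $dE/dm=-\ome_j/2$ from the Lagrange-multiplier (envelope) structure rather than by explicit differentiation of \eqref{e:Ej}, and at the endgame you conclude from strict decrease plus $\lim_{m\to\infty}\Delta_j=0$, which needs only the order estimate $\ome_{j+1}-\ome_j=O(\ome_j^{-1/2})$, whereas the paper computes the leading coefficient of $\Delta_j$ explicitly and checks its sign.
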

\begin{proof}
After some straightforward  calculation using \eqref{formula1} \eqref{formula2} and \eqref{formula3}, one gets the formula
\begin{equation}
\label{e:Ej}
 E[\Psi_{\ome_j,j}] = -\frac{1}{2(\mu+2)}\left[m\ome_j(2-\mu) + |\al| \mu(\mu+1)^{\frac1\mu} \left(\ome_j - \frac{ |\al|^2}{(N-2j)^2}\right)^{\frac1\mu} \right] \,.
\end{equation}
Let us set 
\[
\Delta_j = E[\Psi_{\ome_{j+1},j+1}]  -  E[\Psi_{\ome_j,j}] \,.
\]
We aim at proving that $\Delta_j >0$. One has 
\begin{equation}
\label{e:Dej}
\Delta_j =  -\frac{m(2-\mu) }{2(\mu+2)} (\ome_{j+1}-\ome_j) 
 -\frac{|\al| \mu(\mu+1)^{\frac1\mu}}{2(\mu+2)} \left[\left(\ome_{j+1} - \frac{ |\al|^2}{(N-2(j+1))^2}\right)^{\frac1\mu}  - \left(\ome_j - \frac{ |\al|^2}{(N-2j)^2}\right)^{\frac1\mu}\right]\,.
\end{equation}
Let us analyze separately the cases $0<\mu\leq1$ and $1<\mu<2$. 

We start with the case $0<\mu\leq1$, the easiest one. By
Lem. \ref{l:freq} one has that $(\ome_{j+1}-\ome_j) <0$ (equality
holds only for $\mu=1$). From which it also follows that  
\[
\left(\ome_{j+1} - \frac{ |\al|^2}{(N-2(j+1))^2}\right)^{\frac1\mu}  - \left(\ome_j - \frac{ |\al|^2}{(N-2j)^2}\right)^{\frac1\mu} < 0\,.
\]
Noting that  $\Delta_j$ is the sum of two positive terms, we obtain \eqref{e:nrgord} for  $0<\mu\leq1$. 

The case $1<\mu<2$ is more difficult. To prove \eqref{e:nrgord} we start from equation \eqref{e:mome2} and recall that the frequency $\ome_j$ satisfies the equality $m = M_j[\ome_j] $, i.e.
\[
m =\frac{(\mu+1)^{\frac1\mu}}{\mu} \omega_j^{\frac1\mu-\frac12} \bigg[ (N-2j) \int_{\frac{|\al|}{(N-2j)\omega_j^{\frac12}}}^1 (1-t^2)^{\frac1\mu -1} dt + 2j  I \bigg]\,.  
\]
Taking the left and right derivative with respect to $m$, after some
straightforward calculation we 
obtain  
\[
\f{d}{dm}\ome_j = 2\mu \ome_j \bigg[ m(2-\mu) + |\al| (\mu+1)^{\frac1\mu} \left( \ome_j - \frac{|\al|^2}{(N-2j)^2}
\right)^{\frac1\mu -1} \bigg]^{-1} \,. 
\]
Then, taking the derivative of $E[\Psi_{\ome_j,j}]$ in equation
\eqref{e:Ej} and using the last identity, we obtain: 
\begin{align*}
\frac{d }{dm}E[\Psi_{\ome_j,j}] = &  -\frac{2-\mu}{2(2+\mu)}\ome_j  
 -\frac{1}{2(2+\mu)}\left[m(2-\mu) + |\al| (\mu+1)^{\frac1\mu} \left(\ome_j - \frac{ |\al|^2}{(N-2j)^2}\right)^{\frac1\mu-1} \right]\ome_j'  \\ 
= & -\frac{\ome_j}{2}\,.
\end{align*}
Together with \eqref{e:omeord2}, latter formula implies that for
$1<\mu<2$, $\Delta_j$ is  decreasing in $m$: 
\[
\frac{d}{dm} \Delta_j = -  \frac12(\ome_{j+1}-\ome_j) < 0 \,.
\]
Then to prove \eqref{e:nrgord} it is enough to prove that $\Delta_j > 0$ for $m \to \infty$. 
To prove the latter statement we start by equation \eqref{e:mome2}
and notice that  $\ome_j\to\infty$ as  $m\to\infty $, moreover  
by the expansion $\int_x^1(1-t^2)^{\frac1\mu-1} = I -x +\frac13
\left(\frac1\mu -1\right)x^3 + O(x^5)$,  we obtain
\begin{equation}
\label{e:mome3}
 \frac{m}{NC} = \omega_j^{\frac1\mu-\frac12} \bigg[1 -\frac{|\al|}{NI} \ome_j^{-\frac12} + \frac1{3NI}\left(\frac1\mu-1\right) \frac{|\al|^3}{(N-2j)^2} \ome_j^{-\frac32} +O(\ome^{-\frac52}) \bigg]\,,
\end{equation}
where $C= \frac{(\mu+1)^\frac1\mu}{\mu} I$. 
For $m\to\infty$, $\ome_j$ has the following expansion:
\begin{equation}
\label{e:expome}
\ome_j = \lf( \f{m}{N C}\ri)^{ \frac{2\mu}{2-\mu} }\lf[ 1 + a_j  \lf( \f{m}{N C}\ri)^{ -\frac{\mu}{2-\mu} }
+ b_j \lf( \f{m}{N C}\ri)^{ -\frac{2\mu}{2-\mu} } + c_j \lf( \f{m}{N C}\ri)^{ -\frac{3\mu}{2-\mu} }+ O(m^{-\frac{4\mu}{2-\mu}}) \ri] \,.
\end{equation}
To compute the coefficients $a_j$, $b_j$ and $c_j$ we rewrite equation \eqref{e:mome3} in the form 
\[ 
\left(\frac{m}{NC} \right)^{\frac{2\mu}{2-\mu}}= \omega_j \bigg[1 -\frac{|\al|}{NI} \ome_j^{-\frac12} + \frac1{3NI}\left(\frac1\mu-1\right) \frac{|\al|^3}{(N-2j)^2} \ome_j^{-\frac32} +O(\ome^{-\frac52}) \bigg]^{\frac{2\mu}{2-\mu}}\,,
\]
and use formula  \eqref{e:expome} at the r.h.s.. The r.h.s. has an
expansion in powers $\left(\frac{m}{NC} \right)^{-\frac{j\mu}{2-\mu}}$
with $j=-2,-1,0,1, ...$. The condition that the terms with $j=-1,0, 1$
equal zero gives the coefficients $a_j$, $b_j$ and $c_j$. A lengthy
but straightforward calculation shows that the coefficients $a_j$ and
$b_j$ are independent of $j$. This  
is due to the fact that the first term in equation \eqref{e:mome3} does
not depend on $j$. 
More precisely, one obtains:
\[
a_j\equiv a  = \frac{2\mu}{2-\mu} \frac{|\al|}{NI}
\;;\quad
b_j \equiv b = \frac{\mu}{2-\mu} \frac{|\al|^2}{N^2 I^2}
\;;\quad
c_j =  c -  \frac{2(1-\mu)}{2-\mu} \frac{1}{3NI} \frac{|\al|^3}{(N-2j)^2},
\]
where $ c$ does not depend on $j$. The explicit expression is not
relevant since it will cancel out (see below). 
Using the expansion \eqref{e:expome} in equation \eqref{e:Dej} and taking into account the fact that the coefficients $a_j\equiv a$ and $b_j\equiv b$ do not depend on $j$ we obtain the following expansion for $\De_j$
\[
\begin{aligned}
\De_j = &
  -\frac{(2-\mu) }{2(2+\mu)} \left(NC\right)^{1+\frac{2\mu-2}{2-\mu}} (c_{j+1}-c_j) m^{-\frac{2\mu-2}{2-\mu}} \\
  & +\frac{|\al| (\mu+1)^{\frac1\mu}}{2(\mu+2)}  \left(NC\right)^{\frac{2\mu-2}{2-\mu}}   \left(\frac{|\al|^2}{(N-2(j+1))^2} - \frac{|\al|^2}{(N-2j)^2}\right) m^{-\frac{2\mu-2}{2-\mu}}
  +O\left(m^{-\frac{3\mu-2}{2-\mu}}\right)
\\
= & 
  \frac{(\mu+1)^{\frac1\mu}}{2(\mu+2)} |\al|^3 \left(NC\right)^{\frac{2\mu-2}{2-\mu}} 
   \left(\frac{|\al|^2}{(N-2(j+1))^2} - \frac{|\al|^2}{(N-2j)^2}\right) \left(\frac{2}{3\mu}+\frac13\right)
  m^{-\frac{2\mu-2}{2-\mu}} 
  +O\left(m^{-\frac{3\mu-2}{2-\mu}}\right)
\end{aligned}
\] 
where in the latter equality we used the definition of $c_j$ and the fact that $I= \frac\mu{(\mu+1)^\frac1\mu}C$. The latter equality shows that for $m$ large enough $\Delta_j$ is positive 
for any $0<\mu<2$, 
and the proof of the lemma is concluded.
\end{proof}

Lem. \ref{l:nrgord} shows that among the stationary states on the manifold $M[\Psi]=m$ the $N$-tail state has minimum energy and therefore for $0<\mu<2$ 
the proof of Th. \ref{t:min} immediately follows.
\begin{remark}
For $\mu=1$ the energy spectrum at fixed mass  can be explicitly computed:
\[
E[\Psi_{\ome,j}] = -\frac{N}{3} \ome^{\frac32} + \frac13 \frac{|\al|^3}{(2j - N)^2}\,.
\]
Taking into account the mass constraint we have
\[
E[\Psi_{\ome^\ast,j}] = -\frac{1}{24} \frac{ (m+2|\al|)^3}{N^2} + \frac13 \frac{|\al|^3}{(2j - N)^2}\,.
\]
The energy of the ground state is given by
\[
E[\Psi_{\ome^\ast,0}] = -\frac{1}{24 N^2} m (m^2 + 6m|\al| + 12 |\al|^2 )
\]
\end{remark}
\begin{remark}
Notice that the manifold $M[\Psi]=m$ for $m<m^\ast$ may not contain all the stationary states, due to the fact that their masses have a lower bound, as discussed above.
The $N$-tail state always belongs to the constraint manifold since its mass has no lower bound. Since $m^\ast$ actually depends on $\al$ , by inspection it turns out that for small $|\al|$ the
constraint manifold contains only the N-tail state while for large $|\al|$ all the stationary states belong to the constraint manifold, i.e. the equation $M_j (\ome) = m$
defines the frequency $\ome_j$.  As a matter of fact, for the proof of our theorems we could fix $m$ and require $\al$ to be sufficiently negative. 
Analogous remarks also apply to the critical case. 
\end{remark}

\subsection{Energy ordering of the stationary states: critical nonlinearity}
In this section we study the energy ordering of the stationary states for fixed mass and $\mu=2$. 


\n
In the critical case the mass functions can be explicitly computed and we have.
\begin{align*}
M_j(\ome) & = \frac{\sqrt 3}{2}  \bigg[-(N-2j)\int_0^{\frac{|\al|}{(N-2j)\omega^{\frac12}}} (1-t^2)^{-\frac12} dt + N  I\bigg]
\\
&= \frac{\sqrt 3}{2}  \bigg[-(N-2j)  \arcsin\left(\frac{|\al|}{(N-2j)\omega^{\frac12}}\right) + \frac{N\pi}{2}  \bigg]
\end{align*}
where we used the fact that  $I=\int_0^1(1-t^2)^{-\frac12} dt = \pi/2$. We note that
\[
\Ran M_j = \left[ j \frac{\pi \sqrt3}{2}  , \frac{N}{2} \frac{\pi \sqrt3}{2}    \right)
\]
In the critical case all the mass functions are bounded from above,
therefore for large $m$ the frequencies $\ome_j$ are not defined. 
This is the reason of the further mass limitation appearing in Ths. \ref{t:prob1} and \ref{t:min}.

\begin{lemma}[Frequency ordering $(\mu=2)$]
\label{l:freqcrit}
 Let $\mu=2$ and take $\Psi_{\ome,j}$ defined by \eqref{states1} and
 \eqref{states2}.  
Assume that
\begin{equation}
\label{e:co}
 j\frac{\pi\sqrt3}{2}  \leq m < \frac{N}{2} \frac{\pi \sqrt3}{2} ,  
\end{equation}
then  there exists $\ome_j$ such that $M[\Psi_{\ome_j,j}]=m$.
Moreover, if $m$ is such that \eqref{e:co} is satisfied for $j+1$
(therefore also for $j$) then:  
\begin{equation}
\label{e:omeord3}
 \ome_{j+1}>\ome_j\,.
\end{equation}
\end{lemma}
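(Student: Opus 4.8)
The plan is to follow the roadmap of the subcritical frequency-ordering lemma (Lem.~\ref{l:freq}), specialized to $\mu=2$, where the exponent $\frac1\mu-1=-\frac12$ plays exactly the role that $\frac1\mu-1<0$ plays in the regime $1<\mu<2$. First I would dispose of the existence claim: the explicit formula for $M_j$ shows that $M_j(\cdot)$ is continuous and strictly increasing on $\left(\frac{\al^2}{(N-2j)^2},\infty\right)$, with range $\Ran M_j=\left[j\frac{\pi\sqrt3}{2},\frac N2\frac{\pi\sqrt3}{2}\right)$ as computed above. Hence the equation $M_j(\ome)=m$ admits a unique solution $\ome_j$ exactly when $m$ lies in this range, i.e.\ under hypothesis \eqref{e:co}; the analogous statement for $j+1$ holds under the stronger assumption that \eqref{e:co} is satisfied for $j+1$.

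For the ordering, the key step is the pointwise comparison $M_{j+1}(\ome)<M_j(\ome)$ on the common domain. Starting from \eqref{e:mome2} with $\mu=2$ and performing the change of variable $t\mapsto t/(N-2j)$ (resp.\ $t\mapsto t/(N-2(j+1))$) that normalizes the upper integration limit to $\frac{|\al|}{\sqrt\ome}$ for both indices, one obtains
\[
M_{j+1}(\ome)-M_j(\ome)=-\frac{\sqrt3}{2}\int_0^{\frac{|\al|}{\sqrt\ome}}\left[\left(1-\frac{t^2}{(N-2(j+1))^2}\right)^{-\frac12}-\left(1-\frac{t^2}{(N-2j)^2}\right)^{-\frac12}\right]dt.
\]
Since $N-2(j+1)<N-2j$, the first base is strictly smaller than the second, and because the exponent $-\frac12$ is negative the bracket is strictly positive on $\left(0,\frac{|\al|}{\sqrt\ome}\right]$; therefore $M_{j+1}(\ome)-M_j(\ome)<0$ for every admissible $\ome$.

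To conclude I would combine this with strict monotonicity exactly as in Lem.~\ref{l:freq}: evaluating the comparison at $\ome_{j+1}$ gives $m=M_{j+1}(\ome_{j+1})<M_j(\ome_{j+1})$, while $m=M_j(\ome_j)$, so $M_j(\ome_j)<M_j(\ome_{j+1})$, and the strict monotonicity of $M_j$ forces $\ome_j<\ome_{j+1}$, which is \eqref{e:omeord3}. I do not expect a genuine obstacle here: the only real content is the sign of the integrand, and the critical case is simply the boundary version of the $1<\mu<2$ behaviour, where that sign is governed by the negativity of $\frac1\mu-1$. The only point requiring a little care is the domain bookkeeping, namely that $\ome_{j+1}$, which lies in the more restrictive domain $\ome>\frac{\al^2}{(N-2(j+1))^2}$ of $M_{j+1}$, automatically lies in the domain of $M_j$ as well; this is immediate since $\frac{\al^2}{(N-2j)^2}<\frac{\al^2}{(N-2(j+1))^2}$.
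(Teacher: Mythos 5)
Your proof is correct, but it follows a genuinely different route from the paper's own proof of this lemma. The paper exploits the fact that for $\mu=2$ the mass function can be inverted in closed form: it writes $M_j(\ome)=\frac{\sqrt3}{2}\bigl[-(N-2j)\arcsin\bigl(\frac{|\al|}{(N-2j)\ome^{1/2}}\bigr)+\frac{N\pi}{2}\bigr]$, solves $m=M_j(\ome_j)$ explicitly to get $\ome_j=\frac{|\al|^2}{(N-2j)^2\sin\left(\frac\pi2\frac{N-4m/(\pi\sqrt3)}{N-2j}\right)^2}$, and then proves \eqref{e:omeord3} by a calculus argument: it shows that $f(x)=\frac{|\al|}{(N-2x)\sin\left(\frac\pi2\frac{N-4m/(\pi\sqrt3)}{N-2x}\right)}$ is increasing in $x$, via $f'(x)>0$, which reduces to the elementary inequality $\sin y-y\cos y>0$ for $0<y<\pi/2$; hypothesis \eqref{e:co} is used precisely to guarantee that the argument of the sine stays in $(0,\pi/2)$. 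You instead transplant the mechanism of the subcritical Lemma~\ref{l:freq}: the pointwise comparison $M_{j+1}(\ome)<M_j(\ome)$ on the common domain, obtained from the sign of the integrand difference with exponent $\frac1\mu-1=-\frac12<0$, combined with strict monotonicity of $M_j$ and the (correctly checked) domain inclusion $\frac{\al^2}{(N-2j)^2}<\frac{\al^2}{(N-2(j+1))^2}$. Your change of variables and the resulting difference formula are correct, and the deduction $m=M_{j+1}(\ome_{j+1})<M_j(\ome_{j+1})=M_j(\ome_{j+1})$ versus $m=M_j(\ome_j)$ is exactly the right way to close the argument. What each approach buys: yours is more uniform and conceptually cleaner, exhibiting the critical case as the boundary $\mu\to2$ of the regime $1<\mu<2$ with no trigonometric inequalities needed; the paper's explicit inversion is heavier here but pays off immediately afterwards, since the closed-form $\ome_j$ is reused verbatim in the critical energy-ordering lemma to express $E[\Psi_{\ome_j,j}]$ as an explicit function of $m$. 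One small caveat, which you inherit from the paper rather than introduce: since $\Psi_{\ome,j}$ is only defined for $\ome>\frac{|\al|^2}{(N-2j)^2}$ (open interval), the range of $M_j$ is actually open at its left endpoint, so the borderline mass $m=j\frac{\pi\sqrt3}{2}$ in \eqref{e:co} is not truly attained; this boundary convention issue is present in the paper's statement as well and does not affect the ordering argument.
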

\begin{proof}
 We recall that
 $\omega\in\left(\frac{|\al|^2}{(N-2j)^2},\infty\right)$,  the
 frequency  $\ome_j$ is the solution to the equation $m =
 M_j(\ome_j)$. 
then for each $j$ the equation $M_j(\ome) = m$ has solution if and
only if  $ j\frac{\pi\sqrt3}{2}  \leq m < \frac{N}{2} \frac{\pi
  \sqrt3}{2}   $, which proves the first part of the theorem.   

To prove the second part of the theorem we solve the equation $m = M_j(\ome_j)$  for $\ome_j$ and obtain 
\[
 \ome_j = \frac{|\al|^2}{(N-2j)^2 \sin\left( \frac\pi2 \frac{N-\frac{4m}{\pi\sqrt3 }}{N-2j}\right)^2}\,.
\]
And the ordering \eqref{e:omeord3} is proved by noticing that the function 
\[
 f(x) = \frac{|\al|}{(N-2x) \sin\left( \frac\pi2 \frac{N-\frac{4m}{\pi\sqrt3 }}{N-2x}\right)}
\]
is increasing whenever the argument of the $\sin$ is in
$(0,\pi/2)$. This is our case because of the constraint
\eqref{e:co}, as it is easily seen by taking the derivative with
respect to $x$ 
\[
 f'(x) = \frac{2|\al|}{(N-2x)^2 \sin\left( \frac\pi2 \frac{N-\frac{4m}{\pi\sqrt3 }}{N-2x}\right)^2} 
\left(\sin y - y \cos y\right)\bigg|_{y = \frac\pi2 \frac{N-\frac{4m}{\pi\sqrt3 }}{N-2x}}\,,
\]
then $f'(x) > 0$ by the inequality $\sin y - y \cos y>0$ which holds true for any $0<y<\pi/2$.
\end{proof}

\begin{lemma}[Energy ordering $(\mu=2)$]
 Let $\mu=2$ and assume that \eqref{e:co} is satisfied for $j+1$. Then,
\[
E[\Psi_{\ome_j,j}] < E[\Psi_{\ome_{j+1},j+1}] \,.
\]
\end{lemma}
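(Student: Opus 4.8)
The plan is to exploit the fact that at the critical exponent $\mu=2$ the energy collapses to a single monotone quantity, which in fact makes the critical case \emph{easier} than the subcritical one. I would begin by specializing the energy identity \eqref{e:Ej} --- whose derivation through \eqref{formula1}--\eqref{formula3} is valid for every $\mu>0$ --- to $\mu=2$. There the term carrying $m\ome_j$ is multiplied by the factor $2-\mu=0$ and drops out, leaving
\[
E[\Psi_{\ome_j,j}] = -\f{\sqrt3\,|\al|}{4}\lf(\ome_j - \f{|\al|^2}{(N-2j)^2}\ri)^{\f12}\,.
\]
Since the prefactor is a negative constant independent of $j$, the desired inequality $E[\Psi_{\ome_j,j}] < E[\Psi_{\ome_{j+1},j+1}]$ is equivalent to showing that
\[
A_j := \ome_j - \f{|\al|^2}{(N-2j)^2}
\]
is strictly decreasing in $j$; note $A_j>0$ because $\Psi_{\ome_j,j}$ is defined only for $\ome_j>|\al|^2/(N-2j)^2$.

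The monotonicity of $A_j$ is not immediate from the frequency ordering alone: by Lem. \ref{l:freqcrit} one has $\ome_{j+1}>\ome_j$, but the subtracted term $|\al|^2/(N-2j)^2$ also grows with $j$, so $A_j$ is a difference of two increasing quantities. To resolve this competition I would insert the explicit expression for $\ome_j$ found in the proof of Lem. \ref{l:freqcrit},
\[
\ome_j = \f{|\al|^2}{(N-2j)^2 \sin^2 y_j}\,, \qquad y_j := \f\pi2\,\f{N-\f{4m}{\pi\sqrt3}}{N-2j}\,,
\]
which turns $A_j$ into
\[
A_j = \f{|\al|^2}{(N-2j)^2}\,\f{\cos^2 y_j}{\sin^2 y_j} = \f{|\al|^2}{(N-2j)^2}\cot^2 y_j\,.
\]
Setting $K := N - \f{4m}{\pi\sqrt3}$, so that $\f{1}{N-2j} = \f{2y_j}{\pi K}$, this rewrites as the clean form $A_j = \f{4|\al|^2}{\pi^2 K^2}\,(y_j\cot y_j)^2$.

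It then remains to prove that $(y_j\cot y_j)^2$ is strictly decreasing in $j$. First I note that $K>0$ by the constraint \eqref{e:co}, hence $y_j\in(0,\pi/2)$ and $y_j$ is strictly increasing in $j$ (as $N-2j$ decreases). On $(0,\pi/2)$ the map $y\mapsto y\cot y$ is positive and strictly decreasing: indeed
\[
\f{d}{dy}(y\cot y) = \cot y - y\csc^2 y = \f{\tfrac12\sin 2y - y}{\sin^2 y} < 0\,,
\]
because $\sin 2y < 2y$ for $y>0$. Consequently $(y\cot y)^2$ is strictly decreasing on $(0,\pi/2)$, and composing with the increasing sequence $y_j$ shows $A_j$ is strictly decreasing in $j$, which yields the claim. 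The only genuine work here is the elementary monotonicity of $y\cot y$; the decisive structural simplification is the vanishing of the mass term at $\mu=2$, which reduces the whole statement to the behavior of a single function and avoids the delicate large-$m$ asymptotic expansion needed in Lem. \ref{l:nrgord}.
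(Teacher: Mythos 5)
Your proof is correct and takes essentially the same route as the paper: both specialize the energy at $\mu=2$ to the closed form $E[\Psi_{\ome_j,j}] = -\f{\sqrt3\,|\al|}{4}\lf(\ome_j - \f{|\al|^2}{(N-2j)^2}\ri)^{1/2}$, insert the explicit $\ome_j$ from Lem.~\ref{l:freqcrit}, and conclude by an elementary trigonometric monotonicity. Your reparametrization via $y_j$ (monotonicity of $y\cot y$, i.e.\ $\sin 2y<2y$) is a somewhat cleaner packaging of the paper's direct differentiation in $x$, whose key inequality $(\sin y)^2-1+\f{y}{\tan y}>0$ factors as $\f{\cos y\,(2y-\sin 2y)}{2\sin y}>0$ — the very same fact.
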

\begin{proof}
After some straightforward  calculation one gets the formula
\begin{align*}
 E[\Psi_{\ome_j,j}] = & -\frac{|\al| \sqrt3 }{4}  \left(\ome_j - \frac{ |\al|^2}{(N-2j)^2}\right)^{\frac12}   \\ 
= & - \frac{\sqrt3}{4} \frac{|\al|^2}{(N-2j)} \left( \frac{1}{\sin\left( \frac\pi2 \frac{N-\frac{4m}{\pi\sqrt3 }}{N-2j}\right)^2} -1 \right)^{\frac12}
\end{align*}
where we used the explicit formula for $\ome_j$.  Taking the derivative of the function 
\[
f(x) = 
- \frac{\sqrt3}{4} \frac{|\al|^2}{(N-2x)} \left( \frac{1}{\sin\left( \frac\pi2 \frac{N-\frac{4m}{\pi\sqrt3 }}{N-2x}\right)^2} -1 \right)^{\frac12}
\]
we have that 
\[
f'(x) = 
 \frac{\sqrt3}{4} \frac{|\al|^2}{(N-2x)^2} \frac{2}{\left( \frac{1}{(\sin y )^2} -1 \right)^{\frac12}}
 \frac{1}{(\sin y)^2}
\left((\sin y)^2-1 + \frac{y}{\tan y}\right)\bigg|_{y = \left( \frac\pi2 \frac{N-\frac{4m}{\pi\sqrt3 }}{N-2x}\right)}
\]
and the energy ordering is a consequence of the fact that $f'(x)>0$,
which follows from the inequality $(\sin y)^2 -1 + \frac{y}{\tan y} >
0 $ and is true for any $0<y<1$. 
\end{proof}

\n
This ends the proof of Th. \ref{t:min}.

\newcommand{\etalchar}[1]{$^{#1}$}
\providecommand{\bysame}{\leavevmode\hbox to3em{\hrulefill}\thinspace}
\providecommand{\MR}{\relax\ifhmode\unskip\space\fi MR }
\providecommand{\MRhref}[2]{%
  \href{http://www.ams.org/mathscinet-getitem?mr=#1}{#2}
}
\providecommand{\href}[2]{#2}


\begin{thebibliography}{ACFN12b}

\bibitem[ACFN11]{[ACFN1]}
R.~Adami, C.~Cacciapuoti, D.~Finco, and D.~Noja, \emph{Fast solitons on star
  graphs}, Rev. Math. Phys \textbf{23} (2011), no.~4, 409--451.

\bibitem[ACFN12a]{[ACFN2]}
R.~Adami, C.~Cacciapuoti, D.~Finco, and D.~Noja, \emph{On the structure of
  critical energy levels for the cubic focusing {NLS} on star graphs}, J. Phys.
  A: Math. Theor. \textbf{45} (2012), 192001, 7pp.

\bibitem[ACFN12b]{[ACFN4]}
R.~Adami, C.~Cacciapuoti, D.~Finco, and D.~Noja, \emph{Stationary states of
  {NLS} on star graphs}, EPL \textbf{100} (2012), 10003.

\bibitem[ACFN12c]{[ACFN3]}
R.~Adami, C.~Cacciapuoti, D.~Finco, and D.~Noja, \emph{Variational properties
  and orbital stability of standing waves for {NLS} equation on a star graph},
  35pp, arXiv:1206.5201.

\bibitem[ANV12]{[ANV12]}
R.~Adami, D.~Noja, and N.~Visciglia \emph{Constrained energy minimization and ground states for NLS with point defects}, 35pp, arXiv:1204.6344

\bibitem[BC08]{[BC]}
J.~Bona and R.~C. Cascaval, \emph{Nonlinear dispersive waves on trees}, Can. J.
  App. Math. \textbf{16} (2008), 1--18.

\bibitem[BCFK06]{[BCFK06]}
G.~Berkolaiko, R.~Carlson, S.~Fulling, and P.~Kuchment, \emph{Quantum graphs
  and their applications}, Contemporary Math., vol. 415, American Math.
  Society, Providence, R.I., 2006.

\bibitem[BI11]{[BI11]} 
V.~Banica and L.~Ignat, \emph{Dispersion for the Schr\"odinger equation on networks},  J.Math. Phys. 52 (2011), 083703.

\bibitem[BEH08]{[BEH]}
J.~Blank, P.~Exner, and M.~Havlicek, \emph{Hilbert spaces operators in quantum
  physics}, Springer, New York, 2008.

\bibitem[Bre83]{Bre83}
H.~Brezis, \emph{Analyse fonctionnelle}, Collection Math\'ematiques
  Appliqu\'ees pour la Ma\^\i trise, Masson, Paris, 1983.

\bibitem[Caz03]{Caz03}
T.~Cazenave, \emph{{S}emilinear {S}chr\"{o}dinger {E}quations}, Courant Lecture
  Notes in Mathematics, AMS, vol {\bf 10}, Providence, 2003.

\bibitem[Caz06]{Caz06}
T.~Cazenave, \emph{An introduction to semilinear elliptic equations}, Editora
  do IM-UFRJ, Rio de Janeiro, 2006.

\bibitem[CH10]{[CTH]}
R.~C. Cascaval and C.~T. Hunter, \emph{Linear and nonlinear {S}chr\"odinger
  equations on simple networks}, Libertas Math. \textbf{30} (2010), 85--98.

\bibitem[CL82]{[CL]}
T.~Cazenave and P.~L. Lions, \emph{Orbital stability of standing waves for some
  nonlinear {S}chr\"odinger equations}, Commun. Math. Phys. \textbf{85} (1982),
  549--561.

\bibitem[CM07]{[CMu]}
S.~Cardanobile and D.~Mugnolo, \emph{Analysis of {F}itz{H}ugh-{N}agumo-{R}all
  model of a neuronal network}, Math. Meth. Appl. Sci. \textbf{30} (2007),
  2281--2308.

\bibitem[CMS12]{[CMS]}
F.~Camilli, C.~Marchi, and D.~Schieborn, \emph{The vanishing viscosity limit
  for {H}amilton-{J}acobi equations on networks}, 24pp, arXiv:1207.6535.

\bibitem[EKK{\etalchar{+}}08]{EKKST08}
P.~Exner, J.~P. Keating, P.~Kuchment, T.~Sunada, and A.~Teplyaev,
  \emph{Analysis on graphs and its applications}, Proceedings of Symposia in
  Pure Mathematics, vol.~77, Am. Math. Soc., Providence, RI, 2008.

\bibitem[Exn96]{[Exn]}
P.~Exner, \emph{Weakly coupled states on branching graphs}, Lett. Math. Phys.
  \textbf{38} (1996), 313--320.

\bibitem[GNT04]{[GNT]}
S.~Gustafson, K.~Nakanishi, and T.~Tsai, \emph{Asymptotic stability and
  completeness in the energy space for nonlinear {S}chr\"odinger equations with
  small solitary waves}, Int. Math. Res. Not. \textbf{66} (2004), 3559--3584.

\bibitem[GS07]{[GS]}
Z.~Gang and I.~Sigal, \emph{Relaxation of solitons in nonlinear {S}chr\"odinger
  equations with potentials}, Adv. Math. \textbf{216} (2007), 443--490.

\bibitem[GSD11]{[GSD]}
S.~Gnutzman, U.~Smilansky, and S.~Derevyanko, \emph{Stationary scattering from
  a nonlinear network}, Phys. Rev. A \textbf{83} (2011), 033831.

\bibitem[GSS87a]{[GSS]}
M.~Grillakis, J.~Shatah, and W.~Strauss, \emph{Stability theory of solitary
  waves in the presence of symmetry {I}}, J. Funct. Anal. \textbf{94} (1987),
  308--348.

\bibitem[GSS87b]{[GSS2]}
M.~Grillakis, J.~Shatah, and W.~Strauss, \emph{Stability theory of solitary
  waves in the presence of symmetry {II}}, J. Funct. Anal. \textbf{74} (1987),
  160--197.

\bibitem[KS99]{[KS99]}
V.~Kostrykin and R.~Schrader, \emph{{K}irchhoff's rule for quantum wires}, J.
  Phys. A: Math. Gen. \textbf{32} (1999), no.~4, 595--630.

\bibitem[Kuc04]{[Kuc04]}
P.~Kuchment, \emph{Quantum graphs. {I}. {S}ome basic structures}, Waves Random
  Media \textbf{14} (2004), no.~1, S107--S128.

\bibitem[Kuc05]{[Kuc05]}
P.~Kuchment, \emph{Quantum graphs. {II}. {S}ome spectral properties of quantum
  and combinatorial graphs}, J. Phys. A: Math. Gen. \textbf{38} (2005), no.~22,
  4887--4900.

\bibitem[LL01]{LL01}
E.~H. Lieb and M.~Loss, \emph{Analysis}, second ed., Graduate Studies in
  Mathematics, vol.~14, American Mathematical Society, Providence, RI, 2001.

\bibitem[MMK07]{[Miro]}
A.~E. Miroshnichenko, M.~I. Molina, and Y.~S. Kivshar, \emph{Localized modes
  and bistable scattering in nonlinear network junctions}, Phys. Rev. Lett.
  \textbf{75} (2007), 04602.

\bibitem[MPF91]{MPF91}
D.~S. Mitrinovi\'{c}, J.~E. Pe\v{c}ari\'{c}, and A.~M. Fink, \emph{Inequalities
  involving functions and their integrals and derivatives}, Mathematics and Its
  Applications, vol.~53, Kluwer Academic Publishers, Dordrecht/Boston/London,
  1991.

\bibitem[SMS{\etalchar{+}}10]{[Sob]}
Z.~Sobirov, D.~Matrasulov, K.~Sabirov, S.~Sawada, and K.~Nakamura,
  \emph{Integrable nonlinear {S}chr\"{o}dinger equation on simple networks:
  {C}onnection formula at vertices}, Phys. Rev. E \textbf{81} (2010), 066602.

\bibitem[TF07]{[TF]}
K.~Tintarev and K.-H. Fieseler, \emph{Concentration compactness.
  {F}unctional-{A}nalytic {G}rounds and {A}pplications}, Imperial College
  Press, London, 2007.

\bibitem[TOD08]{[TOD]}
A.~Tokuno, M.~Oshikawa, and E.~Demler, \emph{Dynamics of the one dimensional
  {B}ose liquids: {A}ndreev-like reflection at {Y}-junctions and the absence of
  {A}haronov-{B}ohm effect}, Phys. Rev. Lett. \textbf{100} (2008), 140402.

\bibitem[Wei86]{[W1]}
M.~Weinstein, \emph{Lyapunov stability of ground states of nonlinear dispersive
  evolution equations}, Comm. Pure. Appl. Math \textbf{39} (1986), 51--68.

\end{thebibliography}

\end{document}